\DeclareMathOperator*{\Ber}{Ber}
\DeclareMathOperator*{\argmin}{argmin}
\DeclareMathOperator*{\expectation}{\mathbb{E}}
\let\poly\relax
\DeclareMathOperator*{\poly}{poly}
\DeclareMathOperator*{\probability}{\mathbb{P}}
\renewcommand\R{\mathbb{R}}
\newcommand\Z{\mathbb{Z}}
\newcommand{\expect}{\expectation\expectarg}
\DeclarePairedDelimiterX{\expectarg}[1]{[}{]}{%
	\ifnum\currentgrouptype=16 \else\begingroup\fi
	\activatebar#1
	\ifnum\currentgrouptype=16 \else\endgroup\fi
}
\newcommand{\expectover}[1]{\expectation_{#1}\expectarg}
\newcommand{\innermid}{\nonscript\;\delimsize\vert\nonscript\;}
\newcommand{\activatebar}{%
	\begingroup\lccode`\~=`\|
	\lowercase{\endgroup\let~}\innermid 
	\mathcode`|=\string"8000
}
\newcommand\prob[1]{\probability\left( #1 \right)}
\newcommand\probarg[2]{\probability_{#2}\left( #1 \right)}
\newcommand\opt{\textsc{Opt}\xspace}
\newcommand\copt{c(\textsc{Opt})\xspace}
\newcommand\lpopt{\textsc{LP}_{\textsc{Opt}}}
\newcommand\alg{\textsc{Alg}\xspace}
\newcommand{\ForEach}{\textbf{for each}\xspace}
\newcommand\fmin{f_{\min}}
\newcommand\lastt{{t-1}}
\newcommand\thist{{t}}
\newcommand\dthrsh{\gamma}
\newcommand\KL[2]{\textsc{KL}\left(#1 \mid \mid #2\right)}
\newcommand\wKL[2]{\textsc{KL}_c\left(#1 \mid \mid #2\right)}
\newcommand\setcov{\textsc{SetCover}\xspace}
\newcommand\subcov{\textsc{SubmodularCover}\xspace}
\newcommand\osetcov{\textsc{OnlineSetCover}\xspace}
\newcommand\roosc{\textsc{ROSetCover}\xspace}
\newcommand\roocip{\textsc{ROCIP}\xspace}
\newcommand\streamsetcov{\textsc{StreamingSetCover}\xspace}
\newcommand\rosubc{\textsc{ROSubmodularCover}\xspace}
\renewcommand\SAT{\textsc{SAT}\xspace}
\theoremstyle{plain}
\newtheorem{theorem}{Theorem}[section]
\newtheorem{lemma}[theorem]{Lemma}
\newtheorem{claim}[theorem]{Claim}
\newtheorem{corollary}[theorem]{Corollary}
\newtheorem{invariant}{Invariant}
\newlength{\continueindent}
\newcommand*{\ALG@customparshape}{\parshape 2 \leftmargin \linewidth \dimexpr\ALG@tlm+\continueindent\relax \dimexpr\linewidth+\leftmargin-\ALG@tlm-\continueindent\relax}
\apptocmd{\ALG@beginblock}{\ALG@customparshape}{}{\errmessage{failed to patch}}
\def\thm@space@setup{%
	\thm@preskip=\parskip \thm@postskip=0pt
}
\newcommand{\ALGtikzmarkcolor}{black}
\newcommand{\ALGtikzmarkextraindent}{4pt}
\newcommand{\ALGtikzmarkverticaloffsetstart}{-.5ex}
\newcommand{\ALGtikzmarkverticaloffsetend}{-.5ex}
\newcounter{ALG@tikzmark@tempcnta}
\newcommand\ALG@tikzmark@start{%
	\global\let\ALG@tikzmark@last\ALG@tikzmark@starttext%
	\expandafter\edef\csname ALG@tikzmark@\theALG@nested\endcsname{\theALG@tikzmark@tempcnta}%
	\tikzmark{ALG@tikzmark@start@\csname ALG@tikzmark@\theALG@nested\endcsname}%
	\addtocounter{ALG@tikzmark@tempcnta}{1}%
}
\def\ALG@tikzmark@starttext{start}
\newcommand\ALG@tikzmark@end{%
	\ifx\ALG@tikzmark@last\ALG@tikzmark@starttext
	\else
	\tikzmark{ALG@tikzmark@end@\csname ALG@tikzmark@\theALG@nested\endcsname}%
	\tikz[overlay,remember picture] \draw[\ALGtikzmarkcolor] let \p{S}=($(pic cs:ALG@tikzmark@start@\csname ALG@tikzmark@\theALG@nested\endcsname)+(\ALGtikzmarkextraindent,\ALGtikzmarkverticaloffsetstart)$), \p{E}=($(pic cs:ALG@tikzmark@end@\csname ALG@tikzmark@\theALG@nested\endcsname)+(\ALGtikzmarkextraindent,\ALGtikzmarkverticaloffsetend)$) in (\x{S},\y{S})--(\x{S},\y{E});%
	\fi
	\gdef\ALG@tikzmark@last{end}%
}
\apptocmd{\ALG@beginblock}{\ALG@tikzmark@start}{}{\errmessage{failed to patch}}
\pretocmd{\ALG@endblock}{\ALG@tikzmark@end}{}{\errmessage{failed to patch}}
\title{Random Order Set Cover is as Easy as Offline}
\author{Anupam Gupta\thanks{Computer Science Department, Carnegie Mellon
		University, Pittsburgh, PA 15213. Emails:
		\texttt{\{anupamg,roiel\}@cs.cmu.edu}. Supported in part by NSF awards CCF-1907820, CCF1955785, and CCF-2006953.}
	\and
	Gregory Kehne\thanks{School of Engineering and Applied Sciences, Harvard
		University, Boston, MA 02138. Email:
		\texttt{gkehne@g.harvard.edu}. }
	\and
	Roie Levin$^{*}$
}
\date{}
\begin{document}

	\maketitle
	
	\begin{abstract}
	    We give a polynomial-time algorithm for \osetcov with a competitive ratio of $O(\log mn)$ when the elements are revealed in random order, essentially matching the best possible offline bound of $O(\log n)$ and circumventing the $\Omega(\log m \log n)$ lower bound known in adversarial order. We also extend the result to solving pure covering IPs when constraints arrive in random order.
	    
	    The algorithm is a multiplicative-weights-based round-and-solve approach we call \textsc{LearnOrCover}. We maintain a coarse fractional solution that is neither feasible nor monotone increasing, but can nevertheless be rounded online to achieve the claimed guarantee (in the random order model). This gives a new offline algorithm for \setcov that performs a single pass through the elements, which may be of independent interest.
	\end{abstract}

	\section{Introduction}
	
	In the \setcov problem we are given a set system $(U,\mathcal{S})$ (where $U$ is a ground set of size $n$ and $\mathcal{S}$ is a collection of subsets with $|\mathcal{S}| = m$), along with a cost function $c:
        \mathcal{S}\rightarrow \R^+$. The goal is to select a minimum
        cost subcollection $\mathcal{S}' \subseteq \mathcal{S}$ such
        that the union of the sets in $\mathcal{S}'$ is $U$. Many
        algorithms have been discovered for this problem that achieve
        an approximation ratio of $\ln n$ (see
        e.g. \cite{chvatal1979greedy,johnson1974approximation,lovasz1975ratio,williamson2011design}),
        and this is best possible unless $\P = \NP$
        \cite{feige1998threshold,Dinur:2014:AAP:2591796.2591884}.
	
	In the \osetcov variant, we impose the additional restriction that the algorithm does not know $U$ initially, nor the contents of each $S \in \mathcal{S}$. Instead, an adversary reveals the elements of $U$ one-by-one in an arbitrary order. On the arrival of every element, it is revealed which sets of $S \in \mathcal{S}$ contain the element, and the algorithm must immediately pick one such set $S \in \mathcal{S}$ to cover it. The goal is to minimize the total cost of the sets chosen by the algorithm. In their seminal work, \cite{alon2003online} show that despite the lack of foresight, it is possible to achieve a competitive ratio of $O(\log m \log n)$ for this version\footnote{Throughout this paper we consider the \emph{unknown-instance model} for online set cover (see Chapter 1 of \cite{korman2004use}). The result of \cite{alon2003online} was presented for the \emph{known-instance model}, but extends to the unknown setting as well. This was made explicit in subsequent work \cite{buchbinder2009online}.}. This result has since been shown to be tight unless $\NP \subseteq \BPP$ \cite{korman2004use}, and has also been generalized significantly (e.g. \cite{alon2006general,buchbinder2009online,gupta2014approximating,gupta2020online}).
	
	In this paper, we answer the question:
	\begin{quote}
		\emph{What is the best competitive ratio possible for \osetcov when the adversary is constrained to reveal the elements in uniformly random order (RO)?}
	\end{quote}

	We call this version \roosc. 
	Note that the element set $U$ is still adversarially chosen and unknown, and only the arrival order is random. 
	
	\subsection{Results}
	
    We show that with only this one additional assumption on the element arrival order, there is an efficient algorithm for \roosc with expected competitive ratio matching the best-possible offline approximation guarantee (at least in the regime where $m = \poly(n)$).
	
	\begin{theorem}
          \nameref{alg:gencost} is a polynomial-time randomized algorithm for \roosc achieving expected competitive ratio $O(\log (mn))$.
	\end{theorem}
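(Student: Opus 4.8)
The plan is to analyze the \textsc{LearnOrCover} algorithm via a potential-function (entropy) argument tailored to the random-order model. The natural potential is a KL-divergence $\Phi_t = \wKL{x^*}{x^t}$ between a fixed optimal (fractional) solution $x^*$ and the algorithm's current coarse fractional vector $x^t$, weighted by the costs $c$; the algorithm maintains $x^t$ via a multiplicative-weights update driven by the arriving elements. First I would set up the two regimes that give the algorithm its name: at each step, conditioned on the history, a uniformly random uncovered element arrives, and either (i) the current $x^t$ already covers a large fraction of the remaining uncovered elements (the ``cover'' case), in which case the rounding step pays little in expectation because few rounds of sampling sets from $x^t$ suffice to knock out the arriving element, or (ii) $x^t$ fails to cover a constant fraction of the uncovered elements (the ``learn'' case), in which case the multiplicative-weights update triggered by the arriving element makes substantial progress — i.e. $\Phi_t - \Phi_{t+1}$ is large in expectation because a random uncovered element is, with good probability, one on which $x^t$ and $x^*$ disagree a lot.

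The key steps, in order: (1) State the update rule precisely and record that $x^0$ is the uniform-ish starting point so that $\Phi_0 = O(\copt \log m)$ (after the appropriate normalization), and that $\Phi_t \ge 0$ always. (2) In the learn case, prove a per-step drop lemma: $\expectation[\Phi_t - \Phi_{t+1} \mid \mathcal{F}_{t-1}] \ge \Omega(\copt \cdot (\text{local progress}))$, using convexity of the KL potential and the fact that the expectation is over a \emph{uniformly random uncovered element}, which is exactly what lets us relate the drop to the ``fraction of uncovered elements not covered by $x^t$''. This is where the random order is essential — against an adversary that always reveals the worst element, the update could be wasted. (3) In the cover case, bound the expected rounding cost of the arriving element by $O(\copt / (\text{number of uncovered elements}))$ per step, so that summing a harmonic-type series over the $n$ element arrivals contributes $O(\copt \log n)$ total. (4) Combine: the number of learn steps is bounded because each one decreases $\Phi$ by a definite amount and $\Phi$ started at $O(\copt \log m)$ and stays nonnegative, so the total learn-case rounding cost is $O(\copt \log m \cdot \log n)$ — wait, this must be sharpened; the actual accounting charges rounding cost against potential drop \emph{simultaneously}, giving a combined bound of $O(\copt \log(mn))$ rather than a product. (5) Finally, take expectations over the random permutation throughout, handle the filtration carefully (each element's arrival position is exchangeable), and apply a stopping-time / optional-stopping argument to conclude the expected total cost is $O(\copt \log(mn))$; polynomial running time is immediate since each step does one MW update and $O(\log n)$ sampling trials.

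The main obstacle I expect is step (4)–(5): getting an \emph{additive} $O(\log mn)$ rather than the naive multiplicative $O(\log m \log n)$. This requires a clean amortized argument in which the rounding cost incurred at each step is charged partly to the KL-potential decrease and partly to the harmonic ``coupon-collector'' term for covering the ground set, and one must show these two budgets do not compound. The delicate point is that in the learn case we pay rounding cost \emph{and} make potential progress, while in the cover case we pay rounding cost but the uncovered set shrinks; a single inequality of the form $\expectation[\text{cost}_t + (\Phi_{t+1}-\Phi_t)] \le O\!\left(\frac{\copt}{n - t + 1}\right) + (\text{small slack})$, summed over $t$, should yield the bound, but making the case split consistent with one such inequality — and verifying the constants in the KL-drop lemma survive the random-order conditioning — is the technical heart. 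A secondary subtlety is that $x^t$ is neither feasible nor monotone, so standard online-covering rounding (which assumes a monotone feasible fractional solution) does not apply off the shelf; the rounding must be re-proven to work for this ``coarse'' evolving solution, using the fact that over the random order the solution is, in an averaged sense, good enough on the yet-unseen elements.
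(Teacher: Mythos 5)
Your overall framework (a weighted KL potential $\wKL{x^*}{x^{\thist}}$ driven by MW updates, with the random order used to relate the expected KL drop to how well $x^{\thist}$ covers a \emph{uniformly random uncovered} element) matches the paper's, and you correctly identify the crux: the cost must be charged against potential drop in a single per-round inequality, not via a learn-round-count times cover-round-count product. But the concrete mechanism you propose for the ``cover'' case has a genuine gap. Your step (3) claims the expected per-step cost in cover rounds is $O(\copt/\lvert U^{\lastt}\rvert)$, and your master inequality allows slack $O(\copt/(n-t+1))$. Neither is valid: when an uncovered element arrives the algorithm pays $\Theta(\kappa_{v^\thist})$ (sampling at rate $\kappa_{v^\thist}x^{\lastt}/\beta$ costs $\kappa_{v^\thist}$ in expectation, plus the backup set), so the unconditional expected cost of round $t$ is $\Theta\bigl(\rho^{\lastt}/(n-t+1)\bigr)$ where $\rho^{\lastt}=\sum_{u\in U^{\lastt}}\kappa_u$, and this exceeds your allowed slack whenever $\rho^{\lastt}\gg\beta$ --- which is the typical situation for most of the execution. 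Worse, in exactly this cover case (when $X_{v^\thist}\ge 1$) the algorithm makes \emph{no} MW update, so the KL potential does not move at all and cannot pay for this cost. The benefit of a cover round is not that it is cheap, nor that it helps cover the arriving element (that is always handled by the cheapest backup set); it is that the sampled sets shrink the \emph{future} uncovered mass, and your KL-only potential has no term that records this.

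The paper's fix is to add a second potential term: $\Phi(\thist)=C_1\,\wKL{x^*}{x^{\thist}}+C_2\,\beta\log(\rho^{\thist}/\beta)$. One lemma bounds the expected KL change by $\expectover{v\sim U^{\lastt}}{(e-1)\kappa_v\min(X_v^{\lastt},1)-\kappa_v}$, another bounds the expected change of $\log\rho^{\thist}$ by $-\frac{1-e^{-1}}{\beta}\expectover{u\sim U^{\lastt}}{\kappa_u\min(X_u^{\lastt},1)}$; choosing $C_1,C_2$ so the $\kappa\min(X,1)$ terms cancel gives $\expectation[\Phi(\thist)-\Phi(\lastt)+\mathrm{cost}_\thist\mid\text{history}]\le 0$ with \emph{no case split at all} --- the learn/cover dichotomy is only intuition, and both lemmas hold simultaneously in every round. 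The two initial-value contributions $\beta\log m$ (KL) and $\beta\log n$ (from $\rho^0\le n\beta$) are what produce the additive $O(\beta\log(mn))$, and a supermartingale/stopping argument plus the observation that $\Phi<0$ forces $\rho^{\thist}\le\beta$ (so the residual cost is $O(\beta)$) finishes the proof. So your plan needs this second potential term (or an equivalent geometric-decay accounting of the weighted uncovered mass, as in the warm-up exponential-time analysis); the harmonic series over arrival positions you propose cannot substitute for it.
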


	When run offline, our approach gives a new asymptotically optimal algorithm for \setcov for $m = \poly(n)$, which may be of independent interest. Indeed, given an estimate for the optimal \emph{value} of the set cover, our algorithm makes a single pass over the elements (considered in random order), updating a fractional solution using a multiplicative-weights framework, and sampling sets as it goes. This simplicity, and the fact that it uses only $\tilde O(m)$ bits of memory, may make the algorithm useful for some low-space streaming applications.  (Note that previous formulations of \streamsetcov \cite{saha2009maximum,demaine2014streaming,har2016towards,emek2016semi} only consider cases where \emph{sets} arrive in a stream.)

	We show next that a suitable generalization of the same algorithm achieves the same competitive ratio for the \emph{RO Covering Integer Program} problem (\roocip) (see \cref{sec:cips} for a formal description).
	\begin{theorem}
		\label{thm:intro_cip}
		\nameref{alg:cips} is a polynomial-time randomized algorithm for \roocip achieving competitive ratio $O(\log (mn))$.
	\end{theorem}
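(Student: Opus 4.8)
The plan is to lift the \textsc{LearnOrCover} analysis from \roosc to covering integer programs, replacing ``the set picked to cover an arriving element'' by ``the integer increment to the solution vector $x$ needed to satisfy an arriving constraint'', and replacing ``an element is fractionally covered'' by ``an arriving constraint row receives enough fractional mass''. First I would put the instance into normal form: scale each constraint so that the matrix $A$ has entries in $[0,1]$ and every right-hand side is $1$; absorb variable upper bounds either by duplicating columns or, better, by always working with the \emph{residual} (Knapsack-Cover) version of each arriving constraint relative to the already-committed integral solution; and guess $\copt$ up to a factor of two, running geometrically many guesses in parallel, or restarting and resetting the fractional solution whenever a guess is exceeded, at a cost of only a constant factor. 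Fix an optimal integral solution $x^\star$ and treat $\widehat x^\star := x^\star / \copt$ as a cost-weighted sub-distribution over columns.

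Next, the generalized algorithm maintains a fractional vector $x^{(t)}$, initialized close to uniform. When constraint $t$, say $a^\top x \ge 1$, arrives, it branches on whether $\sum_j a_j x^{(t)}_j$ exceeds a threshold $\lambda = \Theta(1/\log(mn))$. In the \emph{cover} case ($\sum_j a_j x^{(t)}_j \ge \lambda$) it performs independent randomized rounding of an $O(\log(mn)/\lambda)$-boosted copy of $x^{(t)}$ restricted to $\mathrm{supp}(a)$, adding enough columns to satisfy the residual constraint with high probability; the expected cost incurred is $O\!\left(\log(mn)\cdot \sum_j c_j x^{(t)}_j / \lambda\right)$, i.e., proportional to the fractional cost-mass the current solution already devotes to this row. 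In the \emph{learn} case the row is badly under-covered, and the algorithm takes a multiplicative-weights step $x^{(t)}_j \leftarrow x^{(t)}_j \exp(\eta\, a_j)$ for $j \in \mathrm{supp}(a)$, followed by clipping to respect residual upper bounds; since $x^\star$ satisfies the row, a standard mirror-descent computation shows the cost-weighted potential $\Phi_t := \wKL{\widehat x^\star}{x^{(t)}}$ drops by $\Omega(1)$.

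The heart of the argument is to trade cost against potential using the \emph{random} arrival order. Conditioned on the set $R_t$ of not-yet-arrived constraints, row $t$ is uniform in $R_t$; I would show that if many rows of $R_t$ lie in the learn regime then with constant probability we are in the learn case and $\Phi$ decreases, whereas if few do then the total fractional cover deficit over $R_t$ is small, so the expected cover cost summed over the rest of the stream is only $O(\copt)$ --- and this is exactly where randomness is indispensable, since an adversary could schedule the few learn rows last. Quantifying this yields, after the right accounting, a per-step bound of the shape $\expect{\mathrm{cost}_t \mid \mathcal{F}_{t-1}} \le O(1)\cdot \copt \cdot \expect{\Phi_{t-1} - \Phi_t \mid \mathcal{F}_{t-1}} + (\text{small additive slack})$; telescoping over all $T \le \poly(mn)$ rounds together with $\Phi_0 = \wKL{\widehat x^\star}{x^{(0)}} = O(\log(mn))\cdot \copt$ (uniform start, and cost-mass of $\widehat x^\star$ at most $1$) gives total expected cost $O(\log(mn))\cdot \copt$, which is \cref{thm:intro_cip}.

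I expect the main obstacle to be the CIP-specific gap between the LP relaxation and the integer optimum: large coefficients and upper bounds mean the naive relaxation has unbounded integrality gap, so both the rounding in the cover step and the potential must be taken relative to Knapsack-Cover-strengthened residual constraints, and one must check that the multiplicative update, the clipping, and the comparison vector $\widehat x^\star$ all stay mutually consistent as columns and coefficients are revealed online. A closely related difficulty is calibrating the boosting factor and $\lambda$ so that \emph{every} arriving row is satisfied with high probability --- via a union bound over the $\le mn$ rows, or a per-row martingale tail bound --- while the expected cost stays within the $O(\log(mn))$ budget; extracting exactly one logarithmic factor and no more is the delicate calculation. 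Handling the unknown-instance aspect, so that the normalization and the doubling guess for $\copt$ remain valid in the online/streaming setting, is a further but more routine technicality.
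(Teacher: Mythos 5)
Your proposal diverges from the paper's argument in a way that leaves a real gap at its center. The paper's \textsc{LearnOrCoverCIP} does \emph{not} branch into a ``cover case'' (boosted randomized rounding to satisfy the arriving row w.h.p.) and a ``learn case'' (MW step); in every round whose constraint is still uncovered beyond the threshold it both samples columns proportionally to the current $x$ (scaled by $\kappa_i^{\thist}/\beta$) \emph{and}, when the row is fractionally under-covered, performs the multiplicative update, and it deterministically fixes the arriving row with a cheapest-ratio backup column costing $O(\kappa_i^{\thist})$. The analysis then runs through a two-term potential $C_1\cdot\wKL{x^*}{x^{\thist}}+C_2\cdot\beta\log(\rho^{\thist}/\beta)$ with $\rho^{\thist}=\sum_{i\in U^{\thist}}\kappa_i^{\thist}$: the crucial point, which your sketch has no substitute for, is that sampling from $x$ makes progress \emph{globally} on the deficit mass $\rho$ of all unseen constraints, in expectation over the random order, and this progress is what pays for the covering cost. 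Your cover step only targets the single arriving row and boosts by $O(\log(mn)/\lambda)$ to satisfy it w.h.p.; that is exactly the rounding paradigm that costs an extra logarithm (\`a la adversarial-order algorithms), and your claim that ``few learn-regime rows $\Rightarrow$ remaining cover cost $O(\copt)$'' does not follow: coverage $\geq\lambda=\Theta(1/\log(mn))$ on the remaining rows gives no bound better than $O(\beta\log(mn)/\lambda)$ for boosted rounding, which does not telescope to $O(\beta\log(mn))$.

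Two further concrete problems. First, your learn case claims a per-step potential drop of $\Omega(1)$ but never says how the arriving constraint is satisfied in that case, and with general costs an $\Omega(1)$ drop only bounds the \emph{number} of learn steps, not their cost; the paper instead calibrates both the learning rate and the sampling rate by $\kappa_i^{\thist}$ so that the expected potential decrease is $\Omega(\kappa_i^{\thist})$, matching the $O(\kappa_i^{\thist})$ spent per round (this also requires the partial-coverage device $\Delta_i^{\thist}>\dthrsh$, the final scaling by $3$, and a contention-resolution-type lemma showing $\expectation[\min(W,\Delta)]\geq\alpha\min(\expectation[W],\Delta)$ for the weighted Bernoulli coverage $W$ --- none of which your sketch replaces). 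Second, \roocip as defined has no box constraints, the comparison point is the LP optimum $\lpopt$ (not the scaled integral optimum), and the paper explicitly leaves covering IPs \emph{with} box constraints open; your plan to ``absorb variable upper bounds'' via Knapsack-Cover residuals is asserted without an argument and is precisely the part of the problem this technique is not known to handle.
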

	
	We complement our main theorem with some lower bounds. For instance, we show that the algorithms of \cite{alon2003online,buchbinder2009online} have a performance of $\Theta(\log m \log n)$ even in RO, so a new algorithm is indeed needed. Moreover, we observe an $\Omega(\log n)$ lower bound on \emph{fractional} algorithms for \roosc. This means we cannot pursue a two-phase strategy of maintaining a good monotone fractional solution and then randomly rounding it (as was done in prior works) without losing $\Omega(\log^2 n)$. Interestingly, our algorithm \emph{does} maintain a (non-monotone) fractional solution while rounding it online, but does so in a way that avoids extra losses. We hope that our approach will be useful in other works for online problems in RO settings. (We also give other lower bounds for batched versions of the problem, and for the more general submodular cover problem.)

	\subsection{Techniques and Overview}
	
	The core contribution of this work is demonstrating that one can exploit randomness in the arrival order to \textit{learn} about the underlying set system. 
	What is more, this learning can be done fast enough (in terms
        of both sample and computational complexity) to build an
        $O(\log mn)$-competitive solution, even while committing to
        covering incoming elements immediately upon arrival. This
        seems like an idea with applications to other
        sequential decision-making problems, particularly in the RO
        setting.
        
        We start in \cref{sec:exptime} with an exponential time algorithm for the unit-cost setting. 
	This algorithm maintains a portfolio of all exponentially-many collections of cost $c(\opt)$ that are feasible for the elements observed so far. 
	When an uncovered element arrives, the algorithm takes a
        random collection from the portfolio, and picks a random set
        from it covering the element. It then prunes the portfolio to drop collections that did not cover the incoming element. 
	We show that either the expected marginal coverage of the chosen set is large, or the expected number of solutions removed from the portfolio is large. 
	I.e., we either make progress \textit{covering}, or \textit{learning}. 
	We show that within $c(\opt) \log (mn)$ rounds, the portfolio
        only contains the true optimal feasible solutions, or all
        unseen elements are covered.
        (One insight that comes from our result is that a good measure of set quality is the number of times an unseen \textit{and uncovered} element appears in it.)
	
        We then give a polynomial-time algorithm in \cref{sec:gen_cost}:
        while it is quite different from the exponential scheme above,
        it is also based on this insight, and the intuition that our algorithm should make
        progress via  learning or covering. 
	Specifically, we maintain a distribution $\{ x_S \}_{S \in
          \mathcal{S}}$ on sets: for each arriving uncovered element,
        we first sample from this distribution $x$, then update $x$ via a multiplicative weights rule. 
	If the element remains uncovered, we buy the cheapest set covering it. 
	For the analysis, we introduce a potential function which
        simultaneously measures the convergence of this distribution
        to the optimal fractional solution, and the progress towards
        covering the universe. 
	Crucially, this progress is measured in expectation over the
        random order, thereby circumventing 
        lower bounds for the adversarial-order setting
        \cite{korman2004use}.
        In \cref{sec:cips} we extend our method to the more general
        \roocip problem: the intuitions and general proof outlines are
        similar, but we need to extend the algorithm to handle elements
        being partially covered. 
	
        Finally, we present lower bounds in \cref{sec:lbs}.
		Our information-theoretic lower bounds for \roosc follow from elementary combinatorial arguments. We also show a lower bound for a batched version of \roosc, following the hardness proof of \cite{korman2004use}; we use it in turn to derive lower bounds for \rosubc.

	\subsection{Other Related Work}
	
	There has been much recent interest in algorithms for random order online problems.
	Starting from the secretary problem, 
	the RO model has been extended to  include metric facility
        location \cite{meyerson2001online}, network design
        \cite{meyerson2001designing}, and solving packing LPs
        \cite{AWY14,MR12,kesselheim2014primal,GM16,AgrawalD15,albers2021improved},
        load-balancing \cite{GM16,Molinaro-SODA17} and scheduling
        \cite{albers2020scheduling,AJ21}.
	See \cite{gupta2020random} for a recent survey.
	
	Our work is closely related to \cite{grandoni2008set}, who
        give an $O(\log mn)$-competitive algorithm 
        a related stochastic model, where the elements are chosen
        i.i.d.\ from a \emph{known distribution} over the elements. 
	\cite{dehghani2018greedy} 
        generalize the result of
        \cite{grandoni2008set} to the \textit{prophet
          version}, in which elements are drawn from known but
        distinct distributions. 
	Our work is a substantial strengthening, since the RO model
        captures the \emph{unknown} i.i.d.\ setting as a special
        case. Moreover, the learning is an important and interesting
        part of our technical contribution.  
	On the flip side, the algorithm of \cite{grandoni2008set} satisfies
        \textit{universality} (see their work for a definition), which our algorithm does not. 
        We point out that \cite{grandoni2008set} claim (in a note, without proof) that it is not possible to circumvent the $\Omega (\log m \log n)$ lower bound of \cite{korman2004use} even in RO; our results show this claim is incorrect. 
	We discuss this in \cref{sec:ggl}.

        Regret bounds for online learning are also proven via the KL
        divergence (see, e.g.,
        \cite{arora2012multiplicative}). However, 
        no reductions are known from our problem to the classic MW
        setting, and it is unclear how random order would play a role
        in the analysis: this is necessary to bypass the adversarial order lower bounds. 

        Finally, \cite{korman2004use} gives an algorithm with
        competitive ratio $k \log (m/k)$---hence total cost $k^2
        \log (m/k)$---for unweighted \osetcov where $k = |\opt|$. The
        algorithm is the same as our exponential time algorithm in
        \cref{sec:exptime} for the special case of $k=1$; however, we
        outperform it for non-constant $k$, and generalize it to get
        polynomial-time algorithms as well.
	
	\subsection{Preliminaries}
	
	All logarithms in this paper are taken to be base $e$. In the following definitions, let $x,y \in \R^n_+$ be vectors. 
        The standard dot product between $x$ and $y$ is denoted
	$\langle x, y\rangle = \sum_{i=1}^n x_i y_i$. We use a weighted generalization of KL divergence. Given
        a weight function $c$, define
	\begin{align}\wKL{x}{y} : = \sum_{i=1}^n c_i \left[x_i \log \left(\frac{x_i}{y_i}\right) -x_i + y_i\right].\label{eq:prelim_kl}\end{align}
	The weighted KL divergence is a nonnegative quantity, which can be demonstrated via the inequality $a - b \leq a \log \frac{a}{b}$ for $a, b \geq 0$, also known as the ``Poor man's Pinsker'' inequality.

	\section{Warmup: An Exponential Time Algorithm for Unit Costs}
	\label{sec:exptime}

	We begin with an exponential-time algorithm which we call \textsc{SimpleLearnOrCover} to demonstrate some core ideas of our result. In what follows we assume that we know a number $k \in [\copt, 2\cdot \copt]$. This assumption is easily removed by a standard guess and double procedure, at the cost of an additional factor of $2$.
	
	The algorithm is as follows. Maintain a list $\mathfrak{T} \subseteq \binom{\mathcal{S}}{k}$ of candidate $k$-tuples of sets. When an uncovered element $v$ arrives, choose a $k$-tuple $\mathcal{T} = (T_1, \ldots, T_k)$ uniformly at random from $\mathfrak{T}$, and buy a uniformly random $T$ from $\mathcal{T}$. Also buy an arbitrary set containing $v$. Finally, discard from $\mathfrak{T}$ any $k$-tuples that do not cover $v$. See \cref{sec:appendix2} for pseudocode.
	
	\begin{theorem}
	    \label{thm:unit_cost_exp}
	    \nameref{alg:expunit} is a randomized algorithm for unit-cost \roosc with expected cost $O(k  \cdot \log (mn))$.
	\end{theorem}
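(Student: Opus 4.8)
The plan is to reduce the theorem to a potential-function argument in the random order. First observe that the algorithm buys at most two sets on each arrival of an \emph{uncovered} element (one random set drawn from a random surviving tuple, plus an arbitrary set covering the element) and buys nothing on a covered arrival; so it suffices to prove that the expected number of uncovered arrivals is $O(k\log mn)$. Fix a minimum-cost solution and, using $k\ge\copt=|\opt|$, pad it to a $k$-tuple $\mathcal O\in\binom{\mathcal S}{k}$; since $\mathcal O$ covers all of $U$ it covers every arriving element and is therefore never discarded, so the portfolio $\mathfrak T$ stays nonempty throughout.

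Reveal the elements in the random order $v_1,\dots,v_n$ and, after step $i$, write $\mathfrak T_i$ for the portfolio, $C_i$ for the covered elements, $R_i=U\setminus\{v_1,\dots,v_i\}$ for the future elements, $X_i=|R_i\setminus C_i|$ for the number of future-and-uncovered elements, and $N_i$ for the number of uncovered arrivals so far. I would take the potential
\[
\Phi_i \;:=\; \log|\mathfrak T_i| \;+\; k\cdot\log(X_i+1),
\]
which satisfies $\Phi_i\ge 0$ (as $\mathcal O\in\mathfrak T_i$) and $\Phi_0\le\log\binom{m}{k}+k\log(n+1)=O(k\log mn)$. The heart of the argument is the one-step inequality: conditioned on the history through step $i$,
\[
\mathbb E\big[\Phi_{i+1}-\Phi_i\big] \;\le\; -\tfrac12\,\Pr\big[v_{i+1}\text{ is uncovered}\big].
\]
Given this, $\Phi_i+\tfrac12 N_i$ is a supermartingale, so $\tfrac12\,\mathbb E[N_n]\le\Phi_0-\mathbb E[\Phi_n]\le\Phi_0$, i.e.\ $\mathbb E[N_n]=O(k\log mn)$, and the theorem follows.

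To prove the one-step inequality I would condition further on which element $v_{i+1}$ arrives: covered arrivals leave $\Phi$ unchanged, so it suffices to show that averaged over a uniformly random uncovered $v\in R_i\setminus C_i$ and the algorithm's coins, the expected change of $\Phi$ is at most $-\tfrac12$. Let $g(\mathcal T):=|(\bigcup\mathcal T)\cap(R_i\setminus C_i)|$, $\bar g:=\mathbb E_{\mathcal T\sim\mathfrak T_i}[g(\mathcal T)]$, and $\beta:=\bar g/X_i\in(0,1]$ (positive since $g(\mathcal O)=X_i\ge1$). For the $\log|\mathfrak T|$ term: after $v$ arrives the survivors are exactly the tuples whose union contains $v$, so this term changes by $\log p_v$ with $p_v$ their fraction; double counting gives $\mathbb E_v[p_v]=\bar g/X_i=\beta$, so by concavity of $\log$ its average change is $\le\log\beta$. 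For the $k\log(X_i+1)$ term: the bought set $T$ is a uniformly random set of a uniformly random tuple of $\mathfrak T_i$, so a union/averaging bound gives $\mathbb E[\,|T\cap(R_i\setminus C_i)|\,]\ge\bar g/k$; since the arrival of $v$ plus the purchase of $T$ deletes at least $|T\cap(R_i\setminus C_i)|$ elements from $R\setminus C$, we get $\mathbb E[X_{i+1}]\le X_i-\bar g/k$, whence by concavity of $\log$ and $X_i/(X_i+1)\ge\tfrac12$,
\[
k\big(\mathbb E[\log(X_{i+1}+1)]-\log(X_i+1)\big)\;\le\;k\log\!\big(1-\tfrac{\beta}{2k}\big)\;\le\;-\tfrac{\beta}{2}.
\]
Adding, the expected change of $\Phi$ is $\le\log\beta-\tfrac{\beta}{2}$, which is $\le-\tfrac12$ for all $\beta\in(0,1]$. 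This is the ``learn or cover'' dichotomy in one line: small $\beta$ means the portfolio is badly split so the arriving element prunes a constant fraction of it, while constant $\beta$ means the purchased set covers a constant fraction of what remains.

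The main obstacle will be arriving at this potential---in particular seeing that the coefficient $k$ on $\log(X_i+1)$ is the right one, both because the two halves are worth $\sim k\log m$ and $\sim k\log n$ (so without the $k$ they would be unbalanced) and because a set drawn from a $k$-tuple inherits only a $1/k$ fraction of that tuple's coverage---and then checking that the resulting one-variable inequality $\log\beta-\Omega(\beta)\le-\Omega(1)$ holds uniformly on $(0,1]$. Everything else is routine: the double-counting identity for $\mathbb E_v[p_v]$, the averaging bound $\mathbb E|T\cap(R_i\setminus C_i)|\ge\bar g/k$, the two uses of Jensen, the supermartingale bookkeeping, and a few degenerate cases ($X_i=0$, or $\mathfrak T_i$ already pruned down to $\{\mathcal O\}$).
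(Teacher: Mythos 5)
Your proof is correct, but it takes a genuinely different route from the paper's own argument for this theorem. The paper proceeds by a case split: each uncovered arrival is declared a \emph{Cover Step} or a \emph{Learning Step} according to whether at least half the surviving tuples cover at least half the uncovered elements; it then tracks the two quantities separately, showing $\expectation[\mathfrak{N}]$ (uncovered elements) decays by a $(1-\frac{1}{4k})$ factor per Cover Step and $\expectation[\mathfrak{M}]$ (portfolio size) by $\frac34$ per Learning Step, and finishes by bounding the residual cost by $\expectation[\min(k\cdot\mathfrak{M},\mathfrak{N})]$ after $10k(\log m+\log n)$ rounds. You instead fold both quantities into the single potential $\log|\mathfrak{T}_i|+k\log(X_i+1)$ and prove a per-round drift bound of $-\frac12\Pr[\text{uncovered arrival}]$, with the dichotomy appearing only implicitly in the elementary inequality $\log\beta-\beta/2\le-\tfrac12$ on $(0,1]$; the supermartingale bookkeeping plus nonnegativity of $\Phi$ (guaranteed by the padded optimal tuple never being pruned) then kills the problem in one stroke, with no separate tail argument. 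Interestingly, this is essentially the template the paper reserves for its polynomial-time analysis (the $\wKL{x^*}{x^\thist}+\beta\log\rho^\thist$ potential together with the expected-potential-change lemma), so your warm-up proof is arguably a cleaner bridge to \cref{sec:gen_cost} than the paper's own, at the price of hiding the ``learn or cover'' intuition inside a convexity calculation rather than displaying it as an explicit case analysis. Two small points to tidy up: padding \opt to a $k$-tuple needs $k\le m$ (a technicality the paper's proof shares), and under the paper's pseudocode the portfolio is pruned even on covered arrivals, so those rounds can change $\Phi$ --- but only downward, and $\mathcal{O}$ still survives, so your inequality $\expectation[\Delta\Phi\mid\text{covered}]\le 0$ and the bound $\Phi_i\ge 0$ are unaffected.
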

	
	\begin{proof}[Proof of \cref{thm:unit_cost_exp}]
        Consider any time step $\thist$ in which a random element arrives that is \textit{uncovered} on arrival. Let $U^{\thist}$ be the set of elements that remain uncovered at the end of time step $\thist$. Before the algorithm takes action, there are two cases:
		
        \textbf{Case 1:} At least half the tuples in $\mathfrak{T}$ cover at least $|U^\lastt|/2$ of the $|U^\lastt|$ as-of-yet-uncovered elements. In this case we say the algorithm performs a \textbf{Cover Step} in round $t$.
		
        \textbf{Case 2:} At least half the tuples in $\mathfrak{T}$ cover strictly less than $|U^\lastt|/2$ of the uncovered elements; we say the algorithm performs a \textbf{Learning Step} in round $t$.
		
        Define $\mathfrak{N}(c)$ to be the number of uncovered elements remaining after $c$ Cover Steps. Define $\mathfrak{M}(\ell)$ to be the value of $|\mathfrak{T}|$ after $\ell$ Learning Steps. We will show that after $10 k (\log m + \log n)$ rounds, either the number of elements remaining is less than $\mathfrak{N}(10 k \log n)$ or the number of tuples remaining is less than $\mathfrak{M}(10 k \log m)$. In particular, we argue that both $\expect*{\mathfrak{N}(10 k \log n)}$ and $\expect*{\mathfrak{M}(10 k \log m)}$ are less than $1$. 
		
		\begin{claim}
			$\expect*{\mathfrak{N}(c+1) | \mathfrak{N}(c) = N} \leq \left(1-\frac{1}{4k}\right) N$.
		\end{claim}
		\begin{proof}
			If round $t$ is a Cover Step, then at least half of the $\mathcal{T} \in \mathfrak{T}$ cover at least half of $U^{\lastt}$, so $\expect*{\left\lvert(\bigcup \mathcal{T}) \cap U^\lastt\right\rvert} \geq \frac{|U^\lastt|}{4}$. Since $T$ is drawn uniformly at random from the $k$ sets in the uniformly random $\mathcal{T}$, we have $\expect*{\left\lvert T \cap U^\lastt\right\rvert} \geq \frac{|U^\lastt|}{4k}$. 
		\end{proof}
	
		\begin{claim}
			$\expect*{\mathfrak{M}(\ell+1) \mid \mathfrak{M}(\ell) = M} \leq \frac{3}{4} M$.
		\end{claim}
		\begin{proof}
			Upon the arrival of $v$ in a Learning Step, at least half the tuples have probability at least $\nicefrac{1}{2}$ of being removed from $\mathfrak{T}$, so the expected number of tuples removed from $\mathfrak{T}$ is at least $M/4$.
		\end{proof}
		
		To conclude, by induction 
		\begin{align*}
			\textstyle \expect*{\mathfrak{N}(10 k \log n)} 
			\leq n\,\left(1-\frac{1}{4k}\right)^{10 k\log n}
			\leq 1 \quad \text{and} \quad 
			\expect*{\mathfrak{M}(10 k \log m)} 
			\leq m^k \, \left(\frac{3}{4}\right)^{10 k\log m} \leq 1.
		\end{align*}
		Note that if there are $N$ remaining uncovered elements and $M$ remaining tuples to choose from, the algorithm will pay at most $\min(k\cdot M,N)$ before all elements are covered. Thus the total expected cost of the algorithm is bounded by
		\[10k (\log m + \log n) + \expect*{\min(k\cdot \mathfrak{M}(10 k \log m), \mathfrak{N}(10 k \log n))} = O(k \log mn). \qedhere\]
		
	\end{proof}

	Apart from the obvious challenge of modifying \textsc{SimpleLearnOrCover} to run in polynomial time, it is unclear how to generalize it to handle non-unit costs. Still, the intuition from this algorithm will be useful for the next sections.

		\section{A Polynomial-Time Algorithm for General Costs}

	\label{sec:gen_cost}

	We build on our intuition from \cref{sec:exptime} that we can
        either make progress in covering or in learning about the optimal solution.
	To get an efficient algorithm, we directly
	maintain a probability distribution over sets, which we update via a
	multiplicative weights rule. We use a potential function that
        simultaneously measures progress towards learning the optimal
        solution, and towards covering the unseen elements. Before we
          present the formal details and the pseudocode, here are the main pieces of the algorithm.

          \begin{enumerate}
          \item We maintain a fractional vector $x$ which
            is a (\emph{not necessarily feasible}) guess for the LP
            solution of cost $\beta$ to the set cover instance. 
          \item Every
            round $\thist$ in which an uncovered element $v^\thist$ arrives, we
            \begin{enumerate}
            \item sample every set $S$ with probability
              proportional to its current LP value $x_S$,
            \item  increase
              the value $x_S$ of all sets $S \ni v^\thist$ multiplicatively
              and renormalize,
            \item buy a cheapest set to
              cover $v^\thist$ if it remains uncovered.
            \end{enumerate}
          \end{enumerate}
       
          \bigskip
          Formally, by a guess-and-double approach, we assume we know
          a bound $\beta$ such that
          $\lpopt \leq \beta \leq 2 \cdot \lpopt$; here $\lpopt$ is
          the cost of the optimal LP solution to the final unknown
          instance.  Define
        \begin{gather}
          \kappa_v := \min \{c_S \mid S \ni v\}
        \end{gather}
          as the cost of the cheapest set covering $v$.  
	\begin{algorithm}[H]
	\caption{\textsc{LearnOrCover}}
	\label{alg:gencost}
	\begin{algorithmic}[1]
		\State Initialize $\mathcal{C}^0 \leftarrow \{S : c(S) < \beta/m \}$. \label{line:loc_cheapsets}
		\State Initialize $x_S^{0} \leftarrow \frac{\beta}{c_S \cdot m'} \cdot \mathbbm{1}\{\beta/m \leq c(S) \leq \beta \}$, where $m' := |\{S \ : \  \beta/m \leq c(S) \leq \beta \}|$. \label{line:loc_sup}
		\For{$\thist=1,2\ldots, n$}
		\State{$v^\thist \leftarrow$ $\thist^{th}$ element in the
                  random order, and let $\mathcal{R}^\thist \leftarrow \emptyset$.}
        \If {$v^\thist$ not already covered} \label{line:gencost_uncov}
			\State \ForEach set $S$, with probability $\min(\kappa_{v^\thist} \cdot x^{\lastt}_S/\beta,1)$ add $\mathcal{R}^\thist \leftarrow \mathcal{R}^\thist \cup \{S\}$.
			\State Update $\mathcal{C}^\thist \leftarrow \mathcal{C}^{\lastt} \cup \mathcal{R}^\thist$.
			\If {$\sum_{S \ni v^\thist} x^{\lastt}_S < 1$}
			\State For every set $S$, update $x^{\thist}_S \leftarrow x^{\lastt}_S \cdot \exp\left\{\mathbbm{1}\{S \ni v^\thist \} \cdot \kappa_{v^{\thist}} / c_S\right\}$. 
			\State Let $Z^{\thist} = \langle c, x^\thist \rangle / \beta$ and normalize $x^{\thist} \leftarrow x^{\thist} / Z^{\thist}$. \label{line:loc_cost_invariant}
			\Else
			\State $x^\thist \leftarrow x^{\lastt}$. \label{line:gencost_noupdate}
			\EndIf
			\State Let $S_{v^\thist}$ be the cheapest set containing $v^\thist$. Add $\mathcal{C}^\thist \leftarrow \mathcal{C}^\thist \cup \{S_{v^\thist}\}$. \label{line:gencost_backup}
		\EndIf
		\EndFor 
	\end{algorithmic}
\end{algorithm}

The algorithm is somewhat simpler for unit costs: the $x_S$ values are
multiplied by either $1$ or $e$, and moreover we can sample a single
set for $\mathcal{R}^\thist$ (see \cref{sec:appendix2} for pseudocode). Because of the non-uniform set costs, we have to carefully calibrate both the learning and sampling rates. Our algorithm dynamically scales the learning and sampling rates in round $\thist$ depending on $\kappa_{v^\thist}$, the cost of the cheapest set covering $v^\thist$. Intuitively, this ensures that all three of (a) the change in potential, (b) the cost of the sampling, and (c) the cost of the backup set, are at the same scale. Before we begin, observe that \cref{line:loc_cost_invariant} ensures the following invariant:

\begin{invariant}
	\label{inv:loc_cost}
	For all time steps $\thist$, it holds that $\langle c, x^\thist \rangle = \beta$.
\end{invariant}

\begin{theorem}[Main Theorem]
	\label{thm:weighted_cost_poly}
	\nameref{alg:gencost} is a polynomial-time randomized algorithm for \roosc with expected cost $O(\beta  \cdot \log (mn))$.
\end{theorem}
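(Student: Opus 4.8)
The plan is to (i) bound the total cost by a constant times $\expect*{\sum_{\thist \in \mathcal{U}} \kappa_{v^\thist}}$, where $\mathcal{U}$ is the random set of rounds in which an uncovered element arrives, and then (ii) control this sum with a single potential $\Phi^\thist = \wKL{x^*}{x^\thist} + \Psi^\thist$ that combines multiplicative‑weights ``learning'' progress (the KL term, measured against a fixed near‑optimal fractional solution $x^*$) with ``covering'' progress ($\Psi^\thist$, a weighted remaining‑work term), all in expectation over the random order. First fix $x^*$ to be an optimal fractional set cover supported only on sets with $c_S \le \beta$ (and, WLOG, $x^*_S \le 1$); such an $x^*$ exists with $\langle c, x^*\rangle \le \lpopt \le \beta$, obtained from any optimal LP solution by repeatedly deleting a set $S$ with $c_S > \beta$ and rescaling the remaining coordinates by $1/(1-x^*_S)$, which preserves feasibility and, since $c_S > \beta \ge \lpopt$, does not increase cost. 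In particular $\kappa_v \le \langle c, x^*\rangle \le \beta$ for every $v$, $m' \ge 1$, and $x^*$ is supported where $x^0 > 0$, so $\wKL{x^*}{x^\thist}$ is always finite. By \cref{inv:loc_cost}, $\expect*{c(\mathcal{R}^\thist) \mid \mathcal{F}^{\lastt}} = \sum_S c_S \min(\kappa_{v^\thist} x^{\lastt}_S/\beta, 1) \le \kappa_{v^\thist}\langle c, x^{\lastt}\rangle/\beta = \kappa_{v^\thist}$, the backup set $S_{v^\thist}$ costs exactly $\kappa_{v^\thist}$, and rounds with an already‑covered arrival cost nothing; hence $\expect*{c(\mathcal{C}^n)} \le 2\,\expect*{\sum_{\thist \in \mathcal{U}} \kappa_{v^\thist}}$, and it suffices to bound the right‑hand side by $O(\beta \log mn)$.

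\textbf{The potential.} I would take $\Psi^\thist := \lambda\beta \log\!\big(1 + \tfrac1\beta \sum_{v} \kappa_v\big)$, where $\lambda$ is a large enough absolute constant and the sum ranges over elements $v$ that, by the end of round $\thist$, have neither arrived nor been covered by $\mathcal{C}^\thist$. Both terms of $\Phi^\thist$ are nonnegative, $\Phi^n \ge 0$ with $\Psi^n = 0$, and from the closed forms $\wKL{x^*}{x^0} \le \langle c, x^*\rangle \log m' + \beta = O(\beta \log m)$ while $\kappa_v \le \beta$ gives $\Psi^0 \le \lambda\beta\log(1+n) = O(\beta\log n)$, so $\Phi^0 = O(\beta\log mn)$. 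The crux is the one‑step estimate
\[ \expect*{\Phi^{\lastt} - \Phi^\thist \mid \mathcal{F}^{\lastt}} \ \ge\ c_0 \cdot \expect*{\kappa_{v^\thist}\cdot\mathbbm{1}[\thist\in\mathcal{U}] \mid \mathcal{F}^{\lastt}} \]
for an absolute constant $c_0 > 0$; summing over $\thist$ and telescoping yields $\expect*{\sum_{\thist\in\mathcal{U}}\kappa_{v^\thist}} \le \Phi^0/c_0 = O(\beta\log mn)$, which with the reduction above proves the theorem.

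\textbf{Proving the one‑step estimate.} Here $\mathcal{F}^{\lastt}$ fixes $x^{\lastt}$, the covered set, and the not‑yet‑arrived set, after which $v^\thist$ is \emph{uniform} among the not‑yet‑arrived elements — this is exactly where the random order enters. A covered arrival leaves $\Phi$ unchanged, so condition further on $v^\thist = v$ uncovered and analyze the two pieces. For the KL piece, the renormalized multiplicative‑weights step together with \cref{inv:loc_cost} and $e^y \le 1+(e-1)y$ on $[0,1]$ (note $\kappa_v/c_S \le 1$ for $S\ni v$) gives, whenever the update fires (i.e.\ $p^\thist_v := \sum_{S\ni v}x^{\lastt}_S < 1$),
\[ \wKL{x^*}{x^{\lastt}} - \wKL{x^*}{x^\thist} \ =\ \kappa_v\!\sum_{S\ni v} x^*_S \ -\ \langle c, x^*\rangle\log Z^\thist \ \ge\ \kappa_v\big(1 - (e-1)\,p^\thist_v\big), \]
using $\sum_{S\ni v}x^*_S\ge1$, $\langle c, x^*\rangle\le\beta$, and $\log Z^\thist \le (e-1)\kappa_v p^\thist_v/\beta$; when the update does not fire the left side is $0 \ge \kappa_v(1-(e-1)p^\thist_v)$. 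For the $\Psi$ piece, in an uncovered round the remaining‑work sum drops by at least $\kappa_v + G$, where $G := \sum_w \kappa_w$ over still‑pending $w\ne v$ that $\mathcal{R}^\thist$ newly covers, and $\log\tfrac{1+a}{1+b} \ge (\log 2)\tfrac{a-b}{1+a}$ for $0\le b\le a$ gives $\Psi^{\lastt}-\Psi^\thist \ge (\lambda\log 2)\,\beta(\kappa_v+G)/(\beta+W)$ with $W$ the pre‑round remaining‑work sum; the sampling rule gives $\expect*{G \mid \mathcal{F}^{\lastt}, v^\thist=v} \ge (1-1/e)\sum_w \kappa_w\min(1,\kappa_v p^\thist_w/\beta)$ over pending $w\ne v$. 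Now average over $v^\thist$: ``poorly covered'' arrivals ($p^\thist_v\le\tfrac12$) give a KL drop $\ge (1-\tfrac{e-1}{2})\kappa_v = \Omega(\kappa_v)$ by themselves; ``well covered'' arrivals ($p^\thist_v > \tfrac12$) may cost $O(\kappa_v)$ in the KL term, but then either $W$ is dominated by $\kappa_v$ and the $\Psi$ drop is already $\Omega(\kappa_v)$, or there are many other well‑covered pending elements and $\expect*{G} = \Omega(\kappa_v W_{\mathrm{well}}/\beta)$ makes the $\Psi$ drop dominate; with $\lambda$ a large enough constant the net expected drop is $\ge c_0\kappa_v$ in every case.

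\textbf{Main obstacle.} The delicate point is precisely this last case analysis for a well‑covered‑yet‑uncovered arrival — the single configuration in which the KL potential can \emph{increase} — and showing that the covering potential $\Psi$ absorbs it; this relies both on weighting $\Psi$ by the $\kappa_v$'s and on the algorithm's choice to scale the sampling and learning rates by $\kappa_{v^\thist}/\beta$, which puts the sampling cost, the potential change, and the backup cost all on the same scale. The remaining items — making the ``$v^\thist$ is uniform given $\mathcal{F}^{\lastt}$'' step precise, verifying that rounds outside $\mathcal{U}$ leave $\Phi$ unchanged, and tracking the constants in the two regimes above — are routine.
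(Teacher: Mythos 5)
Your plan has the same skeleton as the paper's proof: the per-round cost is charged to $2\kappa_{v^\thist}$ via \cref{inv:loc_cost}, the potential is a weighted KL divergence to a near-optimal $x^*$ supported on cheap sets plus $\beta$ times a logarithmic remaining-work term, your KL estimate is exactly \cref{lem:weighted_klchange}, your bound on $\expectation[G]$ is the computation inside \cref{lem:weighted_utchange}, and your initial-potential bound matches \cref{fact:gencost_phi0}. Using $\log(1+\rho/\beta)$ instead of $\log(\rho/\beta)$ and telescoping to the end is a legitimate substitute for the paper's stopping-time and $O(\beta)$-tail step.

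The gap is in the final combination step, the case analysis for a well-covered uncovered arrival, which is exactly where the random order must be exploited. Your dichotomy --- ``either $W$ is dominated by $\kappa_v$, or there are many other well-covered pending elements'' --- is not exhaustive, and the per-arrival conclusion ``net expected drop $\ge c_0\kappa_v$ in every case'' is false. Take $U^{\lastt}$ to consist of one element $v$ with $X_v^{\lastt}$ just below $1$ and tiny $\kappa_v$, together with many elements $w$ with $X_w^{\lastt}\approx 0$ and total mass $W=\sum_w\kappa_w\gg\lambda\beta$. Conditioned on $v^\thist=v$, the KL term genuinely increases by $\Theta(\kappa_v)$ (the change equals $\langle c,x^*\rangle\log Z^\thist-\kappa_v\sum_{S\ni v}x^*_S$, which is about $(e-1)\kappa_v X_v^{\lastt}-\kappa_v>0$ when $\beta=\lpopt$ and the sets containing $v$ have cost $\kappa_v$), while $\expectation[G]\approx 0$ and your $\Psi$ drop is only $O(\lambda\kappa_v\beta/W)=o(\kappa_v)$; so for this arrival the potential rises in expectation. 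What is true --- and all your displayed one-step inequality requires --- is the statement averaged over $v^\thist\sim U^{\lastt}$, and it holds for a different reason: the offending term averages to $(e-1)\,\expectation_{v\sim U^{\lastt}}[\kappa_v\min(X_v^{\lastt},1)]$, while the expected coverage gain, averaged over the arrival using $\expectation_{v\sim U^{\lastt}}[\kappa_v]=W/|U^{\lastt}|$, is $\Omega(1)\cdot\frac{W}{\beta+W}\cdot\expectation_{u\sim U^{\lastt}}[\kappa_u\min(X_u^{\lastt},1)]$ --- the \emph{same} expectation over the same uniform pending element, so the two cancel for a large enough constant $\lambda$ whenever $W\ge\beta$; when $W<\beta$, the removal of the arriving element alone gives a $\Psi$ drop of $\Omega(\lambda\kappa_v)$ since $\beta/(\beta+W)\ge\nicefrac{1}{2}$. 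So the correct case split is on $W$ versus $\beta$ (a global quantity), not on the coverage level of the arriving element or on $W_{\mathrm{well}}$; this identification of the two averages is precisely how the paper combines \cref{lem:weighted_klchange,lem:weighted_utchange} (with $C_1=2$, $C_2=2e$), and without it your argument does not go through.
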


	 Let us start by defining notation. 
	 Let $x^*$ be the optimal LP solution to the final, unknown set cover instance. 
	 Next let $X_v^{\thist} := \sum_{S \ni v} x_S^{\thist}$, the fractional coverage provided to $v$ by $x^{\thist}$. Let $U^{\thist}$ be the elements remaining uncovered at the end of round $t$ (where $U^{0}=U$ is the entire ground set of the set system). Define the quantity 
	 \begin{align}
	 	\label{eq:gencost_rhodef}
	 	\rho^{\thist} :=  \sum_{u \in U^{\thist}} \kappa_u.
 	\end{align}
	 With this, we are ready to define our potential function which is the central player in our analysis. Recalling that $\beta$ is our guess for the value of $\lpopt$, and the definition of KL divergence \eqref{eq:prelim_kl}, define
		\begin{align}
			\label{eq:gencost_phi}
			\boxed{
			\Phi(\thist) := C_1 \cdot \wKL{x^*}{x^{\thist}} + C_2 \cdot \beta \cdot  \log \left(\frac{\rho^{\thist}}{\beta} + \frac{1}{m}\right)}
		\end{align}
	where $C_1$ and $C_2$ are constants to be specified later.

	\begin{lemma}[Potential Bounds]
		\label{fact:gencost_phi0}
		The initial potential is bounded as $\Phi(0) = O(\beta\cdot \log (mn))$, and $\Phi(t) \geq -O(\beta\cdot \log m )$ for all rounds $t$.
	\end{lemma}
	We write the proof in general language in order to reuse it for covering IPs in the following section. Recall that sets correspond to columns in the canonical formulation of \setcov as an integer program.
	\begin{proof}
		We start with the initial potential. We require the following fact which we prove in \cref{sec:appendix}.
		\begin{restatable}{fact}{lpsupport}
			\label{fact:lpsupport}
			Every pure covering LP of the form $\min_{x \geq 0}\{\langle c, x \rangle : Ax \geq 1\}$ for $c \geq 0$ and $a_{ij} \in [0, 1]$ with optimal value less than $\beta$ has an optimal solution $x^*$ which is supported only on columns $j$ such that $c_j \leq \beta$.
		\end{restatable}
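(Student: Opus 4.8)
The plan is to take an arbitrary optimal solution $x^*$ of the LP (one exists since the optimal value is finite) and to show directly that it places no mass on the ``expensive'' columns $J := \{j : c_j > \beta\}$. Writing $s := \sum_{j \in J} x^*_j$ for the total mass $x^*$ puts on expensive columns, it suffices to prove $s = 0$: then $x^*$ is already the desired solution. I would argue by contradiction, so assume $s > 0$.

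First I would bound $s$ strictly below $1$, which is needed for a later rescaling. Since all terms $c_j x^*_j$ with $j \notin J$ are nonnegative, $\sum_{j \in J} c_j x^*_j \le \langle c, x^* \rangle$, which is the optimal value and hence $< \beta$ by hypothesis. Since $c_j > \beta$ on $J$ and $s > 0$, we also have $\sum_{j \in J} c_j x^*_j > \beta s$. Combining the two gives $\beta s < \beta$, i.e. $s < 1$.

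Next comes the key redistribution step: form $y$ by deleting the expensive coordinates of $x^*$ (so $y_j = x^*_j$ if $j \notin J$ and $y_j = 0$ otherwise), and set $\tilde{x} := \frac{1}{1-s}\,y$, which is well-defined because $s < 1$. To see $\tilde{x}$ is feasible I would check each constraint $i$ separately: because every $a_{ij} \in [0,1]$, the expensive columns contribute at most $\sum_{j \in J} a_{ij} x^*_j \le \sum_{j \in J} x^*_j = s$ to the left-hand side of constraint $i$, so the cheap columns alone already supply a left-hand side of at least $1 - s$; scaling up by $\frac{1}{1-s}$ then yields a left-hand side $\ge 1$. Thus $\tilde{x} \ge 0$ is feasible and supported only on columns $j$ with $c_j \le \beta$.

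Finally I would compare costs. We have $\langle c, \tilde{x}\rangle = \frac{1}{1-s}\big(\langle c, x^*\rangle - \sum_{j \in J} c_j x^*_j\big) < \frac{1}{1-s}\big(\langle c, x^*\rangle - \beta s\big)$, and since $\langle c, x^*\rangle < \beta$ we get $\langle c, x^*\rangle - \beta s < \langle c, x^*\rangle(1-s)$, so $\langle c, \tilde{x}\rangle < \langle c, x^*\rangle$, contradicting optimality of $x^*$. Hence $s = 0$, which proves the fact (indeed it proves the stronger statement that every optimal solution avoids expensive columns). The argument is elementary; the only steps needing care are the use of $a_{ij} \le 1$ to bound how much the expensive columns can help any single constraint, and the verification that $s < 1$ so the rescaling is legitimate — and both of these are exactly where the hypotheses $a_{ij} \in [0,1]$ and ``optimal value $< \beta$'' are used.
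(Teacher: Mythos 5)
Your proof is correct and takes essentially the same delete-and-rescale route as the paper: the paper zeroes out a single expensive coordinate $j'$ and rescales by $\frac{1}{1-x^*_{j'}}$, whereas you zero out all expensive coordinates at once and rescale by $\frac{1}{1-s}$, using the same two ingredients ($a_{ij}\le 1$ to bound the lost coverage, and $\langle c,x^*\rangle<\beta$ to get the rescaling factor under control and a strictly cheaper feasible solution, contradicting optimality).
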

		We assume WLOG that $x^*$ is such a solution, and we first bound the KL term of $\Phi(0)$. Since
                $\text{support}(x^*) \subseteq \text{support}(x^0)$ by \cref{fact:lpsupport}, we have
                \[\wKL{x^*}{x^{0}} = \sum_{j} c_j \cdot  x^*_j \log
                \left(x^*_j \frac{c_j \cdot m'}{\beta}\right) + \sum_{j} c_j (x^0_j - x^*_j)\leq
                \beta (\log (m)+1),\]
		 where we used that $\langle c,
                 x^* \rangle \leq \beta$ and that $m'\leq m$ is the number of columns with cost in $[\beta/m, \beta]$. 
		
		For the second term,  $\beta \log (\rho^{0}/\beta + 1/m) = \beta \log (\sum_{i \in U^0} \kappa_i /\beta + 1/m)
		\leq \beta \log (|U^{0}| + 1/m) \leq \beta \log (n+1)$, since for
		all $i$, the cheapest cover for $i$ costs less than
		$\beta$, and therefore $\kappa_i / \beta \leq 1$. The upper bound follows so long as $C_1$ and $C_2$ are constants.
		
		We conclude with the lower bound. The weighted KL term is nonnegative, and the second term is bounded below by $C_2 \beta \cdot \log(1/m) = - C_2 \beta \log m$, implying the claim.
	\end{proof}
	
	The rest of the proof relates the expected decrease of
        potential $\Phi$ to the algorithm's cost in each round. 
	Define the event $\Upsilon^{\thist} := \{v^\thist \in U^{\lastt}\}$ that the element $v^{\thist}$ is \emph{uncovered} on arrival. 
	Note \cref{line:gencost_uncov} ensures that if event $\Upsilon^{\thist}$ does not hold, the algorithm
        takes no action and the potential does not change.
        So we focus on the case that  event $\Upsilon^{\thist}$ does occur. 
	We first analyze the change in KL divergence. Recall that $X_v^{\thist} := \sum_{S \ni v} x_S^{\thist}$. 
	
	\begin{lemma}[Change in KL]
	\label{lem:weighted_klchange}
	    For rounds $\thist$ in which $\Upsilon^{\thist}$ holds, the expected change in the weighted KL divergence is
		\[
			\expectover{v^\thist, \mathcal{R}^{\thist}}*{ \wKL{x^*}{x^{\thist}} - \wKL{x^*}{x^{\lastt}} \mid x^{\lastt}, U^{\lastt}, \Upsilon^{\thist}} \leq \expectover{v \sim U^{\lastt}}{ (e-1) \cdot\kappa_v \cdot \min(X_v^{\lastt},1) - \kappa_v}.
		\]
	\end{lemma}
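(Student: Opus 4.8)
The plan is to reduce the statement to a pointwise bound for each fixed arriving element. First observe that the quantity $\wKL{x^*}{x^\thist} - \wKL{x^*}{x^\lastt}$ does not depend on the sampled collection $\mathcal{R}^\thist$ at all: the update in \cref{line:loc_cost_invariant,line:gencost_noupdate} only reads $v^\thist$ and the indicator $\{\sum_{S \ni v^\thist} x^\lastt_S < 1\}$. So I would condition on $x^\lastt, U^\lastt, \Upsilon^\thist$, fix the arriving (uncovered) element $v := v^\thist$, bound the KL change as a function of $v$ alone, and then take expectation using that, conditioned on $U^\lastt$ and $\Upsilon^\thist$, the next arrival is uniform on $U^\lastt$ in the random-order model (all uncovered elements are as-yet unrevealed). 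Note also that support$(x^*) \subseteq$ support$(x^0) = $ support$(x^\lastt)$ by \cref{fact:lpsupport} and the fact that the multiplicative update preserves supports, so no KL term is infinite.

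If $X_v^\lastt \ge 1$, then \cref{line:gencost_noupdate} sets $x^\thist = x^\lastt$, so the KL change is exactly $0$; since $\min(X_v^\lastt,1) = 1$ and $e > 2$, this is at most $(e-1)\kappa_v - \kappa_v = (e-1)\kappa_v\min(X_v^\lastt,1) - \kappa_v$ as required. The main case is $X_v^\lastt < 1$, where $x^\thist_S = x^\lastt_S \exp(\mathbbm 1\{S \ni v\}\kappa_v/c_S)/Z^\thist$. Expanding the definition \eqref{eq:prelim_kl} and cancelling the $x^*$-log terms,
\[\wKL{x^*}{x^\thist} - \wKL{x^*}{x^\lastt} = \sum_S c_S x^*_S \log\frac{x^\lastt_S}{x^\thist_S} + \langle c, x^\thist\rangle - \langle c, x^\lastt\rangle,\]
and the last two terms cancel by \cref{inv:loc_cost}. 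Substituting the update rule gives $\log(x^\lastt_S/x^\thist_S) = \log Z^\thist - \mathbbm 1\{S \ni v\}\kappa_v/c_S$, so the right-hand side equals $\langle c, x^*\rangle\,\log Z^\thist - \kappa_v\sum_{S \ni v} x^*_S$. Here $Z^\thist = \tfrac1\beta\sum_S c_S x^\lastt_S e^{\mathbbm 1\{S\ni v\}\kappa_v/c_S} \ge \tfrac1\beta\langle c,x^\lastt\rangle = 1$ (every exponential factor is $\ge 1$, using \cref{inv:loc_cost}), while $\langle c,x^*\rangle = \lpopt \le \beta$ and $\sum_{S\ni v} x^*_S \ge 1$ by feasibility of $x^*$; hence the change is at most $\beta\log Z^\thist - \kappa_v$.

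It remains to bound $\beta\log Z^\thist \le (e-1)\kappa_v X_v^\lastt$. For every $S \ni v$ we have $c_S \ge \kappa_v$, so $\kappa_v/c_S \in (0,1]$ and therefore $e^{\kappa_v/c_S} \le 1 + (e-1)\kappa_v/c_S$. Plugging this into $Z^\thist$ and using \cref{inv:loc_cost} again,
\[Z^\thist \;\le\; \frac1\beta\Big(\langle c, x^\lastt\rangle + (e-1)\kappa_v\!\!\sum_{S \ni v}\! x^\lastt_S\Big) \;=\; 1 + \frac{(e-1)\kappa_v X_v^\lastt}{\beta},\]
and $\log(1+a) \le a$ yields $\beta\log Z^\thist \le (e-1)\kappa_v X_v^\lastt = (e-1)\kappa_v\min(X_v^\lastt,1) - 0$ in this case. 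Combining the two cases gives the pointwise bound $(e-1)\kappa_v\min(X_v^\lastt,1) - \kappa_v$, and taking expectation over $v \sim U^\lastt$ finishes the proof. The only delicate points are bookkeeping the normalizer $Z^\thist$ — in particular verifying $Z^\thist \ge 1$ so that $\langle c,x^*\rangle\log Z^\thist \le \beta\log Z^\thist$ points the right way — and verifying $\kappa_v/c_S \le 1$ so that the elementary inequality $e^a \le 1 + (e-1)a$ applies; I expect the $Z^\thist$ manipulation to be the main thing to get right.
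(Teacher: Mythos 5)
Your proposal is correct and follows essentially the same route as the paper's proof: the same case split on $X_v^{\lastt} \ge 1$ versus $X_v^{\lastt} < 1$, the same use of \cref{inv:loc_cost}, feasibility of $x^*$, $\langle c, x^*\rangle \le \beta$, and the approximations $e^y \le 1+(e-1)y$ and $\log(1+y) \le y$. The only difference is that you make explicit two steps the paper leaves implicit — the cancellation of the linear terms $\langle c, x^{\thist}\rangle - \langle c, x^{\lastt}\rangle$ in the weighted KL and the verification that $Z^{\thist} \ge 1$ so that replacing $\langle c, x^*\rangle$ by $\beta$ in front of $\log Z^{\thist}$ is valid — which is sound and slightly more careful bookkeeping.
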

	We emphasize that the expected change in relative entropy in the statement above depends only on the randomness of the arriving uncovered element $v^\thist$, not on the randomly chosen sets $\mathcal{R}^{\thist}$.

	\begin{proof} We break the proof into cases. If $X_v^{\lastt} \geq 1$, in \cref{line:gencost_noupdate} we set the vector $x^{\thist} = x^{\lastt}$, so the change in KL divergence is 0. This means that 
	\begin{align} &\expectover{v^\thist, \mathcal{R}^{\thist}}*{ \wKL{x^*}{x^{\thist}} - \wKL{x^*}{x^{\lastt}} \mid x^{\lastt}, U^{\lastt}, \Upsilon^{\thist}, X_{v^{\thist}}^{\lastt}  \geq 1} \notag \\
	&\leq \expectover{v \sim U^{\thist}}{(e-1) \cdot \kappa_v \min\left(X_v^{\lastt},1\right) - \kappa_v | X_v^{\lastt} \geq  1} \label{eq:wkl_xvge1}\end{align}
	 trivially. Henceforth we focus on the case $X_{v^{\thist}}^{\lastt} < 1$.
	
	Recall that the expected change in relative entropy depends only on the arriving uncovered element $v^{\thist}$. Expanding definitions,
	\begin{align}
	        & \expectover{v^{\thist}, \mathcal{R}^{\thist}}*{\wKL{x^*}{x^{\thist}} - \wKL{x^*}{x^{\lastt}} \mid x^{\lastt}, U^{\lastt}, \Upsilon^{\thist}, X_{v^{\thist}}^{\lastt} < 1} \notag \\
	        &= \expectover{v \sim U^{\lastt}}*{\sum_{S} c_S \cdot x^*_S \cdot \log \frac{x^{\lastt}_S}{x_S^{\thist}} | X_v^{\lastt} < 1} \notag  \\
	        &= \expectover{v \sim U^{\lastt}}*{\langle c, x^*\rangle \cdot  \log Z^{\thist} - \sum_{S \ni v} c_S \cdot x^*_S \cdot \log e^{\kappa_v/c_S} | X_v^{\lastt} < 1} \notag  \\
	        &\leq \expectover{v \sim U^{\lastt}}*{ \beta \cdot \log\left(\sum_{S\ni v} \frac{c_S}{\beta} \cdot x_S^{\lastt}  \cdot e^{\kappa_v/c_S} + \sum_{S\not \ni v} \frac{c_S}{\beta} \cdot x_S^{\lastt} \right) - \sum_{S \ni v} \kappa_v \cdot x^*_S | X_v^{\lastt} < 1}\label{eq:wkl_copt}, \\
        \intertext{where in the last step \eqref{eq:wkl_copt} we expanded the definition of $Z^{\thist}$, and used $\langle c, x^*\rangle \leq \beta$. Since $x^*$ is a feasible set cover, which means that $\sum_{S \ni v} x^*_S \geq 1$, we can further bound \eqref{eq:wkl_copt} by}
			&\leq \expectover{v \sim U^{\lastt}}*{ \beta \cdot \log\left(\sum_{S\ni v} \frac{c_S}{\beta} \cdot x_S^{\lastt} \cdot e^{\kappa_v/c_S} + \sum_{S\not \ni v} \frac{c_S}{\beta} \cdot x_S^{\lastt} \right) - \kappa_v | X_v^{\lastt} < 1} \notag	
			\\
	        &\leq  \expectover{v \sim U^{\lastt}}*{ \beta  \cdot \log\left(\sum_S \frac{c_S}{\beta} \cdot  x_S^{\lastt} + (e - 1) \cdot \sum_{S\ni v} \frac{\kappa_v}{\beta} \cdot x_S^{\lastt} \right) - \kappa_v | X_v^{\lastt} < 1} \label{eq:wkl_eapx} \\
	        \intertext{where we use the approximation $e^y \leq 1+(e-1) \cdot y$ for $y \in [0,1]$ (note that $\kappa_v$ is the cheapest set covering $v$, so for any $S \ni v$ we have $\kappa_v / c_S \leq 1$). Finally, using \cref{inv:loc_cost}, along with the approximation $\log(1+y) \leq y$, we bound \eqref{eq:wkl_eapx} by}
		  	&\leq  \expectover{v \sim U^{\lastt}}*{(e-1) \cdot \kappa_v \cdot X_v^{\lastt} - \kappa_v | X_v^{\lastt} < 1} \notag 
		  	\\
		  	&\leq  \expectover{v \sim U^{\lastt}}{(e-1) \cdot \kappa_v \min(X_v^{\lastt},1) - \kappa_v | X_v^{\lastt} < 1}. \label{eq:wkl_xvl1}
	    \end{align}
		The lemma statement follows by combining \eqref{eq:wkl_xvge1} and \eqref{eq:wkl_xvl1} using the law of total expectation.
	\end{proof}
	
	Next we bound the expected change in $\log (\rho^{\thist}/\beta + 1/m)$ provided by the sampling $\mathcal{R}^{\thist}\sim \kappa_{v^{\thist}} x^{\lastt} / \beta$ upon the arrival of uncovered $v^{\thist}$ (where each $S \in \mathcal{R}^{\thist}$ independently with probability $\kappa_{v^{\thist}} x_S^{\lastt}/\beta$).
	Recall that $U^{\thist}$ denotes the elements uncovered at the end of round $\thist$; therefore element $u$ is contained in $U^{\lastt} \setminus U^{\thist}$ if and only if it is marginally covered by $\mathcal{R}^{\thist}$ in this round.
	
	\begin{lemma}[Change in $\log\left( \rho^{\thist}/\beta + 1/m\right)$]
		\label{lem:weighted_utchange}
		For rounds when $v$ is uncovered on arrival, the expected change in $\log \left(\rho^{\thist}/\beta + 1/m\right)$ is
		\[
			\expectover{v^{\thist}, \mathcal{R}^{\thist}}*{\log \left(\frac{\rho^{\thist}}{\beta} + \frac{1}{m}\right) - \log \left(\frac{\rho^{\lastt}}{\beta} + \frac{1}{m}\right) \mid x^{\lastt}, U^{\lastt}, \Upsilon^{\thist}} \leq - \frac{1-e^{-1}}{2\beta}\cdot \expectover{u \sim U^{\lastt}}*{\kappa_u \cdot \min(X_u^{\lastt}, 1)}.
		\]
	\end{lemma}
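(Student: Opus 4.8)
The plan is to turn the multiplicative change of $\rho$ into an additive one via $\log(1+x)\le x$, and then remove the two sources of randomness one at a time: first the sampled collection $\mathcal{R}^{\thist}$, then the arriving uncovered element $v^{\thist}$.

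First I would bound $\log\rho^{\thist}-\log\rho^{\lastt}=\log(\rho^{\thist}/\rho^{\lastt})\le(\rho^{\thist}-\rho^{\lastt})/\rho^{\lastt}$, and note that since the uncovered set only shrinks, $\rho^{\lastt}-\rho^{\thist}=\sum_{u\in U^{\lastt}\setminus U^{\thist}}\kappa_u\ge\sum_{u\in U^{\lastt}:\,u\in\bigcup\mathcal{R}^{\thist}}\kappa_u$ (we may discard the extra coverage contributed by the backup set $S_{v^{\thist}}$). Conditioning additionally on $v^{\thist}=v$ and taking the expectation over $\mathcal{R}^{\thist}$, a fixed $u\in U^{\lastt}$ is covered with probability $1-\prod_{S\ni u}(1-\min(\kappa_v x_S^{\lastt}/\beta,1))$; using $1-p\le e^{-p}$, the elementary inequality $\sum_{S\ni u}\min(a_S,1)\ge\min(\sum_{S\ni u}a_S,1)$, and $1-e^{-z}\ge(1-e^{-1})\min(z,1)$ for $z\ge0$, this is at least $(1-e^{-1})\min(\kappa_v X_u^{\lastt}/\beta,1)$. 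So
\[\expectover{\mathcal{R}^{\thist}}*{\log\rho^{\thist}-\log\rho^{\lastt}\mid v^{\thist}=v,\,x^{\lastt},\,U^{\lastt}}\le-\frac{1-e^{-1}}{\rho^{\lastt}}\sum_{u\in U^{\lastt}}\kappa_u\min\!\left(\frac{\kappa_v X_u^{\lastt}}{\beta},1\right).\]

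Next I would take the expectation over $v^{\thist}$, which---conditioned on $\Upsilon^{\thist}$---is uniform on $U^{\lastt}$, since every element that has already arrived has been covered (the algorithm buys a backup set whenever an uncovered element arrives), so $U^{\lastt}$ is contained in the set of not-yet-arrived elements. The crux is the per-pair inequality
\[\min\!\left(\frac{\kappa_v X_u^{\lastt}}{\beta},1\right)\ \ge\ \frac{\kappa_v}{\beta}\,\min(X_u^{\lastt},1)\qquad\text{for all }u,v\in U^{\lastt},\]
which follows from a short case split on whether $X_u^{\lastt}\le1$, using crucially that $\kappa_v\le\beta$ (every element admits a covering set of cost at most $\beta$; see \cref{fact:lpsupport}). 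Averaging over $v\sim U^{\lastt}$ and using $\expectover{v\sim U^{\lastt}}{\kappa_v}=\rho^{\lastt}/|U^{\lastt}|$, the $\rho^{\lastt}$ factors cancel and the bound collapses to exactly $-\tfrac{1-e^{-1}}{\beta}\expectover{u\sim U^{\lastt}}{\kappa_u\min(X_u^{\lastt},1)}$, as claimed.

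The step I expect to be the main obstacle is this decoupling of $u$ from $v$: the truncation at $1$ makes it tempting to believe no term-by-term comparison survives (and indeed the analogue with the cap rescaled is false), so the point is to notice that $\kappa_v\le\beta$ is exactly what validates the comparison in both regimes of $X_u^{\lastt}$. The only remaining subtlety is the degenerate case $\rho^{\thist}=0$ (the universe becomes fully covered in round $\thist$), where the claimed inequality is vacuous and can be set aside.
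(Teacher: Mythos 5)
Your proof is correct and follows essentially the same route as the paper's: bound $\log\rho^{\thist}-\log\rho^{\lastt}$ via $\log(1-y)\le -y$, lower-bound the per-element coverage probability by $(1-e^{-1})\min(\kappa_v X_u^{\lastt}/\beta,1)$, decouple $u$ from $v$ using $\kappa_v\le\beta$, and average over $v\sim U^{\lastt}$ using $\expectover{v\sim U^{\lastt}}{\kappa_v}=\rho^{\lastt}/|U^{\lastt}|$. You merely make explicit a few steps the paper compresses (the product-to-exponential chain, the uniformity of $v^{\thist}$ on $U^{\lastt}$ given $\Upsilon^{\thist}$), so there is nothing substantive to add.
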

	\begin{proof}	
	Recall the definition of $\rho^{\thist}$ from \eqref{eq:gencost_rhodef}. Conditioned on $v^{\thist} = v$ for any fixed element $v$, the expected change in $\log \rho^{\thist}$ depends only on $\mathcal{R}^{\thist}$. Recall $\mathcal{R}^{\thist}$ is formed by sampling each set $S$ with probability $\kappa_{v} x_S^{\lastt} / \beta$. The expected change is 
	\begin{align}
	        &\expectover{\mathcal{R}^{\thist}}*{\log \left(\frac{\rho^{\thist}}{\beta} + \frac{1}{m}\right) - \log \left(\frac{\rho^{\lastt}}{\beta} + \frac{1}{m}\right) \mid x^{\lastt}, U^{\lastt}, \Upsilon^{\thist}, v^{\thist} = v} \notag \\
	        &= \expectover{\mathcal{R}^{\thist}}*{\log \left(1-\frac{\rho^{\lastt} - \rho^{\thist}}{\rho^{\lastt} + \frac{\beta}{m}} \right) | U^{\lastt}, v^{\thist} = v} \notag \\
	        &\leq - \frac{1}{\rho^{\lastt} + \frac{\beta}{m}} \cdot \expectover{\mathcal{R}^{\thist}}*{\rho^{\lastt} - \rho^{\thist} | U^{\lastt}, v^{\thist} = v}. \label{eq:ut_logapx}
			\intertext{Above, \eqref{eq:ut_logapx} follows from the approximation $\log (1-y) \leq -y$. Expanding the definition of $\rho^{\thist}$ from \eqref{eq:gencost_rhodef}, \eqref{eq:ut_logapx} is bounded by}
	        &= - \frac{1}{\rho^{\lastt} + \frac{\beta}{m}} \cdot \expectover{\mathcal{R}^{\thist}}*{\sum_{u \in U^{\lastt}} \kappa_u \cdot \mathbbm{1}\{u \in U^{\lastt}\setminus U^{\thist}\}  | U^{\lastt}, v^{\thist} = v} \notag \\
	        &= - \frac{1}{\rho^{\lastt} + \frac{\beta}{m}} \cdot \sum_{u \in U^{\lastt}} \kappa_u \cdot \probarg{u \not \in U^{\thist} \mid u \in U^{\lastt}, v^{\thist} = v}{\mathcal{R}^{\thist}} \notag \\
	        &\leq - \frac{1-e^{-1}}{\beta} \cdot \kappa_v \cdot \frac{1}{\rho^{\lastt} + \frac{\beta}{m}} \cdot \sum_{u \in U^{\lastt}} \kappa_u  \cdot \min(X_u^{\lastt}, 1) \label{eq:ut_prob}\\
	        &= -  \frac{1-e^{-1}}{\beta}\cdot \kappa_v \cdot \frac{|U^{\lastt}|}{\rho^{\lastt} + \frac{\beta}{m}} \cdot \expectover{u \sim U^{\lastt}}*{\kappa_u \cdot \min(X_u^{\lastt}, 1)} \label{eq:ut_simplify} .
			\intertext{Step \eqref{eq:ut_prob} is due to the fact that each set $S\in \mathcal{R}^{\thist}$ is sampled independently with probability $\min(\kappa_v x^{\lastt}_S /\beta,1)$, so the probability any given element $u \in U^{\lastt}$ is covered is 
		    \[
    			1 - \prod_{S \ni u} \left(1 - \min\left(\frac{\kappa_v x^{\lastt}_S}{\beta},1\right)\right) \geq 1 - \exp\left\{-\min\left(\frac{\kappa_v}{\beta} X_u^{\lastt},1\right)\right\} 
		    	\stackrel{(**)}{\geq} (1 - e^{-1}) \cdot \min\left(\frac{\kappa_v}{\beta} X_u^{\lastt},1\right).
		    \] 
		   	Above, $(**)$ follows from convexity of the exponential; step \eqref{eq:ut_prob} then follows since $\kappa_v / \beta \leq 1$. Taking the expectation of \eqref{eq:ut_simplify} over $v^{\thist} \sim U^{\lastt}$, and using the fact that $\expectover{v \sim U^{\lastt}}*{\kappa_v} = \rho^{\lastt} / |U^{\lastt}|$, the expected change becomes}
    	    &\expectover{v^{\thist}, \mathcal{R}^{\thist}}*{\log \left(\frac{\rho^{\thist}}{\beta} + \frac{1}{m}\right) - \log \left(\frac{\rho^{\lastt}}{\beta} + \frac{1}{m}\right) \mid x^{\lastt}, U^{\lastt}, \Upsilon^{\thist}} \notag \\
			&\leq -  \frac{1-e^{-1}}{\beta} \cdot \frac{\rho^{\lastt}}{\rho^{\lastt} + \frac{\beta}{m}} \cdot \expectover{u \sim U^{\lastt}}*{\kappa_u \cdot \min(X_u^{\lastt}, 1)} \notag \\
    		&\leq -  \frac{1-e^{-1}}{\beta} \cdot \frac{1}{2} \cdot \expectover{u \sim U^{\lastt}}*{\kappa_u \cdot \min(X_u^{\lastt}, 1)}, \label{eq:ut_lastline}
    \end{align}    
     where \eqref{eq:ut_lastline} follows since $\Upsilon^{\thist}$ holds, so $U^{\lastt}$ is nonempty; if there are uncovered elements then $\rho^{\lastt} >0$, and so $\rho^{\lastt} \geq \beta/m$ by \cref{line:loc_cheapsets}. \qedhere
	\end{proof}
	
	\begin{proof}[Proof of \cref{thm:weighted_cost_poly}]
	The algorithm starts by buying cheap sets $\mathcal{C}^0$ on \cref{line:loc_cheapsets} which cost at most $\beta$ in total.
	
	In every round $\thist$ for which $\Upsilon^\thist$ holds, the expected cost of the sampled sets $\mathcal{R}^\thist$ is $\kappa_{v^\thist} \cdot \langle c, x^{\lastt}\rangle / \beta = \kappa_{v^\thist}$ (by \cref{inv:loc_cost}). The algorithm pays an additional $\kappa_{v^\thist}$ in \cref{line:gencost_backup}, and hence the total expected cost per round is at most $2\cdot \kappa_{v^\thist}$.
	
	Combining \cref{lem:weighted_klchange,lem:weighted_utchange}, and setting the constants $C_1 = 2$ and $C_2 = 4e$, we have
	\begin{align*}
		&\expectover{\substack{v^{\thist}, \mathcal{R}^{\thist}}}*{\Phi(\thist) - \Phi(\lastt) | v^{1}, \ldots, v^{\lastt}, \mathcal{R}^{1}, \ldots, \mathcal{R}^{\lastt}, \Upsilon^{\thist}} \\
		&= \expectover{\substack{v^{\thist}, \mathcal{R}^{\thist}}}*{
			\begin{array}{ll}
				&C_1\left(\wKL{x^*}{x^{\thist}} - \wKL{x^*}{x^{\lastt}}\right) \\
				+ &C_2\cdot \beta \cdot \left(\log \left(\frac{\rho^{\thist}}{\beta} + \frac{1}{m}\right) - \log \left(\frac{\rho^{\lastt}}{\beta} + \frac{1}{m}\right)\right)
			\end{array}
		| v^{1}, \ldots, v^{\lastt}, \mathcal{R}^{1}, \ldots, \mathcal{R}^{\lastt}, \Upsilon^{\thist}} \\
		&\leq - \expectover{\substack{v^{\thist}, \mathcal{R}^{\thist}}}*{2 \cdot \kappa_{v^{\thist}} | v^{1}, \ldots, v^{\lastt}, \mathcal{R}^{1}, \ldots, \mathcal{R}^{\lastt}, \Upsilon^{\thist}},
	\end{align*}
	which cancels the expected change in the algorithm's cost. Since neither the potential $\Phi$ nor the cost paid by the algorithm change during rounds in which $\Upsilon^\thist$ does not hold, for all $\thist$ we have the inequality
	\begin{equation}
		\expectover{\substack{v^{\thist}, \mathcal{R}^{\thist}}}*{\Phi(\thist) - \Phi(\lastt) + c(\alg(\thist)) - c(\alg(\lastt)) | v^{1}, \ldots, v^{\lastt}, \mathcal{R}^{1}, \ldots, \mathcal{R}^{\lastt}} \leq 0. \label{eq:loc_perstepbound}
	\end{equation}

	Summing \eqref{eq:loc_perstepbound} for all $t \in [n]$ yields
	\begin{align*}
		\expectover{\substack{v, \mathcal{R}}}*{\Phi(n) - \Phi(0) + c(\alg(n)) - c(\alg(0))} & \leq 0\\
		\expectover{\substack{v, \mathcal{R}}}*{c(\alg(n))} & \leq \Phi(0) + c(\alg(0)) - \expectover{\substack{v, \mathcal{R}}}*{\Phi(n)}.
		\intertext{Combining the above observation that $c(\alg(0)) \leq \beta$ with \Cref{fact:gencost_phi0} then yields}
		\expectover{\substack{v, \mathcal{R}}}*{c(\alg(n))} & \leq O(\beta \cdot \log (mn)).\qedhere
	\end{align*}
\end{proof}

		\section{Covering Integer Programs}
	\label{sec:cips}

	We show how to generalize our algorithm from \cref{sec:gen_cost} to solve pure covering IPs when the constraints are revealed in random order, which significantly generalizes \roosc. 
	Formally, the random order covering IP problem (\roocip) is to solve
	\begin{align}
		\begin{array}{lll}
			\min_z &\langle c, z \rangle \label{eq:CIP}\\
			\text{s.t.} &A z \geq 1  \\
			&z \in \Z_{\geq 0}^m,
			\end{array}
	\end{align}
	when the rows of $A$ are revealed in random order. Furthermore the solution $z$ can only be incremented monotonically and must always be feasible for the subset of constraints revealed so far. (Note that
        we do not consider box constraints, namely upper-bound constraints of the
        form $z_j \leq d_j$.) 
	We may assume without loss of generality that the entries of $A$ are $a_{ij} \in [0,1]$.

We describe an algorithm which guarantees that every row is covered to extent $1-\dthrsh$, meaning it outputs a solution $z$ with $Az \geq 1-\dthrsh$ (this relaxation is convenient in the proof for technical reasons). With foresight, we set $\dthrsh = (e-1)^{-1}$. It is straightforward to wrap this algorithm in one that buys $\lceil(1-\dthrsh)^{-1}\rceil=3$ copies of every column and truly satisfies the constraints, which only incurs an additional factor of $3$ in the cost.

Once again, by a guess-and-double approach, we assume we know a bound
$\beta$ such that $\lpopt \leq \beta \leq 2 \cdot \lpopt$; here $\lpopt$ is the cost of the optimal solution to the LP relaxation of \eqref{eq:CIP}. Let $z^{\thist}$ be the integer solution held by the algorithm at the end of round $t$. 
Define $\Delta_i^{\thist} := \max(0, 1 - \langle a_i, z^{\thist}\rangle)$ to be the extent to which $i$ remains uncovered at the end of round $t$. This time we redefine $\kappa_i^{\thist} := \Delta_i^{\lastt} \cdot \min_k c_k / a_{ik}$, which becomes the minimum
fractional cost of covering the current deficit for
$i$. 
Finally, for a vector $y$, denote the fractional remainder by $\widetilde y := y - \lfloor y \rfloor$. 

The algorithm once again maintains a fractional vector $x$ which is a guess for the (potentially infeasible) LP solution of
cost $\beta$ to \eqref{eq:CIP}. This time, when the $i^{th}$ row arrives at time $t$ and $\Delta_i^{\lastt} > \dthrsh$ (meaning this row is \emph{not} already covered to extent $1-\dthrsh$), we (a) buy a random number of copies of every
column $j$ with probability proportional to its LP value
$x_j$, (b) increase the value $x_j$
multiplicatively and renormalize, and finally (c) buy a minimum cost cover for row $i$ if necessary.
	
	\begin{algorithm}[H]
	\caption{\textsc{LearnOrCoverCIP}}
	\label{alg:cips}
	\begin{algorithmic}[1]
  		\State Initialize $z^0_j \leftarrow \left\lceil \frac{\beta}{c_j \cdot m}\right\rceil \cdot \mathbbm{1} \left\{c_j \leq \frac{\beta}{m}\right\}$. \label{line:cip_cheapsets}
		\State Initialize $x_j^{0} \leftarrow \frac{\beta}{c_j \cdot m'} \cdot \mathbbm{1}\{ c_j \leq \beta\}$, where $m' = |\{j: c_j \leq \beta\}|$.
		\For{$t=1,2\ldots, n$} 
		\State{$i \leftarrow$ $\thist^{th}$ constraint in the
                  random order.}
        \If {$\Delta_i^{\lastt} > \dthrsh$}  \label{line:cip_alreadycov}
			\State Let $y := \kappa^{\thist}_i \cdot x^{\lastt} / \beta $. \ForEach column $j$, add $z^{\thist}_j \leftarrow z^{\lastt}_j + \lfloor y_j\rfloor  + \Ber(\tilde y_j)$. \label{line:cip_updatez}
			\If {$ \langle a_i, x^{\lastt}\rangle< \Delta_i^{\lastt}$} \label{line:cip_noxupdate}
			\State For every $j$, update $x^{\thist}_j \leftarrow x^{\lastt}_j \cdot \exp\left\{\kappa_i^{\thist} \cdot \frac{a_{ij}}{c_j }\right\}$.
			\State Let $Z^{\thist} = \langle c, x^{\thist} \rangle / \beta$ and normalize $x^{\thist} \leftarrow x^{\thist} / Z^{\thist}$. \label{line:normalize}
			\Else 
			\State $x^{\thist} \leftarrow x^{\lastt}$
			\EndIf
			\State Let $k^* = \argmin_k \frac{c_k}{a_{ik}}$. Add $z^{\thist}_{k^*} \leftarrow z^{\thist}_{k^*} + \left\lceil \frac{\Delta_i^{\lastt}}{a_{ik^*}}\right\rceil$. \label{line:cip_backup}
		\EndIf
		\EndFor 
	\end{algorithmic}
\end{algorithm}

Note that once again, \cref{line:normalize} ensures \cref{inv:loc_cost} holds. The main theorem of this section is:

\begin{theorem}
	\label{thm:cip}
	\nameref{alg:cips} is a polynomial-time randomized algorithm for \roocip which outputs a solution $z$ with expected cost $O(\beta  \cdot \log (mn))$ such that $3z$ is feasible.
\end{theorem}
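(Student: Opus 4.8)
The plan is to follow the proof of \cref{thm:weighted_cost_poly} essentially step by step, lifting every ingredient to the fractional-coverage setting. Write $\mu_i := \min_k c_k/a_{ik}$ for the minimum per-unit cost of covering row $i$, so that $\kappa_i^\thist = \mu_i\Delta_i^\lastt$; since the fixed optimal LP solution $x^*$ of \eqref{eq:CIP} is feasible for the single-row LP $\min\{\langle c, x\rangle : \langle a_i, x\rangle \ge 1,\, x \ge 0\}$ whose optimum is exactly $\mu_i$, we have $\mu_i \le \langle c, x^*\rangle \le \beta$. Let $U^\thist := \{i : \Delta_i^\thist > \dthrsh\}$ be the rows not yet covered to extent $1-\dthrsh$. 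The structural fact that makes the random order work is that every row which has already arrived has deficit at most $\dthrsh$ (if it was larger on arrival, \cref{line:cip_backup} reduced it to $0$; otherwise the row was left untouched; and deficits never increase), so $U^\thist$ consists only of rows yet to arrive. Consequently, conditioned on the history and on the event $\Upsilon^\thist := \{v^\thist \in U^\lastt\}$ that the arriving row is uncovered, $v^\thist$ is uniform over $U^\lastt$ -- precisely the property that powers the random-order argument. Define $\rho^\thist := \sum_{i \in U^\thist} \mu_i \Delta_i^\thist$ and the potential
\[ \Phi(\thist) := C_1 \cdot \wKL{x^*}{x^\thist} + C_2 \cdot \beta \cdot \log\big(\rho^\thist/\beta\big), \]
for absolute constants $C_1, C_2$ to be fixed. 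The initial potential is $O(\beta\log mn)$: the KL term is bounded exactly as in the (generically written) proof of \cref{fact:gencost_phi0}, which applies here because \cref{fact:lpsupport} gives $\text{support}(x^*)\subseteq\text{support}(x^0)$, and the second term is $O(\beta\log n)$ since $\Delta_i^0 = 1$ and $\mu_i\le\beta$ give $\rho^0 = \sum_i\mu_i \le n\beta$.

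For the change in KL divergence during a round with $\Upsilon^\thist$ in which \cref{line:cip_noxupdate} triggers the multiplicative update, I would expand $\wKL{x^*}{x^\thist} - \wKL{x^*}{x^\lastt} = \langle c, x^*\rangle\log Z^\thist - \kappa_i^\thist\langle a_i, x^*\rangle$ and then use $\langle c, x^*\rangle\le\beta$, $\langle a_i, x^*\rangle\ge 1$, the bound $e^y \le 1+(e-1)y$ for $y = \kappa_i^\thist a_{ij}/c_j \in [0,1]$ (valid since $\mu_i \le c_j/a_{ij}$ and $\Delta_i^\lastt \le 1$), $\log(1+y)\le y$, and \cref{inv:loc_cost}, exactly mirroring \cref{lem:weighted_klchange}, to obtain
\[ \expect*{\wKL{x^*}{x^\thist} - \wKL{x^*}{x^\lastt} \mid z^\lastt, x^\lastt, U^\lastt, \Upsilon^\thist} \;\le\; \expectover{i \sim U^\lastt}*{(e-1)\,\kappa_i^\thist\min\big(\langle a_i, x^\lastt\rangle, \Delta_i^\lastt\big) - \kappa_i^\thist}. \]
In the no-update branch the change is $0$, and it is dominated by the right-hand side because $\Delta_i^\lastt > \dthrsh = (e-1)^{-1}$ makes $(e-1)\kappa_i^\thist\Delta_i^\lastt - \kappa_i^\thist \ge 0$.

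The part I expect to be the main obstacle is the change in $\log\rho^\thist$, the analog of \cref{lem:weighted_utchange}. Using $\log(1-y)\le-y$ and the fact that a row's deficit cannot go negative, $\rho^\lastt - \rho^\thist \ge \sum_{u \in U^\lastt}\mu_u\min(\Delta_u^\lastt, W_u)$, where $W_u$ is the random increase in $\langle a_u, z\rangle$ produced by \cref{line:cip_updatez}: a sum of deterministic terms $a_{uj}\lfloor y_j\rfloor$ and independent $[0,1]$-bounded terms $a_{uj}\,\Ber(\tilde y_j)$, with $\expect{W_u} = \kappa_{v^\thist}^\thist\langle a_u, x^\lastt\rangle/\beta$. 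Because coverage is now fractional and driven by a random integral combination of columns, one must control the overshoot incurred when a cheap column covers more of row $u$ than its residual deficit; I would do this with an anti-concentration (reverse-Jensen) bound
\[ \expect{\min(\Delta_u^\lastt, W_u)} \;\ge\; \Omega(1)\cdot\min\big(\Delta_u^\lastt,\, \expect{W_u}\big), \]
proved from $\min(\Delta, W)\ge\Delta(1-e^{-W/\Delta})$, independence of the Bernoullis, and $1-e^{-t}\ge(1-e^{-1})\min(1,t)$. This is exactly where covering only to extent $1-\dthrsh$ earns its keep: since $u \in U^\lastt$ has $\Delta_u^\lastt > \dthrsh$, the overshoot costs only a constant factor, whereas with $\dthrsh = 0$ this constant would degrade with $\Delta_u^\lastt$. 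Dividing by $\rho^\lastt$, averaging over $v^\thist \sim U^\lastt$, and using $\expectover{v \sim U^\lastt}*{\kappa_v^\thist} = \rho^\lastt/|U^\lastt|$ to kill the $\kappa_{v^\thist}^\thist$ prefactor, this gives
\[ \expect*{\log\rho^\thist - \log\rho^\lastt \mid z^\lastt, x^\lastt, U^\lastt, \Upsilon^\thist} \;\le\; -\frac{\Omega(1)}{\beta}\cdot\expectover{u \sim U^\lastt}*{\mu_u\min\big(\Delta_u^\lastt, \langle a_u, x^\lastt\rangle\big)}. \]

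Finally, the expected cost of a round with $\Upsilon^\thist$ (write $i$ for the arriving row) is $O(\kappa_i^\thist)$: \cref{inv:loc_cost} makes the sampled copies in \cref{line:cip_updatez} cost $\kappa_i^\thist$ in expectation, and the backup in \cref{line:cip_backup} costs $c_{k^*}\lceil\Delta_i^\lastt/a_{i k^*}\rceil \le \kappa_i^\thist + c_{k^*} \le e\,\kappa_i^\thist$ (using $c_{k^*} = a_{i k^*}\mu_i \le \mu_i < (e-1)\kappa_i^\thist$ since $\Delta_i^\lastt > \dthrsh$). Combining the two displayed bounds, using $\kappa_u^\thist \le \mu_u$ inside the KL term (as $\Delta_u^\lastt \le 1$), and taking $C_1 \ge e+1$ together with $C_2$ a sufficiently large constant multiple of $C_1$ (here $\dthrsh = (e-1)^{-1}$ is exactly the value for which both the no-update KL case and the anti-concentration constant are favorable), a termwise comparison in $u$ gives $\expect{\Phi(\thist) - \Phi(\lastt) \mid z^\lastt, x^\lastt, U^\lastt, \Upsilon^\thist} \le -\expect{(e+1)\kappa_i^\thist \mid z^\lastt, x^\lastt, U^\lastt, \Upsilon^\thist}$, which cancels the round's cost; since nothing changes when $\Upsilon^\thist$ fails, $\expect{\Phi(\thist) - \Phi(\lastt) + c(\alg(\thist)) - c(\alg(\lastt)) \mid z^\lastt, x^\lastt, U^\lastt} \le 0$. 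Then \cref{lem:exp_pot_change} with $\Phi(0) = O(\beta\log mn)$ bounds the cost through the last step $t^*$ with $\Phi \ge 0$; after $t^*$, nonnegativity of KL forces $\rho^\thist \le \beta$, so the remaining cost is $O\big(\sum_{u \in U^{t^*}}\kappa_u\big) = O\big(\sum_{u \in U^{t^*}}\mu_u\big) = O(\rho^{t^*}/\dthrsh) = O(\beta)$. This yields expected cost $O(\beta\log mn)$; every row ends covered to extent $\ge 1-\dthrsh$ and $3(1-\dthrsh) > 1$, so $3z$ is feasible, and the polynomial running time is immediate from the pseudocode.
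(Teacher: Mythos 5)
Your proposal is correct, and its overall architecture is the same as the paper's proof of \cref{thm:cip}: the same potential $\Phi = C_1\wKL{x^*}{x^{\thist}} + C_2\beta\log(\rho^{\thist}/\beta)$, the same two lemmas (change in KL, change in $\log\rho^{\thist}$) combined termwise against the per-round cost, the same supermartingale conclusion via \cref{lem:exp_pot_change}, and the same $O(\beta)$ accounting after the last time $\Phi\ge 0$. The one genuinely different ingredient is your treatment of the anti-concentration step, which replaces the paper's \cref{lem:ip_expectcov}: the paper proves $\mathbb{E}[\min(W,\Delta)]\ge\alpha\min(\mathbb{E}[W],\Delta)$ by a case split, using Paley--Zygmund when $\mathbb{E}[W]\ge\Delta/3$ and a $(\nicefrac{1}{3},\nicefrac{1}{3})$ contention-resolution scheme for knapsack constraints otherwise, ending with $\alpha=\nicefrac{1}{168}$; you instead use the pointwise bound $\min(W,\Delta)\ge\Delta(1-e^{-W/\Delta})$, independence of the Bernoullis to bound $\mathbb{E}[e^{-W/\Delta}]\le\exp(-e^{-1}\mathbb{E}[W]/\Delta)$ (valid since $b_j\le 1$ and $\Delta\ge\dthrsh$ give $b_j/\Delta\le e-1$, hence $1-e^{-b_j/\Delta}\ge e^{-1}b_j/\Delta$), and then $1-e^{-t}\ge(1-e^{-1})\min(1,t)$. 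This is self-contained, avoids the CRS machinery, and gives the explicit constant $(e-1)/e^2$; it also isolates, as you note, exactly where the relaxed coverage threshold $\dthrsh$ is needed. Two further deviations are benign: you define $\rho^{\thist}$ with current deficits $\mu_i\Delta_i^{\thist}$ rather than the paper's $\kappa_i^{\thist}$ (which carries a $\lastt$ index), a difference of only constant factors since $\Delta_i>\dthrsh$ on $U^{\thist}$; and you make explicit the fact, left implicit in the paper, that any row with deficit above $\dthrsh$ must be unseen, so conditioned on $\Upsilon^{\thist}$ the arriving row is uniform over $U^{\lastt}$ --- a point worth stating, since it is precisely where the random order enters. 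Your per-round cost bound ($(1+e)\kappa_i^{\thist}$ versus the paper's $3\kappa_i^{\thist}$) and the choices of $C_1,C_2$ differ only numerically.
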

	\cref{thm:intro_cip} follows as a corollary, since given any intermediate solution $z^{\thist}$, we can buy the scaled solution $3z^{\thist}$.

	We generalize the proof of \Cref{thm:weighted_cost_poly}. Redefine $x^*$ to be the optimal LP solution to the final, unknown instance \eqref{eq:CIP}, and $U^{\thist} := \{i \mid \Delta_i^{\thist} > \dthrsh\}$ be the elements which are not covered to extent $1-\dthrsh$ at the end of round $t$. With these new versions of $U^{\thist}$ and $\kappa^{\thist}$, the definitions of both $\rho^{\thist}$ and the potential $\Phi$ remain the same as in \eqref{eq:gencost_rhodef} and \eqref{eq:gencost_phi}, except with different constants $C_1$ and $C_2$.
    \begin{align}
			\boxed{
			\Phi(\thist) := C_1 \cdot \wKL{x^*}{x^{\thist}} + C_2 \cdot \beta \cdot  \log \left(\frac{\rho^{\thist}}{\beta} + \frac{1}{m}\right)}
    \end{align}
	
	Once again, we start with a bound on the initial potential.
	\begin{lemma}[Initial Potential]
		\label{fact:cip_phi0}
	The initial potential is bounded as $\Phi(0) = O(\beta\cdot \log (mn))$, and $\Phi(t) \geq -O(\beta\cdot \log m )$ for all rounds $t$.
	\end{lemma}
	The proof is identical verbatim to that of \cref{fact:gencost_phi0}.
	
	It remains to relate the expected decrease in $\Phi$ to the algorithm's cost in every round.	For convenience, let $X_i^{\thist} := \langle a_i, x^{\thist}\rangle$ be the
	amount that $x^{\thist}$ fractionally covers $i$. Define $\Upsilon^{\thist}$ to be the event that for constraint $i^{\thist}$ arriving in round $t$ we have $\Delta^{\lastt}_i > \dthrsh$. The check at \cref{line:cip_alreadycov} ensures that if $\Upsilon^{\thist}$ does not hold, then neither the cost paid by the algorithm nor the potential will change. We focus on the case that $\Upsilon^{\thist}$ occurs, and once again start with the KL divergence.
	
\begin{lemma}[Change in KL]
	\label{lem:ip_klchange}
	For rounds in which $\Upsilon^{\thist}$ holds, the expected change in weighted KL divergence is
	\[
		\expectover{\substack{i^{\thist}, \mathcal{R}^{\thist}}}*{ \wKL{x^*}{x^{\thist}} - \wKL{x^*}{x^{\lastt}} \mid x^{\lastt}, U^{\lastt}, \Upsilon^{\thist}} 
		\leq \expectover{i \sim U^{\lastt}}{(e-1) \cdot \kappa_i^{\thist} \min(X^{\lastt}_i,\Delta_{i}^{\lastt}) - \kappa_i^{\thist}}.
	\]
\end{lemma}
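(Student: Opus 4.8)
The plan is to follow the proof of \cref{lem:weighted_klchange} almost verbatim, with the constant coverage target $1$ of \roosc replaced throughout by the fractional deficit $\Delta_i^{\lastt}$, and with $\kappa_v$ replaced by $\kappa_i^{\thist} = \Delta_i^{\lastt}\cdot \min_k c_k/a_{ik}$. As in that lemma, the crucial structural observation is that the multiplicative-weights update of $x$ on \cref{line:cip_noxupdate,line:normalize} never examines the sampled columns $\mathcal{R}^{\thist}$, so the expected change in $\wKL{x^*}{\cdot}$ is a function of the random arriving constraint $i^{\thist}$ alone; this is what licenses replacing $\expectover{i^{\thist}, \mathcal{R}^{\thist}}$ by $\expectover{i \sim U^{\lastt}}$ on the right-hand side.

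First I would split on the conditional at \cref{line:cip_noxupdate}. If $X_i^{\lastt} := \langle a_i, x^{\lastt}\rangle \geq \Delta_i^{\lastt}$, the algorithm sets $x^{\thist}\leftarrow x^{\lastt}$, so the change in KL divergence is exactly $0$, whereas the claimed upper bound equals $\big((e-1)\Delta_i^{\lastt} - 1\big)\,\kappa_i^{\thist}$, which is nonnegative because event $\Upsilon^{\thist}$ guarantees $\Delta_i^{\lastt} > \dthrsh = (e-1)^{-1}$. Hence the inequality holds trivially in this case, and for the remainder I condition additionally on $X_i^{\lastt} < \Delta_i^{\lastt}$, the case in which $x$ is actually updated.

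In that case, expand $\wKL{x^*}{x^{\thist}} - \wKL{x^*}{x^{\lastt}}$ via \eqref{eq:prelim_kl}. The linear part $\sum_j c_j(x^{\thist}_j - x^{\lastt}_j)$ vanishes by \cref{inv:loc_cost}, and substituting the update rule $x^{\thist}_j = x^{\lastt}_j\,e^{\kappa_i^{\thist} a_{ij}/c_j}/Z^{\thist}$ leaves $\sum_j c_j x^*_j \log(x^{\lastt}_j/x^{\thist}_j) = \langle c, x^*\rangle \log Z^{\thist} - \kappa_i^{\thist}\langle a_i, x^*\rangle$. From here the chain of estimates is: (i) $\log Z^{\thist}\geq 0$, since each factor $e^{\kappa_i^{\thist}a_{ij}/c_j}\geq 1$ and \cref{inv:loc_cost} gives $Z^{\thist}\geq 1$, so $\langle c,x^*\rangle\log Z^{\thist}\leq \beta\log Z^{\thist}$; (ii) feasibility of $x^*$ for $Ax\geq 1$ gives $\langle a_i,x^*\rangle\geq 1$, so $-\kappa_i^{\thist}\langle a_i,x^*\rangle\leq -\kappa_i^{\thist}$; (iii) since $\kappa_i^{\thist}a_{ij}/c_j\leq \Delta_i^{\lastt}\leq 1$ by the definition of $\kappa_i^{\thist}$, the linearization $e^y\leq 1+(e-1)y$ on $[0,1]$ together with \cref{inv:loc_cost} gives $Z^{\thist} = \sum_j \tfrac{c_j}{\beta}x^{\lastt}_j e^{\kappa_i^{\thist}a_{ij}/c_j}\leq 1+(e-1)\tfrac{\kappa_i^{\thist}}{\beta}X_i^{\lastt}$; (iv) $\log(1+y)\leq y$ then yields $\beta\log Z^{\thist}\leq (e-1)\kappa_i^{\thist}X_i^{\lastt}$. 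Combining, the change in KL is at most $(e-1)\kappa_i^{\thist}X_i^{\lastt} - \kappa_i^{\thist}$, which in this case equals $(e-1)\kappa_i^{\thist}\min(X_i^{\lastt},\Delta_i^{\lastt}) - \kappa_i^{\thist}$. Taking the expectation over $i^{\thist}\sim U^{\lastt}$ and merging the two cases by the law of total expectation gives the lemma.

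I expect the main obstacle to be bookkeeping rather than any new idea: one must verify that the scaling choice $\kappa_i^{\thist} = \Delta_i^{\lastt}\min_k c_k/a_{ik}$ makes $\kappa_i^{\thist}a_{ij}/c_j\leq 1$ for \emph{every} column $j$ (so the linearization of $e^y$ applies), and that $\Delta_i^{\lastt}$ consistently takes over the role the constant $1$ played in \cref{lem:weighted_klchange} — in the feasibility step where $\langle a_i,x^*\rangle\geq 1\geq \Delta_i^{\lastt}$, in the no-update case where nonnegativity of the right-hand side rests precisely on the slack $\Delta_i^{\lastt} > (e-1)^{-1}$, and in the final $\min$. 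The one genuinely new feature relative to \roosc is that deficits are fractional, which is exactly why $\kappa_i^{\thist}$ carries the extra factor $\Delta_i^{\lastt}$ so that the per-round potential drop, the sampling cost, and the backup cost all remain at the same scale.
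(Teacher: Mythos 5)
Your proposal is correct and follows essentially the same route as the paper's proof: the same case split on $X_i^{\lastt}$ versus $\Delta_i^{\lastt}$ (with the no-update case handled via $\Delta_i^{\lastt} > (e-1)^{-1}$), the same observation that the KL change is independent of $\mathcal{R}^{\thist}$, and the same chain $\langle c,x^*\rangle\le\beta$, feasibility $\langle a_i,x^*\rangle\ge 1$, $e^y\le 1+(e-1)y$ justified by $\kappa_i^{\thist}a_{ij}/c_j\le\Delta_i^{\lastt}\le 1$, and $\log(1+y)\le y$, combined via the law of total expectation. Your explicit check that $Z^{\thist}\ge 1$ (so that $\langle c,x^*\rangle\log Z^{\thist}\le\beta\log Z^{\thist}$) is a detail the paper leaves implicit but is consistent with it.
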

\begin{proof}
	We break the proof into cases. By the check on \cref{line:cip_noxupdate}, if $X_{i^{\thist}}^{\lastt} \geq \Delta_{i^{\thist}}^{\lastt}$, then the vector $x^{\thist}$ is not updated in round $t$, so the change in KL divergence is 0. This means that 
	\begin{align} 
		&\expectover{\substack{i^{\thist}, \mathcal{R}^{\thist}}}*{ \wKL{x^*}{x^{\thist}} - \wKL{x^*}{x^{\lastt}} \mid x^{\lastt}, U^{\lastt}, \Upsilon^{\thist}, X_{i^{\thist}}^{\lastt}  \geq \Delta_{i^{\thist}}^{\lastt}} \notag \\
		&\leq \expectover{i \sim U^{\lastt}}{(e-1) \cdot \kappa_i^{\thist} \min(X^{\lastt}_i,\Delta_i^{\lastt}) - \kappa_i^{\thist} | X_i^{\lastt} \geq  \Delta_i^{\lastt}} \label{eq:wKL_xigewi}
	\end{align}
	holds trivially, since in this case $\min(X_i^{\lastt},\Delta_i^{\lastt}) = \Delta_i^{\lastt} > \dthrsh = (e-1)^{-1}$ and $\kappa_i^{\thist} \ge 0$. Henceforth we focus on the case $X^{\lastt}_i < \Delta_i^{\lastt}$.
	
	The change in relative entropy depends only on the arriving
        uncovered constraint $i^{\thist}$, not on the randomly chosen
        columns $\mathcal{R}^{\thist}$. Expanding definitions,
	\begin{align}
		& \expectover{\substack{i^{\thist}, \mathcal{R}^{\thist}}}*{\wKL{x^*}{x^{\thist}} - \wKL{x^*}{x^{\lastt}} \mid x^{\lastt}, U^{\lastt}, \Upsilon^{\thist}, X^{\lastt}_i < \Delta_i^{\lastt}} \notag \\
		&= \expectover{i \sim U^{\lastt}}*{\sum_{j} c_j \cdot x^*_j \cdot \log \frac{x^{\lastt}_j}{x^{\thist}_j} | X^{\lastt}_i < \Delta_i^{\lastt}} \notag  \\
		&= \expectover{i \sim U^{\lastt}}*{\sum_{j} c_j \cdot x^*_j \cdot  \log Z^{\thist} - \sum_j c_j \cdot x^*_j \cdot \kappa_i^{\thist} \cdot \frac{a_{ij}}{c_j}  | X^{\lastt}_i < \Delta_i^{\lastt}} \notag \\
		&\leq \expectover{i \sim U^{\lastt}}*{
			 \beta \cdot \log Z^{\thist} - \kappa_i^{\thist} \cdot \sum_j a_{ij} x^*_j  | X^{\lastt}_i < \Delta_i^{\lastt}
		} \label{eq:kl_costofopt} \\
	\intertext{where we used that $\langle c, x^* \rangle \leq \beta$. Expanding the definition of $Z^{\thist}$ and applying the fact that $x^*$ is a feasible solution i.e. $\langle a_{i}, x^* \rangle \geq 1$, we continue to bound \eqref{eq:kl_costofopt} as}
		&\leq \expectover{i \sim U^{\lastt}}*{
			\beta \cdot \log \left(  \frac{1}{\beta} \sum_j c_j x^{\lastt}_j \exp\left( \kappa_i^{\thist} \cdot \frac{a_{ij}}{c_j} \right) \right) - \kappa_i^{\thist}  | X^{\lastt}_i < \Delta_i^{\lastt}
		} \notag \\
		&\leq \expectover{i \sim U^{\lastt}}*{
			\beta \cdot \log \left(  1 + \frac{e-1}{\beta} \cdot \kappa_i^{\thist} \cdot \sum_j a_{ij} x^{\lastt}_j \right) - \kappa_i^{\thist}  | X^{\lastt}_i < \Delta_i^{\lastt}
		} \label{eq:kl_expapprox}. \\
		\intertext{\eqref{eq:kl_expapprox} is derived by applying the
			approximation $e^y \leq 1 + (e-1) y$ for $y \in [0,1]$ and
			the fact that $\langle c, x^{\lastt}\rangle = \beta$ by \cref{inv:loc_cost}; the exponent lies in
			$[0,1]$ because by definition $\kappa_i^{\thist} \cdot a_{ij} / c_j = \Delta_i^{\lastt} \cdot (a_{ij} / c_j) \cdot \min_k (c_k / a_{ik}) \leq 1$ since $\Delta_i^{\lastt} \in [0,1]$. Finally, using the fact that $\log(1+y) \leq y$, we have that \eqref{eq:kl_expapprox} is at most}
		&\leq \expectover{i \sim U^{\lastt}}*{
			(e-1) \cdot \kappa_i^{\thist} \cdot \sum_j a_{ij} x^{\lastt}_j  - \kappa_i^{\thist}  | X^{\lastt}_i < \Delta_i^{\lastt}
		} \notag \\
		&\leq \expectover{i \sim U^{\lastt}}*{
			(e-1) \cdot \kappa_i^{\thist} \cdot \min \left(X^{\lastt}_i, \Delta_i^{\lastt}\right) - \kappa_i^{\thist}  | X^{\lastt}_i < \Delta_i^{\lastt}
		} .\label{eq:wkl_xilwi} 
	\end{align}
	
	The lemma statement follows by combining \eqref{eq:wKL_xigewi} and \eqref{eq:wkl_xilwi} using the law of total expectation.
\end{proof}

We move to bounding the expected change in $\log (\rho^{\thist}/\beta + 1/m)$ provided by updating the solution $z$ on \cref{line:cip_updatez} on the arrival of the random row $i$. Recall that $U^{\thist} = \{i \mid \Delta_i^{\thist} > \dthrsh\}$ are the unseen elements which are at
most covered to extent $1-\gamma$ by $z$.

We will make use of the following fact, which we prove in \cref{sec:appendix}:
\begin{restatable}{fact}{crslem}
	\label{lem:ip_expectcov}
	Given probabilities $p_j$ and coefficients $b_j\in [0,1]$, let
	$W := \sum_j b_j \Ber(p_j)$ be the sum of independent weighted
	Bernoulli random variables.
	Let $\Delta\geq \dthrsh = (e-1)^{-1}$ be some constant.
	Then
	\[
	\expect*{\min\left(W, \Delta \right)} \geq \alpha \cdot \min\left(\expect*{W}, \Delta \right),
	\]
	for a fixed constant $\alpha$ independent of the $p_j$ and $b_j$.
\end{restatable}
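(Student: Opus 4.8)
The plan is to prove the inequality directly, with no case split on the value of $\mu := \expect*{W} = \sum_j b_j p_j$, by passing through an exponential moment rather than a concentration estimate. Write $X_j \sim \Ber(p_j)$ so that $W = \sum_j b_j X_j$; we may assume $\mu > 0$, else both sides vanish.

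First I would record two elementary pointwise facts: $\min(y,\Delta) = \Delta(1 - (1 - y/\Delta)^+)$ for $y \ge 0$, and $(1-t)^+ \le e^{-t}$ for $t \ge 0$. Together they give $\min(W,\Delta) \ge \Delta(1 - e^{-W/\Delta})$ pointwise, so by independence of the $X_j$ we get $\expect*{\min(W,\Delta)} \ge \Delta(1 - \expect*{e^{-W/\Delta}}) = \Delta\bigl(1 - \prod_j(1 - p_j(1 - e^{-b_j/\Delta}))\bigr)$.

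The crux is to linearize each factor $1 - e^{-b_j/\Delta}$, and this is precisely where both hypotheses $b_j \in [0,1]$ and $\Delta \ge \dthrsh$ enter: they force $b_j/\Delta \in [0, 1/\dthrsh]$, and on this fixed interval the concavity of $t \mapsto 1 - e^{-t}$ yields $1 - e^{-b_j/\Delta} \ge \lambda_0 \cdot (b_j/\Delta)$, where $\lambda_0 := \dthrsh(1 - e^{-1/\dthrsh})$ is the (instance-independent, strictly positive) chord slope of $1 - e^{-t}$ across $[0, 1/\dthrsh]$. Substituting this bound and then using $1 - x \le e^{-x}$ termwise, $\prod_j(1 - p_j(1 - e^{-b_j/\Delta})) \le \prod_j e^{-p_j \lambda_0 b_j/\Delta} = e^{-\lambda_0 \mu/\Delta}$, and therefore $\expect*{\min(W,\Delta)} \ge \Delta(1 - e^{-\lambda_0 \mu/\Delta})$.

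Finally I would massage this into the stated form: setting $r := \mu/\Delta$, it suffices to verify $1 - e^{-\lambda_0 r} \ge (1 - e^{-\lambda_0})\min(r,1)$ for every $r > 0$. For $r \ge 1$ this is just monotonicity of $1 - e^{-\lambda_0 r}$ in $r$; for $r \in (0,1)$ it follows because the secant slope $r \mapsto (1 - e^{-\lambda_0 r})/r$ of the concave function $r \mapsto 1 - e^{-\lambda_0 r}$ is non-increasing, hence at least its value at $r = 1$. This gives the claim with the fixed constant $\alpha := 1 - e^{-\lambda_0}$. I do not expect a real obstacle in executing this; the only point of care is that $\Delta$ is allowed to be large, but that merely shrinks the interval $[0, 1/\Delta]$ on which we linearize, so $\lambda_0$ taken at $\Delta = \dthrsh$ remains uniformly valid. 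The genuinely important decision is to resist a Chernoff-type argument: while $\prob{W \ge \mu/2} \ge 1 - e^{-\mu/8}$ is perfectly usable once $\mu$ is bounded below, it degrades to a lower bound of order $\mu^2$ as $\mu \to 0$, whereas the $e^{-W/\Delta}$ route treats all $\mu$ uniformly. (As a sanity check that $b_j \le 1$ is indispensable: a single Bernoulli of weight $\gg \Delta$ firing with tiny probability drives $\expect*{\min(W,\Delta)}/\min(\mu,\Delta)$ arbitrarily close to $0$.)
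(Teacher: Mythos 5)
Your proof is correct, and it takes a genuinely different route from the paper's. The paper argues by a case split on $\mu=\expect*{W}$: when $\mu \geq \Delta/3$ it uses the Paley--Zygmund inequality to get anti-concentration ($\prob{W \geq \Delta/6} \geq 1/28$), and when $\mu < \Delta/3$ it invokes the $(\nicefrac13,\nicefrac13)$-contention resolution scheme for knapsack constraints of Chekuri--Vondr\'ak--Zenklusen to extract a sub-collection of total weight below $\Delta$ whose expected weight is at least $\mu/3$; combining the cases yields $\alpha = 1/168$. You instead avoid any case analysis via the pointwise bound $\min(W,\Delta) \geq \Delta\left(1-e^{-W/\Delta}\right)$, factor the exponential moment using independence, and linearize $1-e^{-b_j/\Delta}$ by the chord of a concave function over $[0,1/\dthrsh]$ — which is exactly where both hypotheses $b_j \in [0,1]$ and $\Delta \geq \dthrsh$ enter, as they must (your counterexample with a heavy rare Bernoulli correctly shows the ratio degenerates otherwise). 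The final elementary step $1-e^{-\lambda_0 r} \geq (1-e^{-\lambda_0})\min(r,1)$ is also sound (monotonicity for $r\geq 1$, decreasing secant slope of a concave function vanishing at $0$ for $r<1$). What your approach buys: it is self-contained (no external CRS machinery, no Paley--Zygmund), treats all values of $\mu$ uniformly rather than through a threshold, and gives a much better explicit constant, $\alpha = 1 - e^{-\lambda_0}$ with $\lambda_0 = \dthrsh\left(1-e^{-1/\dthrsh}\right) = \frac{1-e^{-(e-1)}}{e-1}$, i.e.\ roughly $0.38$ versus the paper's $1/168$. What the paper's route buys is mainly conceptual kinship with rounding arguments (the CRS view), but for this lemma your argument is simpler and quantitatively stronger; since the constant only affects $C_2$ in the potential, either suffices for \cref{lem:ip_rhochange}.
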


We are ready to bound the expected change in $\log (\rho^{\thist}/\beta + 1/m)$. 

\begin{lemma}[Change in $\log \rho^{\thist}$]
	\label{lem:ip_rhochange}
	For rounds in which $\Upsilon^{\thist}$ holds, the expected change in $\log \rho^{\thist}$ is
	\[
		\expectover{\substack{i^{\thist},\mathcal{R}^{\thist}}}*{\log \rho^{\thist} - \log \rho^{\lastt} \mid x^{\lastt}, U^{\lastt}, \Upsilon^{\thist}} \leq - \frac{\alpha}{2\beta}\cdot \expectover{i' \sim U^{\lastt}}*{\kappa_{i'}^{\thist} \cdot \min\left(X_{i'}^{\lastt}, \Delta_{i'}^{\lastt}\right)}
	\]
	where $\alpha$ is a fixed constant.
\end{lemma}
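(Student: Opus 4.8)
The plan is to follow the proof of \cref{lem:weighted_utchange} essentially line for line, with the one coverage estimate there replaced by \cref{lem:ip_expectcov}. Write $\mu_u := \min_k c_k/a_{uk}$, so that $\kappa_u^{\thist} = \Delta_u^{\lastt}\mu_u$ and $\rho^{\lastt} = \sum_{u\in U^{\lastt}}\kappa_u^{\thist}$. Since $z$ is only incremented, $\Delta_u$ is nonincreasing, so $U^{\thist}\subseteq U^{\lastt}$. Condition on the arriving constraint $i^{\thist}=i$ together with $x^{\lastt},U^{\lastt}$; then $\log\rho^{\thist}-\log\rho^{\lastt}$ depends only on the Bernoulli draws in \cref{line:cip_updatez}. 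As in \cref{lem:weighted_utchange}, apply $\log(1-y)\le -y$ to reduce the claim to a lower bound on $\expect{\rho^{\lastt}-\rho^{\thist}}$.

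For that drop, first note $\rho^{\lastt}-\rho^{\thist}\ge\sum_{u\in U^{\lastt}}\mu_u(\Delta_u^{\lastt}-\Delta_u^{\thist})$ (if $u$ leaves $U^{\thist}$ its contribution is the even larger $\mu_u\Delta_u^{\lastt}$). Monotonicity of $z$ gives $\Delta_u^{\lastt}-\Delta_u^{\thist}\ge\min(\Delta_u^{\lastt},\langle a_u,z^{\thist}-z^{\lastt}\rangle)\ge\min(\Delta_u^{\lastt},W_u)$, where $W_u:=\sum_j a_{uj}(\lfloor y_j\rfloor+\Ber(\tilde y_j))$ counts only the coverage bought in \cref{line:cip_updatez} with $y=\kappa_i^{\thist}x^{\lastt}/\beta$ (the backup of \cref{line:cip_backup} only helps). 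Reading each $\lfloor y_j\rfloor$ as that many copies of $\Ber(1)$, $W_u$ is a sum of independent weighted Bernoullis with weights $a_{uj}\in[0,1]$, and $\Delta_u^{\lastt}>\dthrsh$ since $u\in U^{\lastt}$, so \cref{lem:ip_expectcov} yields $\expect{\min(\Delta_u^{\lastt},W_u)}\ge\alpha\min(\expect{W_u},\Delta_u^{\lastt})$. Here $\expect{W_u}=\langle a_u,y\rangle=\tfrac{\kappa_i^{\thist}}{\beta}X_u^{\lastt}$, and $\kappa_i^{\thist}/\beta\le 1$ because $\mu_i\le\langle c,x^*\rangle\le\beta$ (using feasibility $\langle a_i,x^*\rangle\ge1$ and the guess $\langle c,x^*\rangle\le\beta$) and $\Delta_i^{\lastt}\le1$; hence $\min(\expect{W_u},\Delta_u^{\lastt})\ge\tfrac{\kappa_i^{\thist}}{\beta}\min(X_u^{\lastt},\Delta_u^{\lastt})$. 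Summing over $u\in U^{\lastt}$ gives $\expect{\rho^{\lastt}-\rho^{\thist}\mid i^{\thist}=i}\ge\alpha\tfrac{\kappa_i^{\thist}}{\beta}\sum_{u\in U^{\lastt}}\mu_u\min(X_u^{\lastt},\Delta_u^{\lastt})$.

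To conclude, plug this into $\expect{\log\rho^{\thist}-\log\rho^{\lastt}\mid\cdots,i^{\thist}=i}\le -\tfrac{\alpha\kappa_i^{\thist}}{\beta\rho^{\lastt}}\sum_{u\in U^{\lastt}}\mu_u\min(X_u^{\lastt},\Delta_u^{\lastt})$, take the expectation over $i^{\thist}\sim U^{\lastt}$ and use $\expectover{i\sim U^{\lastt}}{\kappa_i^{\thist}}=\rho^{\lastt}/|U^{\lastt}|$ to cancel the $\rho^{\lastt}$ and turn the sum into $\expectover{u\sim U^{\lastt}}{\mu_u\min(X_u^{\lastt},\Delta_u^{\lastt})}$; finally $\mu_u\ge\kappa_u^{\thist}$ (since $\Delta_u^{\lastt}\le1$) gives the stated bound. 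The single new ingredient over \cref{lem:weighted_utchange} is \cref{lem:ip_expectcov}, and I expect the only real care to be in the fractional-deficit bookkeeping: that a constraint leaving $U^{\thist}$ never shrinks the $\rho$-drop, that $\mu_u$ and $\kappa_u^{\thist}=\Delta_u^{\lastt}\mu_u$ are interchangeable up to constants precisely because $\Delta_u^{\lastt}\in(\dthrsh,1]$ for $u\in U^{\lastt}$, and that the hypothesis $\Delta\ge\dthrsh$ of \cref{lem:ip_expectcov} is met exactly because $U^{\lastt}$ only contains rows of deficit above $\dthrsh$ — which is the reason the algorithm settles for covering each row to extent $1-\dthrsh$. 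The analytic steps ($\log(1-y)\le -y$, plus the convexity estimate inside \cref{lem:ip_expectcov}) are as in the set-cover case.
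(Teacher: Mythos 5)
Your proposal is correct and follows essentially the same route as the paper's proof: reduce via $\log(1-y)\le -y$ to lower-bounding the drop $\rho^{\lastt}-\rho^{\thist}$, apply \cref{lem:ip_expectcov} to the weighted-Bernoulli coverage $W_u$ (valid since $\Delta_u^{\lastt}>\dthrsh$ for $u\in U^{\lastt}$), pull out $\kappa_i^{\thist}/\beta\le 1$, and average over $i\sim U^{\lastt}$ using $\expectover{i\sim U^{\lastt}}{\kappa_i^{\thist}}=\rho^{\lastt}/|U^{\lastt}|$. The only differences are cosmetic — you carry $\mu_u$ and convert to $\kappa_u^{\thist}$ at the end (the paper does this substitution when decomposing the $\rho$-drop) and you spell out $\kappa_i^{\thist}\le\beta$ and the handling of rows leaving $U^{\thist}$ slightly more explicitly.
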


\begin{proof}	
	Conditioned on $i^{\thist} = i$, the expected change in $\log \rho^{\thist}$ depends only on $\mathcal{R}^{\thist}$.
	\begin{align}
		& \expectover{i^{\thist}, \mathcal{R}^{\thist}}*{\left(\frac{\rho^{\thist}}{\beta} + \frac{1}{m}\right) - \log \left(\frac{\rho^{\lastt}}{\beta} + \frac{1}{m}\right) \mid x^{\lastt}, U^{\lastt}, \Upsilon^{\thist}, i^{\thist} = i} \notag \\
		&= \expectover{\mathcal{R}^{\thist}}*{\log \left(1-\frac{\rho^{\lastt} - \rho^{\thist}}{\rho^{\lastt} + \frac{\beta}{m}} \right) | U^{\lastt}, i^{\thist} = i} \notag \\
		&\leq - \frac{1}{\rho^{\lastt} + \frac{\beta}{m}} \cdot \expectover{\mathcal{R}^{\thist}}*{\rho^{\lastt} - \rho^{\thist} | U^{\lastt}, i^{\thist} = i} \label{eq:ip_ut_logapx}. \\
		\intertext{Above, follows from the approximation
			$\log (1-y) \leq -y$. Expanding definitions again, and using the fact that $\kappa_{i'}^{\lastt} - \kappa_{i'}^{\thist} =  \min_k(c_k / a_{i'k})\cdot (\Delta_{i'}^{\lastt} - \Delta_{i'}^{\thist}) \geq \kappa_{i'}^{\thist} (\Delta_{i'}^{\lastt} - \Delta_{i'}^{\thist})$, we further bound \eqref{eq:ip_ut_logapx} by}
		&\leq -  \frac{1}{\rho^{\lastt} + \frac{\beta}{m}} \cdot \expectover{\mathcal{R}^{\thist}}*{\sum_{i' \in U^{\lastt}} \kappa_{i'}^{\thist} \cdot (\Delta_{i'}^{\lastt} - \Delta_{i'}^{\thist})}  \notag\\
		&= -  \frac{1}{\rho^{\lastt} + \frac{\beta}{m}}  \cdot \sum_{i' \in U^{\lastt}} \kappa_{i'}^{\thist} \cdot \expectover{\mathcal{R}^{\thist}}{\Delta_{i'}^{\lastt} - \Delta_{i'}^{\thist}} \notag \\
		&= -  \frac{1}{\rho^{\lastt} + \frac{\beta}{m}} \cdot  \sum_{i' \in U^{\lastt}} \kappa_{i'}^{\thist} \cdot \alpha \cdot  \min\left(\frac{\kappa^{\thist}_{i}}{\beta} \cdot X_{i'}^{\lastt}, \: \Delta_{i'}^{\lastt} \right). \label{eq:ip_randcovg} \\
		\intertext{To understand this last step \eqref{eq:ip_randcovg}, note that $\Delta_{i'}^{\lastt} - \Delta_{i'}^{\thist} = \min(\sum_j a_{i'j} \lfloor y_j \rfloor + \sum_j a_{i'j} \Ber(\widetilde y) ,\: \Delta_{i'}^{\lastt} )$. 
		By the definition of $y$, the first term inside the minimum has expectation $\frac{\kappa_{i}^{\thist}}{\beta} \cdot X^{\lastt}_{i'}$, and since $\Upsilon^{\thist}$ holds we have $\Delta_{i'}^{\lastt} > \dthrsh$.
		Therefore applying \Cref{lem:ip_expectcov} gives \eqref{eq:ip_randcovg} (where $\alpha$ is the constant given by the lemma). 
		Since $\kappa_i^{\thist} / \beta \leq 1$, we bound \eqref{eq:ip_randcovg} with}
		&\leq - \frac{\alpha}{\beta} \cdot \kappa_i^{\thist} \cdot  \frac{1}{\rho^{\lastt} + \frac{\beta}{m}} \cdot \sum_{i' \in U^{\lastt}} \kappa_{i'}^{\thist} \cdot  \min\left(X_{i'}^{\lastt}, \Delta_{i'}^{\lastt}\right) \notag \\ 
		&= - \frac{\alpha}{\beta} \cdot \kappa_i^{\thist} \cdot \frac{|U^{\lastt}|}{\rho^{\lastt} + \frac{\beta}{m}} \cdot \expectover{i' \sim U^{\lastt}}*{\kappa_{i'}^{\thist} \cdot \min \left(X_{i'}^{\lastt}, \Delta_{i'}^{\lastt})\right)}. \label{eq:ip_kcrs}
	\end{align}
	
	Taking the expectation of \eqref{eq:ip_kcrs} over $i \sim U^{\lastt}$, and using the fact that $\expectover{i \sim U^{\lastt}}*{\kappa_i^{\thist}} = \rho^{\lastt} / U^{\lastt}$, the expected change in $\log \rho^{\thist}$ becomes
	\begin{align*}
		&\expectover{i^{\thist}, \mathcal{R}^{\thist}}*{\log \rho^{\thist} - \log \rho^{\lastt} \mid x^{\lastt}, U^{\lastt}, \Upsilon^{\thist}} 
		\leq -  \frac{\alpha}{\beta} \cdot\frac{\rho^{\lastt}}{\rho^{\lastt} + \frac{\beta}{m}} \cdot \expectover{i' \sim U^{\lastt}}*{\kappa_{i'}^{\thist} \cdot \min\left(X_{i'}^{\lastt}, \Delta_{i'}^{\lastt}\right)} \\
		&\leq -  \frac{\alpha}{2\beta} \cdot \expectover{i' \sim U^{\lastt}}*{\kappa_{i'}^{\thist} \cdot \min\left(X_{i'}^{\lastt}, \Delta_{i'}^{\lastt}\right)},
	\end{align*}    
	as desired. In the last step, we used the fact that $\Upsilon^{\thist}$ holds, so $U^{\lastt}$ is nonempty, and in particular $\rho^{\lastt} \geq \frac{\beta}{m}$. To see this, note that $\Delta_i^{\lastt} \geq \dthrsh \geq \nicefrac{1}{2}$ for any $i \in U^{\lastt}$. If \cref{line:cip_cheapsets} did not already cover constraint $i$, then for all $j \in [m]$, we have $\beta \cdot a_{ij} / (c_j \cdot m) \leq \nicefrac{1}{2}$ which means that $c_j / a_{ij} \geq 2 \beta / m$ and thus
    \[
        \rho^{\lastt} \geq \kappa_i^{\lastt} = \Delta_i^{\lastt} \cdot \min_j \frac{c_j}{a_{ij}} \geq \gamma \cdot \frac{2\beta}{m} \geq \frac{\beta}{m}. \qedhere
    \]
\end{proof}	

We may now combine the two previous lemmas as before.	
\begin{proof}[Proof of \cref{thm:cip}]
We start with the cost of \Cref{line:cip_cheapsets}. For each $j$ which the algorithm initializes to $z_j^0 > 0$, it buys $\left\lceil \frac{\beta}{m \cdot c_j}\right\rceil$ copies, which costs
\[
    c_j \cdot \left\lceil \frac{\beta}{m \cdot c_j}\right\rceil \leq c_j \cdot \left( \frac{\beta}{m \cdot c_j} + 1\right) = \frac{\beta}{m} + c_j \leq \frac{2\beta}{m},
\]
since it only buys $j$ for which $c_j \leq \frac{\beta}{m}$. So overall the total cost of \Cref{line:cip_cheapsets} is at most $2\beta$. 

In the round in which constraint $i$ arrives, the expected cost of sampling is $\frac{\kappa_i^{\thist}}{\beta} \langle c, x^{\lastt}\rangle = \kappa_i^{\thist}$ (by \cref{inv:loc_cost}). The algorithm pays an additional \[ c_{k^*} \left\lceil \frac{\Delta_i^{\lastt}}{a_{ik^*}}\right\rceil = c_{k^*} \left\lceil \frac{\kappa_i^{\thist}}{c_{k^*}}\right\rceil \leq 2 \kappa_i^{\thist}\] in \cref{line:gencost_backup}, where this upper bound holds because $\frac{\kappa_i^{\thist}}{c_{k^*}} = \frac{\Delta_i^{\lastt}}{a_{ik^*}}\geq \nicefrac{1}{2}$, since $\Delta_i^{\lastt} \geq \dthrsh \geq \nicefrac{1}{2}$ and $a_{ik^*} \leq 1$. 
Hence the total expected cost per round is at most $3 \cdot \kappa_i^{\thist}$.

Combining \cref{lem:ip_klchange} and \cref{lem:ip_rhochange} and
choosing $C_1 = 3$ and $C_2 = 6(e-1) / \alpha$, we have
\begin{align*}
	&\expectover{\substack{i^{\thist}, \mathcal{R}^{\thist}}}*{\Phi(t) - \Phi(\lastt) | i^{1}, \ldots, i^{\lastt}, \mathcal{R}^{1}, \ldots, \mathcal{R}^{\lastt}, \Upsilon^{\thist}} \\
	& = \expectover{\substack{i^{\thist}, \mathcal{R}^{\thist}}}*{
		\begin{array}{ll}
			&C_1 \cdot \left(\wKL{x^*}{x^{\thist}} - \wKL{x^*}{x^{\lastt}}\right) \\
			+ & C_2 \cdot \beta \cdot \left(\log \left(\frac{\rho^{\thist}}{\beta} + \frac{1}{m} \right) - \log \left(\frac{\rho^{\lastt}}{\beta} + \frac{1}{m}\right)\right)
		\end{array}
		| i^{1}, \ldots, i^{\lastt}, \mathcal{R}^{1}, \ldots, \mathcal{R}^{\lastt}, \Upsilon^{\thist}} \\
	&\leq - \expectover{\substack{i^{\thist}, \mathcal{R}^{\thist}}}*{3 \cdot \kappa_{i^{\thist}}^{\thist} | i^{1}, \ldots, i^{\lastt}, \mathcal{R}^{1}, \ldots, \mathcal{R}^{\lastt}, \Upsilon^{\thist}},
\end{align*}
which cancels the expected change in the algorithm's cost. Since neither the potential $\Phi$ nor the cost paid by the algorithm change during rounds in which $\Upsilon^\thist$ does not hold, for all $\thist$ we have the inequality
\[
	\expectover{\substack{i^{\thist}, \mathcal{R}^{\thist}}}*{\Phi(t) - \Phi(\lastt) + c(\alg(t)) - c(\alg(\lastt)) | i^{1}, \ldots, i^{\lastt}, \mathcal{R}^{1}, \ldots, \mathcal{R}^{\lastt}} \leq 0.
\]
Summing over all $t \in [n]$ gives
\begin{align*}
	\expectover{\substack{v, \mathcal{R}}}*{\Phi(n) - \Phi(0) + c(\alg(n)) - c(\alg(0))} & \leq 0\\
	\expectover{\substack{v, \mathcal{R}}}*{c(\alg(n))} & \leq \Phi(0) + c(\alg(0)) - \expectover{\substack{v, \mathcal{R}}}*{\Phi(n)}.
	\intertext{Combining the observation that $c(\alg(0)) \leq 2\beta$ (due to \cref{line:cip_cheapsets}) with \Cref{fact:cip_phi0} then yields}
	\expectover{\substack{v, \mathcal{R}}}*{c(\alg(n))} & \leq O(\beta \cdot \log (mn)). \qedhere
\end{align*}
\end{proof}

\section{Lower Bounds}
\label{sec:lbs}

We turn to showing lower bounds for \setcov and related problems in the random order model.
The lower bounds for \roosc are proven via basic probabilistic and combinatorial arguments.
We also show hardness for a batched version of \roosc, which has implications for related problems.

\subsection{Lower Bounds for \roosc}

\label{sec:ro_lb}

We start with information theoretic lower bounds for the RO setting.

\begin{theorem} \label{thm:fraclogLB}
	The competitive ratio of any randomized fractional or integral algorithm for \roosc is $\Omega(\log n)$.
\end{theorem}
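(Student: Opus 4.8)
The plan is to prove this lower bound via Yao's minimax principle. First I would exhibit a distribution $\mathcal{D}$ over \roosc instances, each with $\opt = O(1)$, and reduce the claim to showing that every \emph{deterministic} (fractional) algorithm incurs expected cost $\Omega(\log n)$ when the instance is drawn from $\mathcal{D}$ and its elements are then presented in uniformly random order. The instances in the support of $\mathcal{D}$ would be built over $d = \Theta(\log n)$ geometrically shrinking ``scales'': at scale $i$ the not-yet-covered part of the ground set is a random subset of size $\approx n/2^i$, and the sets relevant to scale $i$ form a fresh, deliberately ``incoherent'' family (so that no single cheap uniform fractional combination covers that scale), while a single hidden cover of cost $O(1)$ is shared across \emph{all} scales. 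Sharing that cheap cover is what keeps $\opt = O(1)$ even though the algorithm is forced to confront each of the $\Theta(\log n)$ scales without foresight.

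The engine of the argument is a ``deficit'' lower bound on any online fractional algorithm: since weight placed on a set can never be removed, the algorithm's total cost is at least $\sum_{t}\Delta_t$, where the sum ranges over rounds $t$ in which $v^t$ arrives uncovered and $\Delta_t := 1 - \sum_{S \ni v^t} x_S^{t-1}$ is its coverage deficit. As $v^t$ is a uniformly random not-yet-arrived element, a short computation gives $\mathbb{E}[\Delta_t \cdot \mathbbm{1}\{v^t \text{ uncovered}\} \mid \mathcal{F}_{t-1}] \geq 1 - |R^{t-1}|^{-1}\sum_{v \in R^{t-1}} X_v^{t-1}$, i.e.\ the expected per-round progress is $\Omega(1)$ unless the already-committed fractional mass sits on sets that heavily intersect the remaining elements $R^{t-1}$. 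The construction is arranged so that, conditioned on the scale-$i$ challenge not yet being revealed, the algorithm's committed mass is (with constant probability) essentially ``wasted'' in this sense: it could not have anticipated which sets matter at scale $i$, and the hidden cheap cover is still statistically indistinguishable from decoys. Each scale thus contributes $\Omega(1)$ in expectation, and summing over the $\Theta(\log n)$ scales yields $\Omega(\log n)$. For \emph{integral} algorithms the same skeleton works with the cruder (and easier) bound ``each uncovered arrival costs at least $1$.''

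The main obstacle -- and where the combinatorial care concentrates -- is designing the instance family so that the $O(1)$-cost hidden cover remains information-theoretically hidden for $\Omega(\log n)$ rounds (otherwise the algorithm simply learns it and stops after $O(1)$ total cost), while \emph{simultaneously} ruling out any cheap oblivious fractional solution: these requirements pull against each other, since a pool of many indistinguishable ``large'' decoy sets is exactly what makes uniform spreading cheap. Reconciling them -- e.g.\ via decoy families regular enough to camouflage the true cover yet irregular enough that some element at every scale lies in only a few genuinely useful sets -- is the crux of the construction. The remaining technical point is conditioning cleanly on the random arrival order, so that the events ``scale $i$ is still hidden at the moment the algorithm must act'' line up and their $\Omega(1)$ contributions can legitimately be summed.
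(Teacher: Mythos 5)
Your skeleton (Yao's principle, $\Theta(\log n)$ scales, a ``deficit'' accounting $\sum_t \Delta_t$ for a monotone fractional solution, and conditioning on the arrival order so each scale contributes $\Omega(1)$) matches the shape of the paper's argument. But the proof has a genuine gap: the hard instance itself is never constructed, and you explicitly flag its construction as the unresolved ``crux.'' Without a concrete distribution, the central claim --- that at each scale the previously committed fractional mass is, with constant probability, mostly wasted while $\opt$ stays $O(1)$ --- is an unverified hope, not a lemma. The tension you describe (decoys must hide the cheap cover, yet cheap uniform spreading must be ruled out) is exactly what you would need to resolve, and the proposal offers only desiderata (``regular enough to camouflage \ldots irregular enough that some element lies in few useful sets''), not a construction satisfying them.

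The missing idea is the paper's nested random-halving family, which dissolves that tension rather than reconciling it. Take $m=2^\ell$ sets and build elements in $\ell$ rounds: the type-$i$ elements are contained in exactly the sets of $\mathcal{S}_{i-1}$, where $\mathcal{S}_i$ is a uniformly random half of $\mathcal{S}_{i-1}$; the single set surviving all halvings contains every element, so $\opt=1$. No ``incoherence'' or information-theoretic indistinguishability is needed: deferring the random choice of $\mathcal{S}_i$, every unit of mass a lazy algorithm has placed on $\mathcal{S}_k$ (to cover the previous scale, where it totals exactly $1$) survives into $\mathcal{S}_i$ with probability $\nicefrac{1}{2}$, so the expected surviving coverage is at most $\nicefrac{1}{2}$ \emph{regardless of how the mass is distributed}, forcing expected additional cost $\geq \nicefrac{1}{2}$ at that scale. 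The random-order bookkeeping is then the event $\mathfrak{g}(i)$ that some type-$i$ element precedes all elements of types $j>i$, which has probability $>\nicefrac{1}{2}$, and summing over the $\ell=\Theta(\log n)$ types gives the bound; integral algorithms are covered because the optimum is integral. Your per-round deficit inequality is fine as far as it goes, but it is the construction --- not the accounting --- that carries the proof, and that is what is absent here.
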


\begin{proof}	
	Consider the following instance of \roosc in which $m = 2^\ell$ and $n =2^\ell - 1$. Construct the instance recursively in $\ell$ rounds. Define the sub collection of sets $\mathcal{S}_0 = \mathcal{S}$, i.e. initially all the sets. For each round $i$ from $1$ to $\ell$, do: (a) create $2^{\ell-i}$ new elements and add each to every set in $\mathcal{S}_{i-1}$, and (b) choose $\mathcal{S}_{i}$ to be a uniformly random subcollection of $\mathcal{S}_{i-1}$ of size $|\mathcal{S}_{i-1}|/2$.
	
	By Yao's principle, it suffices to lower bound the cost of any fixed deterministic algorithm $A$ that maintains a monotone LP solution in random order. Let $x^t$ be the fractional solution of $A$ at the end of round $t$. Assume $A$ is lazy in the sense that in every round and for every coordinate $x_S$ that is incremented in that round, setting that coordinate to $x_S - \epsilon$ is infeasible for all $\epsilon> 0$ with respect to the elements observed up until and including this round. 
	
	We refer to the elements added in round $i$ as the type $i$ elements. 
	Let $\mathfrak{g}(i)$ be the event that at least one element of that type arrives in random order before any element of type $j$ for $j > i$. 
	Note that $\prob{\mathfrak{g}(i)} = 2^{l-i} / (2^{l-i+1} - 1)> \nicefrac{1}{2}$ for all $i$. 
	
	Also define $c(A,i)$ be the cost paid by the algorithm to cover the first element of type $i$ that arrives in random order (potentially $0$). 
	Conditioned on $\mathfrak{g}(i)$ holding, we claim the expected cost $c(A,i)$ is at least $\nicefrac{1}{2}$. 
	To see this, first suppose that $i$ is the first type for which $\mathfrak{g}(i)$ holds; then this is the first element to arrive, and so $\sum_{S \in \mathcal{S}_0} x^t_S = 0$ beforehand and $\sum_{S \in \mathcal{S}_i} x^t_S = 1$ afterward, and so $c(A,i) = 1 \geq 1/2$.
	
	Otherwise, let $k < i$ be the last type for which $\mathfrak{g}(k)$ held. 
	Since $A$ is lazy, at the time $t$ before the first element of type $i$ arrived, we had $\sum_{S \in \mathcal{S}_k} x^t_S = 1$. 
	By the construction of the instance and the fact that $i >k$, the collection $\mathcal{S}_i$ consists of a uniformly random subset of $\mathcal{S}_k$ of size at most $|\mathcal{S}_k|/2$. 
	Deferring the random choice of this collection $\mathcal{S}_i$ until this point, we have that $\expect{\sum_{S \in \mathcal{S}_i} x^t_S} \leq \nicefrac{1}{2}$. 
	Hence $\displaystyle \expect*{c(A,i) | \mathfrak{g}(i)} \geq \nicefrac{1}{2}$.

	To conclude, the optimal solution consists of the vector that has $1$ on the one coordinate $S$ containing all the elements and $0$ elsewhere, whereas
	\[
		\expect{c(A)} \geq \sum_{i} \prob{\mathfrak{g}(i)} \cdot \expect*{c(A,i) | \mathfrak{g}(i)} > \frac{\ell}{4} = \Omega(\log n). 
	\]
	Since the optimum is integral, the competitive ratio lower bound also holds for integral algorithms.
\end{proof}

We emphasize that this set system has a VC dimension of $2$, which rules out improved algorithms for set systems of small VC dimension in this setting.

\begin{theorem}
    The competitive ratio of any randomized algorithm for \roosc is $\Omega\left(\frac{\log  m}{\log \log m} \right)$ even when $m \gg n$.
\end{theorem}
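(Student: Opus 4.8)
The plan is to invoke Yao's principle with a hard distribution over unit‑cost instances that has $n=\Theta(\log m/\log\log m)$ elements, $m$ sets, and optimum cost exactly $1$, on which every deterministic online algorithm pays $\Omega(\log m/\log\log m)$ in expectation over both the instance and the random arrival order. Concretely, set $n:=\lceil \log m/\log\log m\rceil$ and $p:=1/(4n)$, and let the universe be $[n]$. Draw $\star\in[m]$ uniformly at random, put $S_\star:=[n]$, and for every $i\neq\star$ let $S_i$ contain each element of $[n]$ independently with probability $p$. Then the optimum has cost $1$ (achieved by $S_\star$ alone) and $n\ll m$. Because the $n$ elements play symmetric roles, the random arrival order is, for our purposes, equivalent to saying that the set of elements revealed by round $s$ is a uniformly random $s$-subset.

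\textbf{The symmetry lemma.} Let $V_s$ denote the (ordered) set of elements revealed through round $s$, and let $C_s:=\{i: V_s\subseteq S_i\}$ be the sets "consistent" with what has been seen. The information available to the algorithm after round $s$ is exactly $V_s$ together with, for each revealed element, the list of sets containing it. I would show that, conditioned on this information, the posterior of $\star$ is uniform on $C_s$: this is a one‑line Bayes calculation, since conditioned on $\star=i$ the observed incidence pattern of column‑set $S_i$ has likelihood $\mathbf{1}[i\in C_s]\cdot p^{s}$, which for $i\in C_s$ does not depend on $i$ (the remaining factors multiply to a quantity independent of $i$). Consequently, if a deterministic algorithm buys $k_s$ sets during round $s$, the probability that one of them equals $S_\star$ is at most $k_s/|C_s|$.

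\textbf{Assembling the bound.} Let $T_0:=\lceil \log m/(10\log n)\rceil$; one checks $T_0=\Theta(\log m/\log\log m)$ and $T_0\le n$ for large $m$. A Chernoff bound plus a union bound over $s\le T_0$ gives that with probability $1-o(1)$ we have $|C_s|\ge m^{1/3}$ for all $s\le T_0$ (indeed $\mathbb{E}|C_s|\ge (m-1)p^{T_0}\ge m^{1/3}$ by the choice of $T_0$); condition on this event. Now either the algorithm buys at least $T_0$ sets during the first $T_0$ rounds, in which case it has already paid $\Omega(T_0)$; or it buys fewer than $T_0$ sets there, and then by the symmetry lemma it buys $S_\star$ within the first $T_0$ rounds with probability at most $2T_0/m^{1/3}=o(1)$. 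In this typical case every set the algorithm owns is one of the random $S_i$'s, and conditioned on the history each such set contains any not‑yet‑revealed element independently with probability $p$; hence the element $v_t$ arriving in round $t\le T_0$ has already been covered with probability at most $(\sum_{s<t}k_s)\,p< T_0\,p<1/8$. By linearity of expectation and Markov's inequality, with probability at least $1/2$ at least $T_0/2$ of the first $T_0$ arrivals are uncovered, so the algorithm must buy a set in each such round. Combining the cases, the expected cost is $\Omega(T_0)=\Omega(\log m/\log\log m)$; as the optimum has cost $1$, Yao's principle yields the claimed lower bound for randomized algorithms for \roosc.

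\textbf{Main obstacle.} The heart of the argument is the symmetry lemma: the assertion that observing many element/set incidences during the first $\approx\log m/\log\log m$ rounds still leaves the planted set $S_\star$ information‑theoretically hidden inside a large pool $C_s$, so the algorithm cannot avoid covering almost every early arrival with a fresh, essentially useless set. The remaining delicate point is parameter balancing — $n$ large enough that $T_0\le n$, yet small enough that $\log m/\log n=\Theta(\log m/\log\log m)$, while $|C_s|$ stays $\gg T_0$ throughout the first $T_0$ rounds — all of which works out with $n=\Theta(\log m/\log\log m)$ but should be verified carefully.
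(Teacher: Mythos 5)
Your proposal is correct in outline, but it takes a genuinely different route from the paper. The paper's proof (following Proposition 4.2 of Alon et al.) is purely combinatorial: it takes $\mathcal{S} = \binom{[10r^2]}{r}$, all $r$-subsets of a ground set of size $10r^2$, lets the adversary pick a uniformly random $r$-subset as the universe, and observes that a lazy deterministic algorithm owns at most $r$ sets of size $r$, hence covers at most $r^2$ of the $10r^2$ potential elements; so each arrival (sampled without replacement) is uncovered with probability at least $4/5$, forcing $\Omega(r)$ purchases against $\opt = 1$, with $r = \Omega(\log m/\log\log m)$. No posterior computation, concentration, or conditioning is needed. You instead plant a hidden full set among $m-1$ i.i.d.\ sparse random sets and argue information-theoretically that the planted set stays hidden: your symmetry lemma (posterior of $\star$ uniform on the consistent pool $C_s$) is correct, though the likelihood bookkeeping is slightly off as stated --- given $\star = i$ the column-$i$ pattern is forced (likelihood $1$ on the event $i \in C_s$), and it is the \emph{other} consistent columns that contribute the common factor, which is what makes the posterior uniform. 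The rest of your argument works, but needs more care than the sketch admits: conditioning on the events ``fewer than $T_0$ purchases so far'' and ``has not bought $S_\star$'' perturbs the distribution of the owned sets' unrevealed incidences (an owned set could secretly be $\star$ and then covers everything), so the clean execution is via a stopping-time/union bound over the first $T_0$ purchases plus the $T_0\, m^{-1/3}$ and $T_0\, p$ error terms, exactly as you gesture at. In short: your construction is sound and self-contained but heavier; the paper's buys the same bound with a two-line counting argument, at the price of using the specific ``all $r$-subsets'' system.
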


The following proof uses the construction in Proposition 4.2 of \cite{alon2003online}; they take the product of this construction with another to show a stronger bound for the adversarial order setting. It is a simple observation that the part of the construction we use  gives a bound even in random order, but we include the proof for completeness.

\begin{proof}
	Given integer parameter $r$, let $\mathcal{S} = \binom{[10r^2]}{r}$ be the set of all subsets of $[10r^2]$ of size $r$. The adversary chooses $U$ to be a random subset of $[10r^2]$ of size $r$ and reveals it in random order.
	
	By Yao's principle, it again suffices to bound the performance of any deterministic algorithm $A$, and we may again assume that $A$ is lazy. By deferring randomness, the adversary is equivalent to one which randomly selects an element $r$ times without replacement from $[10r^2]$. Since the algorithm selects at most $r$ sets each of size at most $r$, the number of elements of $[10r^2]$ covered by $A$ is at most $[r^2]$, and hence every element chosen by the adversary has probability at least $4/5$ of being uncovered on arrival. Hence the algorithm selects at least $4r/5$ sets in expectation, whereas $\opt$ consists of the single set covering the $r$ elements of the adversary, so the competitive ratio of $A$ is $\Omega(r)$. The claim follows by noting that $r = \Omega(\log m / \log \log m)$.
\end{proof}

\subsection{Performance of \cite{buchbinder2009online} in Random Order}

\label{sec:bn_lb}

In this section we argue that the algorithm of \cite{buchbinder2009online} has a performance of $\Omega(\log m \log n)$ for \roosc in general. One instance demonstrating this bound is the so called upper triangular instance $\Delta = (U_{\Delta}, \mathcal{S}_\Delta)$ for which $n=m$ and which we now define. Let the sets $\mathcal{S}_{\Delta} = \{S_1, \ldots, S_n \}$ be fixed. 
Choose a random permutation $\pi \in S_n$. 
Then for every $i=1, \ldots, m$, let $S_{\pi(i)} = \{ n-i+1, \ldots, n\}$, since it will be convenient for elements to appear in \cref{fig:permutedtriangular} in descending order.

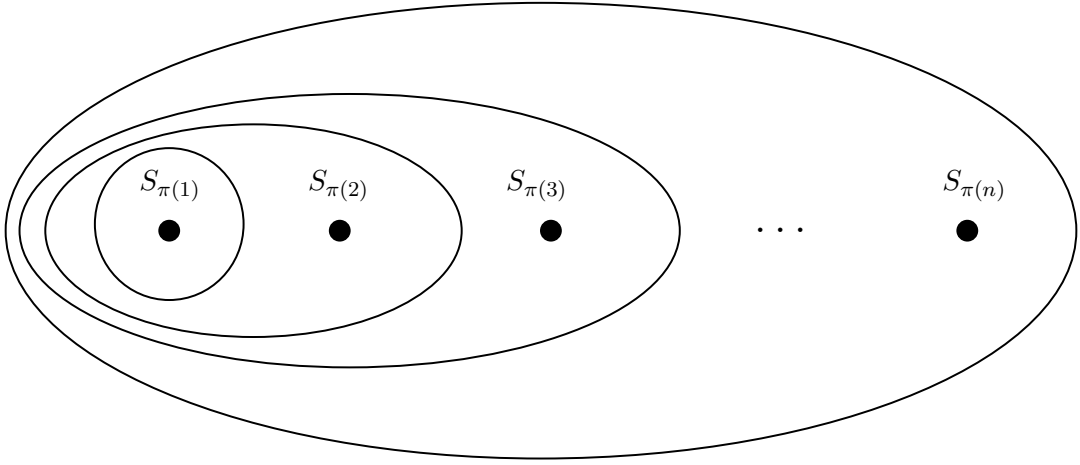
\begin{figure}[H]
	\centering
	\tikzset{every picture/.style={line width=0.75pt}} 

\begin{tikzpicture}[x=0.75pt,y=0.75pt,yscale=-1,xscale=1]
	
	\draw   (75,151.67) .. controls (75,130.5) and (91.79,113.33) .. (112.5,113.33) .. controls (133.21,113.33) and (150,130.5) .. (150,151.67) .. controls (150,172.84) and (133.21,190) .. (112.5,190) .. controls (91.79,190) and (75,172.84) .. (75,151.67) -- cycle ;
	\draw   (50,155) .. controls (50,125.36) and (97.01,101.33) .. (155,101.33) .. controls (212.99,101.33) and (260,125.36) .. (260,155) .. controls (260,184.64) and (212.99,208.67) .. (155,208.67) .. controls (97.01,208.67) and (50,184.64) .. (50,155) -- cycle ;
	\draw   (37,155) .. controls (37,116.89) and (111.54,86) .. (203.5,86) .. controls (295.46,86) and (370,116.89) .. (370,155) .. controls (370,193.11) and (295.46,224) .. (203.5,224) .. controls (111.54,224) and (37,193.11) .. (37,155) -- cycle ;
	\draw   (30,155) .. controls (30,91.49) and (150.88,40) .. (300,40) .. controls (449.12,40) and (570,91.49) .. (570,155) .. controls (570,218.51) and (449.12,270) .. (300,270) .. controls (150.88,270) and (30,218.51) .. (30,155) -- cycle ;
	\draw  [fill={rgb, 255:red, 0; green, 0; blue, 0 }  ,fill opacity=1 ] (107.5,155) .. controls (107.5,152.24) and (109.74,150) .. (112.5,150) .. controls (115.26,150) and (117.5,152.24) .. (117.5,155) .. controls (117.5,157.76) and (115.26,160) .. (112.5,160) .. controls (109.74,160) and (107.5,157.76) .. (107.5,155) -- cycle ;
	\draw  [fill={rgb, 255:red, 0; green, 0; blue, 0 }  ,fill opacity=1 ] (193.5,155) .. controls (193.5,152.24) and (195.74,150) .. (198.5,150) .. controls (201.26,150) and (203.5,152.24) .. (203.5,155) .. controls (203.5,157.76) and (201.26,160) .. (198.5,160) .. controls (195.74,160) and (193.5,157.76) .. (193.5,155) -- cycle ;
	\draw  [fill={rgb, 255:red, 0; green, 0; blue, 0 }  ,fill opacity=1 ] (300,155) .. controls (300,152.24) and (302.24,150) .. (305,150) .. controls (307.76,150) and (310,152.24) .. (310,155) .. controls (310,157.76) and (307.76,160) .. (305,160) .. controls (302.24,160) and (300,157.76) .. (300,155) -- cycle ;
	\draw  [fill={rgb, 255:red, 0; green, 0; blue, 0 }  ,fill opacity=1 ] (510,155) .. controls (510,152.24) and (512.24,150) .. (515,150) .. controls (517.76,150) and (520,152.24) .. (520,155) .. controls (520,157.76) and (517.76,160) .. (515,160) .. controls (512.24,160) and (510,157.76) .. (510,155) -- cycle ;
	
	\draw (406,152) node [anchor=north west][inner sep=0.75pt]  [font=\LARGE]  {$\dotsc $};
	\draw (96,123) node [anchor=north west][inner sep=0.75pt]    {$S_{\pi(1)}$};
	\draw (281,123) node [anchor=north west][inner sep=0.75pt]    {$S_{\pi(3)}$};
	\draw (501,123) node [anchor=north west][inner sep=0.75pt]    {$S_{\pi(n)}$};
	\draw (181,123) node [anchor=north west][inner sep=0.75pt]    {$S_{\pi(2)}$};

\end{tikzpicture}
	\caption{Tight instance for \cite{buchbinder2009online} in random order.  \label{fig:permutedtriangular}}
\end{figure}

\begin{claim}
	\cite{buchbinder2009online} is $\Omega(\log m \log n)$-competitive on the instance $\Delta = (U_\Delta, \mathcal{S}_\Delta)$ in RO.
\end{claim}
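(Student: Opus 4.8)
The plan is to analyze the Buchbinder--Naor algorithm on $\Delta$ in its standard form: it maintains an online fractional set cover by multiplicative updates (when an arriving element $v$ is fractionally uncovered, i.e.\ $\sum_{S\ni v}x_S<1$, it repeatedly sets $x_S\leftarrow x_S(1+1/c_S)+\tfrac{1}{|\{S':v\in S'\}|\cdot c_S}$ for all $S\ni v$ until $v$ is fractionally covered), and it rounds online by drawing $K=\Theta(\log n)$ i.i.d.\ thresholds per set and buying $S$ once $x_S$ exceeds one of them (with a deterministic backup purchase in the rare event an element is still uncovered after its thresholds have been checked). Since $c\equiv 1$, $\opt$ buys the single set $S_{\pi(n)}=U_\Delta$ and $m=n$, so it suffices to show $\mathbb{E}[\#\text{sets bought}]=\Omega(\log^2 n)$. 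As $\mathbb{P}[S\text{ bought}]=1-(1-x_S^{\mathrm{final}})^K\ge(1-e^{-1})\min(1,Kx_S^{\mathrm{final}})$ (the same estimate as $(**)$ in the excerpt), it is enough to prove $\mathbb{E}\big[\sum_S \min(1,Kx_S^{\mathrm{final}})\big]=\Omega(K\log n)$, i.e.\ that the fractional solution is ``spread out'' across $\Omega(\log n)$ scales.

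The key structural fact I would establish is that on the nested instance $\Delta$ the fractional solution has a simple staircase form. Call an arriving element a \emph{record low} if it is strictly smaller than every previously arrived element. By induction on arrivals, an element triggers a multiplicative update \emph{exactly} when it is a record low, and after record lows $r_1>r_2>\cdots>r_u$ have arrived the solution satisfies $x_{S_{\pi(i)}}=\Theta(1/r_j)$ for every $S_{\pi(i)}$ whose smallest element $n-i+1$ lies in $(r_{j+1},r_j]$ (and $x_{S_{\pi(i)}}=0$ when $n-i+1>r_1$); the one subtle computation is that, because the additive nudge on the $e$ sets containing a record low $e$ raises their total by $1$ in a single ``while'' iteration, each record low triggers just one iteration and sends the active top tier from value $\Theta(1/r_{\text{prev}})$ to value $\Theta(1/e)$. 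Consequently $x_{S_{\pi(i)}}^{\mathrm{final}}=\Theta(1/\rho_i)$, where $\rho_i$ is the smallest record low of value at least $n-i+1$; in particular, writing $\widehat S_\ell:=S_{\pi(n-\ell+1)}$ (whose elements are exactly $\{\ell,\ell+1,\dots,n\}$), $x_{\widehat S_\ell}^{\mathrm{final}}=\Theta(1/\rho)$ where $\rho$ is the smallest record low with value $\ge\ell$.

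To finish, fix any $\ell$ with $C\log n\le\ell\le\sqrt n$ for a suitable constant $C$. An elementary observation is that there is a record low with value in $[\ell,2\ell]$ if and only if the first element of $\{1,\dots,2\ell\}$ to arrive has value $\ge\ell$, which in random order happens with probability $(\ell+1)/(2\ell)>\tfrac12$; on that event $\rho\in[\ell,2\ell]$ and hence $x_{\widehat S_\ell}^{\mathrm{final}}=\Omega(1/\ell)$, so $\mathbb{E}[\min(1,Kx_{\widehat S_\ell}^{\mathrm{final}})]=\Omega(\min(1,K/\ell))=\Omega(K/\ell)$ (using $\ell\ge C\log n\gtrsim K$). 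The sets $\widehat S_\ell$ are distinct across $\ell$, so summing over the $\Theta(\log n)$ scales $\ell\in[C\log n,\sqrt n]$ gives $\mathbb{E}\big[\sum_S\min(1,Kx_S^{\mathrm{final}})\big]\ge\Omega(K)\sum_{\ell=C\log n}^{\sqrt n}\tfrac1\ell=\Omega(K\log n)=\Omega(\log^2 n)$, which yields the claimed $\Omega(\log m\log n)$ competitive ratio.

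I expect the main obstacle to be making the staircase induction fully rigorous for the precise update rule chosen --- one must verify that running the multiplicative-plus-additive update to fractional feasibility carries a step function to a step function with the predicted $\Theta(1/r)$ heights (the hidden constants depend on the exact rule but only cost a constant factor in the end) --- together with ruling out any interaction with the online rounding: here this is clean because the fractional variables evolve independently of the rounding, and $x_{S_{\pi(n)}}$, the unique set covering element $1$, only reaches $1$ once element $1$ arrives, which is necessarily the last record low, so the record lows of value $\ge\ell$ are all processed before the fractional dynamics terminate. A minor loose end, absorbed because we only sum over $\ell\le\sqrt n$, is the probability-$O(1/\sqrt n)$ event that the very first arrival already has value below $\ell$, on which $\widehat S_\ell$ is never updated.
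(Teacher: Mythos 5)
Your proposal is correct in substance and reaches the $\Omega(\log m\log n)$ bound by a genuinely different accounting than the paper's. The paper first replaces the fractional dynamics by the equivalent rule ``raise all sets containing the arriving uncovered element uniformly until their total is $1$,'' so the active tier has value exactly $1/i$ after leading (record-low) element $i$; it then charges the rounded cost to consecutive pairs of leading indices, using the exact probability $\Pr[X_i,\ k(i)=k]=1/(k(k-1))$ together with the increment $i\bigl(\min(\log n/i,1)-\min(\log n/k,1)\bigr)$, and evaluates a double sum. You instead analyze the actual multiplicative-plus-additive update, prove a staircase characterization of the final fractional solution ($x_{\widehat S_\ell}^{\mathrm{final}}=\Theta(1/\rho)$ with $\rho$ the smallest record low $\ge \ell$), and lower-bound the expected fractional mass scale by scale via the event ``some record low lies in $[\ell,2\ell]$,'' which has probability $(\ell+1)/(2\ell)>\tfrac12$; a harmonic sum over $\Theta(\log n)$ scales then yields $\Omega(K\log n)$ rounded sets against $\opt=1$. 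Your per-scale event argument is arguably more elementary and robust than the paper's exact increment computation, at the price of having to verify the staircase lemma for the genuine update rule, which you correctly flag as the main thing to nail down.

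One small correction to that structural step: it is \emph{not} true that an element triggers an update exactly when it is a record low --- a record low can arrive already fractionally covered (e.g.\ with record lows $n,n-1,n-2$ the third one satisfies $(n-2)\left(\tfrac{2}{n}+\tfrac{1}{n-1}\right)>1$), so ``each record low triggers just one iteration'' should read ``at most one.'' This does not damage your argument: if a record low $r$ does trigger, the additive term alone puts every set containing $r$ at value at least $1/r$; if it does not trigger, the non-triggering condition certifies that the common value of those sets was already at least $1/r$; and since an update is only applied when the tier value is below $1/r$, the post-update value stays below $3/r$. Hence $x_{\widehat S_\ell}^{\mathrm{final}}\in[1/\rho,\,3/\rho]$ in all cases, and in particular the one inequality you actually use, $x_{\widehat S_\ell}^{\mathrm{final}}\ge 1/(2\ell)$ on the good event, survives, so the conclusion stands.
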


\begin{proof}
	
	The final solution $\mathcal{T}$ output by \cite{buchbinder2009online} on this instance is equivalent to that of the following algorithm which waits until the end of the sequence to buy all of its sets.
	Maintain a (monotone) LP solution $x$ whose coordinates are indexed by sets. 
	Every time an uncovered element $i$ arrives, for all $j\geq i$ increase the weights of all sets $x_{\pi(j)}$ uniformly until $\sum_{j \geq i} x_{\pi(j)} = 1$. 
	Finally, at the end of the element sequence, sample each set with probability $\min(\log n \cdot x_S, 1)$. 
	We claim that this produces a solution of cost $\Omega(\log^2 n)$ in expectation on the instance above when elements are presented in random order, whereas the optimal solution consists of only the one set $S_1$.
	
	Let $X_i$ be the event that $i$ arrives before any element $j$ with $j<i$. 
	This corresponds to $i$ appearing in the fewest sets of any element seen thus far; we call such an element \textit{leading}. 
	Let $k(i) = \min\{k \mid X_k, \ k>i \}$ be the most recent leading index before $i$, if one exists.
	Note that if $X_i$ occurs and $k(i) = k$, then the expected increase in the size of the final solution due to $X_i$ is exactly $ i \cdot \left(\min\left(\frac{\log n}{i}, 1 \right)-\min\left(\frac{\log n}{k}, 1 \right)\right)$. 
	(If $X_i$ occurs but $k(i)$ does not exist, then $i$ is the first leading element and so the expected cost increase in the final solution is $i \cdot \min\left(\frac{\log n}{i}, 1 \right)$.)

	What is $\prob{X_i, \ k(i) = k}$ for some $i < k$? 
	It is precisely the probability that the random arrival order induces an order on only the elements $k, k-1, \ldots, i, \ldots, 1$ which puts $k$ first and $i$ second, which is $1/(k(k-1))$. 
	
	Thus the total expected size of the final solution $\mathcal{T}$ can be bounded by
	\begin{align*}
		\expect{\lvert \mathcal{T} \rvert} &= \sum_{i=1}^{n-1}  \sum_{k = i + 1}^n \prob{X_i, \ k(i) = k} \cdot i \left(\min\left(\frac{\log n}{i}, 1 \right)-\min\left(\frac{\log n}{k}, 1 \right)\right) \\
		&\geq \sum_{i=1}^{n-1}  \sum_{k = i+1}^n \frac{1}{k(k-1)} \cdot i \left(\min\left(\frac{\log n}{i}, 1 \right)-\min\left(\frac{\log n}{k}, 1 \right)\right) \\
		&\geq \log n \cdot \sum_{i= \log n}^{n-1}  \sum_{k =i+ 1}^n \frac{1}{k(k-1)} \cdot i \left(\frac{1}{i} - \frac{1}{k}\right) \\
		&\geq \log n \cdot \Omega(\log n - \log \log n) \\
		&= \Omega(\log^2 n). \qedhere
	\end{align*}
\end{proof}

\subsection{Lower Bounds for Extensions}

\label{sec:lb_ext}

We study lower bounds for several extensions of \roosc. Our starting point is a lower bound for the batched version of the problem. Here the input is specified by a set system $(U, \mathcal{S})$ as before, along with a partition of $U$ into batches $B_1, B_2, \ldots B_b$. For simplicity, we assume all batches have the same size $s$. The batches are revealed one-by-one in their entirety to the algorithm, in uniform random order. After the arrival of a batch, the algorithm must select sets to buy to cover all the elements of the batch. 

Using this lower bound, we derive as a corollary a lower bound for the random order version of \subcov defined in \cite{gupta2020online}. It is tempting to use the method of \cref{thm:weighted_cost_poly} to improve their competitive ratio of $O(\log n \log (t \cdot f(N) / \fmin))$ in RO (we refer the reader to \cite{gupta2020online} for the definitions of these parameters). We show that removing a log from the bound is not possible in general.

\begin{theorem}
		\label{lem:batched}
		The competitive ratio of any polynomial-time randomized algorithm on batched \roosc with $b$ batches of size $s$ is $\Omega(\log b \log s)$ unless $\NP \subseteq \BPP$.
\end{theorem}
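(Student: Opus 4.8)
The plan is to reduce from the hardness of (offline) \setcov, mimicking the approach of \cite{korman2004use} but in a batched, random-order setting. The key idea is to build an instance in which each batch, in isolation, forces the algorithm to (essentially) solve an $\Omega(\log s)$-hard set cover instance, and where there are $b$ (nearly) independent such "copies" that the random order distributes over time, costing an additional $\Omega(\log b)$ factor. Concretely, I would take a hard \setcov instance $(V, \mathcal{F})$ on a ground set of size $s$ with optimum $k$, for which no polynomial-time algorithm achieves better than an $\Omega(\log s)$-approximation unless $\NP \subseteq \BPP$; I take $b$ disjoint copies of the universe $V_1, \dots, V_b$, make the $\ell$-th batch $B_\ell = V_\ell$, and let the set system be the "direct sum": a set of $\mathcal{S}$ is an arbitrary choice, one set per coordinate, of sets from $\mathcal{F}$ (so that a single set of $\mathcal{S}$ simultaneously covers one $\mathcal{F}$-set worth of elements in each copy). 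The optimum then costs $k$: use $k$ sets, one for each of the $k$ sets in the offline optimum, combined across all coordinates. I would pick the product construction so the set system is exactly the one in the hardness proof of \cite{korman2004use}.

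The crux is the random-order amplification. When batch $B_\ell$ arrives at (random) time $t$, the algorithm has already committed to some collection, but because the batches correspond to disjoint coordinates and the "product" structure decouples them, whatever sets the algorithm bought for earlier batches are (in a suitable sense) uninformative about which $\mathcal{F}$-sets it should use in coordinate $\ell$. So upon seeing $B_\ell$ the algorithm must, in polynomial time, cover the hard instance in coordinate $\ell$, paying $\Omega(k \log s)$ in expectation for that batch (by the offline hardness, via Yao — fixing the randomness and averaging). Summing over the $b$ batches naively gives only $\Omega(k b \log s)$ vs. optimum $k$, i.e. ratio $\Omega(b \log s)$, which is the wrong bound. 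To get $\Omega(\log b \log s)$ I instead make the $b$ copies share structure so that after covering the first batch the algorithm has learned enough to cover a $1/2$ fraction of the remaining ones cheaply, then after the next "leading" batch another half, and so on — this is the recursive/upper-triangular trick from \cref{thm:fraclogLB} and \cref{fig:permutedtriangular}, layered on top of the set-cover hardness gadget. The right construction is the product of the recursive instance from \cref{thm:fraclogLB} (contributing the $\log b$) with the \setcov-hardness gadget (contributing the $\log s$), exactly paralleling how \cite{alon2003online} takes a product of two constructions; I would follow \cite{korman2004use}'s proof, which is built for precisely this kind of layered lower bound.

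In more detail, the steps are: (1) state the offline \setcov hardness I invoke (optimum $k$, no poly-time $o(\log s)$-approximation unless $\NP \subseteq \BPP$), and fix a deterministic poly-time algorithm $A$ via Yao's principle against the product distribution; (2) describe the product instance — $b$ "levels," each level contributing an $\mathcal{F}$-coordinate of universe size $\Theta(s)$, with the sets of $\mathcal{S}$ being tuples and with the recursive halving of the active subcollection between levels, so that $|\mathcal{S}|$ and $|U|=bs$ are polynomial and $\copt = k$; (3) define, for each level $i$, the event $\mathfrak{g}(i)$ that level $i$'s batch arrives before all higher levels' batches (so $\prob{\mathfrak{g}(i)} > 1/2$, exactly as in \cref{thm:fraclogLB}); (4) argue that conditioned on $\mathfrak{g}(i)$, when level $i$'s batch arrives the algorithm's prior purchases cover at most a negligible fraction of level $i$'s coordinate (because only the $\le k$-ish sets forced by the last leading level are "aligned" with level $i$, and a random halving kills them), so $A$ must cover a fresh hard \setcov instance there and pays $\Omega(k \log s / \log\log s)$ — or the clean $\Omega(\log s)$ bound if we use the stronger \cite{feige1998threshold}-style hardness; (5) sum: $\expect{c(A)} \ge \sum_i \prob{\mathfrak{g}(i)} \cdot \Omega(k\log s) = \Omega(b \cdot k \log s)$ — wait, this again overshoots, so in fact the levels must be set up so that $\mathfrak{g}(i)$ for only $\Theta(\log b)$ indices contributes, via the geometric shrinkage $|B_i| \sim s/2^i$ and $b \sim 2^{\text{(number of levels)}}$; then $\expect{c(A)} \ge \Omega(k \log b \log s)$ against $\copt = k$.

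\textbf{Main obstacle.} The delicate point is step (4): making the "product" rigorous so that purchases made for earlier batches genuinely do not help on a later batch's hard instance, even though the algorithm is adaptive and sees the batches' full contents. This requires the set system to be structured so that the adaptivity across levels is neutralized — either by a random relabeling of $\mathcal{F}$ within each level (so earlier choices are independent of the hard instance revealed later), or by the recursive-halving argument of \cref{thm:fraclogLB} applied at the level of \emph{sets} rather than elements. Balancing this against keeping $m$ polynomial (the number of tuples could blow up) is the real engineering, and is exactly why the statement is phrased with an $\NP\subseteq\BPP$ caveat and matches \cite{korman2004use}'s construction; I would import that construction essentially verbatim and only re-do the random-order bookkeeping.
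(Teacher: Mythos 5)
Your high-level outline (product of the randomly-labelled triangular structure with a hard offline \setcov gadget, Yao's principle, and an $\NP\subseteq\BPP$ conclusion ``following \cite{korman2004use}'') is indeed the paper's route, but the two steps where you actually try to produce the $\Omega(\log b\log s)$ bound have genuine gaps. First, your mechanism for the $\log b$ factor does not work as described: your per-batch accounting via the events $\mathfrak{g}(i)$ gives $\Omega(bk\log s)$, and the patch you propose---geometrically shrinking batch sizes $|B_i|\sim s/2^i$ so that only $\Theta(\log b)$ levels ``contribute''---contradicts the problem statement (all batches have the same size $s$) and is not backed by an argument. In the paper all $b$ batches are identical copies of $H$, and the $\log b$ factor comes from a purely information-theoretic lemma (\cref{lem:productLB}): one first shows that WLOG the algorithm buys only sets projecting onto $\opt(H)$ (via an explicit matching/liveness argument), and then that for \emph{each} set of $\opt(H)$ the expected number of copies bought over the random batch order satisfies the harmonic recurrence $C_N\geq H_N/2=\Omega(\log b)$, because the copy currently held stops being live after a random number of future batches. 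Nothing in your sketch plays the role of this recurrence, and your ``a random halving kills them'' step is not made precise.

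Second, you invoke computational hardness \emph{per batch} (``upon seeing $B_\ell$ the algorithm must, in polynomial time, cover the hard instance in coordinate $\ell$, paying $\Omega(k\log s)$ in expectation \ldots by the offline hardness''). Hardness of approximation does not lower-bound the cost of a fixed algorithm on the specific (random) sub-instances it faces, so this step is invalid as stated. The paper instead keeps the computational and information-theoretic parts separate: the $\log s$ factor enters only through the gap of the Raz reduction (\cref{lem:sattosetcover}); assuming an $o(\log b\log s)$-competitive algorithm, one runs it polynomially many times on $\Delta\times H^{\psi}$, estimates its expected cost by Hoeffding, and uses the gap between the \SAT and non-\SAT cases (the latter lower-bounded via \cref{lem:productLB}) to decide \SAT in $\BPP$. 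Finally, your ``direct sum'' set system (one $\mathcal{F}$-set per coordinate, i.e.\ tuples) has $|\mathcal{F}|^{b}$ sets, which is superpolynomial; the paper's product $S_{ij}=S_i\times S_j$ keeps $m=NM'$ polynomial. You flag both of these last issues as obstacles to be engineered away, but resolving them is exactly the content of the proof, so the proposal as written does not yet constitute one.
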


	We follow the proof of \cite[Theorem~2.3.4]{korman2004use}, which demonstrates that there is no randomized $o(\log m \log n)$-competitive polynomial-time algorithm for (adversarial order) \osetcov unless $\NP \subseteq \BPP$.
	We adapt the argument to account for random order.

	Consider the following product of the upper triangular instance from \cref{sec:bn_lb} together with an arbitrary instance of (offline) \setcov $H$.
	In particular, take $\Delta = (U_\Delta, \mathcal{S}_\Delta)$ to be the upper triangular instance on $N$ elements, and let $H = (U_H, \mathcal{S}_H)$ denote a set cover instance, where $U_H = [N']$ and $|\mathcal{S}_H| = M'$. Note that $\Delta$ is a random instance, where the randomness is over the choice of label permutation $\pi$.
	Define product instance $\Delta \times H = (U_{\Delta \times H}, \mathcal{S}_{\Delta \times H})$ by
	\begin{align*}
		U_{\Delta \times H} &= \{(i,j) \in U_\Delta \times U_H \} \\ 
		\mathcal{S}_{\Delta \times H} & = \{S_{ij} = S_i \times S_j: (S_i, S_j) \in \mathcal{S}_\Delta \times \mathcal{S}_H \}.
	\end{align*}
	Observe that $|U_{\Delta \times H}| = N N'$ and $| \mathcal{S}_{\Delta \times H} | = N M'$. 

	Each copy of $H$ is a batch of this instance, so that batches of elements $B_1, \ldots, B_N$ are given by $B_i = \{(i,j): j \in U_H\}$. Thus the parameters of the batched \roosc instance are $b = N$ and $s = N'$.
	These batches $B_i$ will arrive in uniformly random order according to some permutation $\sigma \in S_N$. 
	For a randomized algorithm $A$, let 
	\[
		C(A(\Delta\times H)) := \expectover{\substack{\sigma, \pi \in S_N \\ \mathcal{R}}}{c(A(\Delta \times H))}
	\]
	denote the expected cost of $A$ on the instance $\Delta \times H$, where $\mathcal{R}$ denotes the randomness of $A$.

	We begin by establishing an information-theoretic lower bound, whose proof we defer to \cref{sec:appendix}.
	
	\begin{restatable}{lemma}{productLB}
		\label{lem:productLB}
		Let $A$ be a randomized algorithm for batched \roosc.
		Then on the instance $\Delta \times H$,
		\[
		C(A(\Delta \times H)) \geq \frac{1}{2}  \cdot |\opt(H)| \cdot \log N.
		\]
	\end{restatable}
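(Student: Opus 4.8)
The plan is to invoke Yao's principle and then argue that the randomness of the batch order forces $A$ to (essentially) re-solve a fresh copy of $H$ about $\log N$ times; this is the random-order analogue of the adversarial construction of \cite[Theorem~2.3.4]{korman2004use} and of the $\Omega(\log^2 n)$ computation in \cref{sec:bn_lb}, with the left-to-right minima of the random arrival order playing the role of the adversary's adaptive revelations.

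\textbf{Setup and a one-line accounting fact.} By Yao's principle it suffices to lower bound the expected cost of a fixed deterministic algorithm $A$, over the batch order $\sigma$ and the label permutation $\pi$ of the triangular factor. Since $A$ only ever sees, for an arrived batch, the \emph{unordered} collection of $\Delta\times H$-sets meeting it, it is convenient to regard the columns of $\mathcal{S}_\Delta$ as anonymous ``slots'' carrying a uniformly random level in $[N]$, where the slot of level $\ell$ meets batch $B_{i'}$ iff $i'\ge\ell$; a bought set $S_{ij}$ \emph{reaches} $B_{i'}$ iff its slot's level is at most $i'$. After the round processing $B_{i'}$, every pair $(i',j)$ is covered, so the $H$-parts of the bought sets reaching $B_{i'}$ form a set cover of $U_H$; hence at least $|\opt(H)|$ distinct bought sets reach $B_{i'}$ by then. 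Writing $N_{i'}$ for the number of sets $A$ bought before processing $B_{i'}$ that already reach $B_{i'}$, this forces $A$ to buy at least $(|\opt(H)|-N_{i'})^{+}$ new sets in that round, so
\[ c(A)\;\ge\;\sum_{i'=1}^{N}\bigl(|\opt(H)|-N_{i'}\bigr)^{+}. \]

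\textbf{Records.} Call $B_{i'}$ a \emph{record} if it arrives before all of $B_1,\dots,B_{i'-1}$; this has probability $1/i'$, so the expected number of records is $\sum_{i'=1}^{N}1/i'\ge\ln N$, and records arrive with strictly decreasing index. The crucial property is that before processing a record $B_{i'}$, every previously bought set was bought while processing a batch of index strictly larger than $i'$, at a time when $A$ had no information separating the levels of the slots meeting that batch; hence, conditioned on the history and on $B_{i'}$ being a record, each such set's level is uniform over a prefix $\{1,\dots,\mu\}$ with $\mu>i'$ (or is known to exceed $i'$), \emph{independently} of which $H$-part $A$ attached to it. Consequently only a uniformly-spread, hence typically deficient, sub-collection of the $H$-cover that $A$ bought for an earlier record reaches $B_{i'}$ --- and a uniformly-spread proper sub-collection of a set cover is, with good probability, not a set cover, for \emph{any} $H$. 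One then shows the ``frontier'' (the $|\opt(H)|$-th smallest slot-level among all bought sets) drops only by a constant factor per record, so that $N_{i'}\le\tfrac34|\opt(H)|$ on a constant fraction of the $\Omega(\log N)$ records, each of which therefore forces $\Omega(|\opt(H)|)$ new purchases. Plugging this into the displayed inequality and using linearity of expectation (and the trivial remark that if $A$ ever buys $\gg|\opt(H)|\log N$ sets in total we are already done, which rules out pre-buying) gives $C(A(\Delta\times H))\ge\tfrac12|\opt(H)|\log N$.

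\textbf{Main obstacle.} Everything except the control of the $N_{i'}$'s is routine (Yao, the harmonic sum, the accounting fact). The technical heart is proving, against an \emph{arbitrary} offline instance $H$, that $\Omega(\log N)$ of the records find their batch ``mostly uncovered'' --- equivalently, bounding how far down the frontier can move at each record and coupling this across the $O(\log N)$ records --- and simultaneously ruling out that $A$ cheaply pre-covers future batches. The latter works only because the triangular factor hides from $A$ exactly which slots reach down to small indices, so any collection of sets $A$ commits to for an easier batch lands on conditionally-uniform levels; quantifying this interaction between the hidden random levels of $\Delta$ and the combinatorics of $H$, with the right constants, is the delicate part.
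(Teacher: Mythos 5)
Your outer scaffolding is fine (Yao's principle, the accounting inequality $c(A)\ge\sum_{i'}(|\opt(H)|-N_{i'})^{+}$, and the harmonic-sum count of records), but the lemma's actual content is the step you defer: the claim that the ``frontier'' drops only by a constant factor per record, so that $N_{i'}\le\tfrac34|\opt(H)|$ on a constant fraction of the $\Omega(\log N)$ records. That is not a routine verification, and you acknowledge you have not proved it. Concretely: (i) the conditional-uniformity statement needs care, since the incidence information revealed by arrived batches gives threshold information on slot levels, the levels of distinct slots are dependent (they form a permutation), and conditioning on the record event further tilts them; (ii) an adaptive algorithm can hedge by buying several distinctly-labelled copies of each optimal $H$-set, deliberately keeping $N_{i'}$ large -- your parenthetical dichotomy (``if $A$ buys $\gg|\opt(H)|\log N$ sets we are done'') kills only extravagant pre-buying, not constant-factor hedging, so the frontier lemma must be proved against such strategies and against an arbitrary $H$; and (iii) even granting the frontier claim, ``a constant fraction of records force $\Omega(|\opt(H)|)$ purchases'' yields $c\cdot|\opt(H)|\log N$ for some unspecified small $c$, whereas the statement (and its use in \cref{lem:batched}) requires the explicit constant $\tfrac12$. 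So as written there is a genuine gap at the technical heart of the argument.

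For comparison, the paper's proof avoids this difficulty by a different route: a simulation/matching argument shows that any deterministic lazy $A$ can be converted, at no greater cost on every $(\pi,\sigma)$, into a canonical algorithm $A^*$ that buys only product sets whose $H$-component lies in $\opt(H)$ and keeps at most one live copy per member of $\opt(H)$. The cost of $A^*$ then decomposes independently over the $k=|\opt(H)|$ optimal sets, and for each one the expected number of copies bought obeys the exact recurrence $C_N=1+\frac1N\sum_{n=0}^{N-1}(H_N-H_n)\,C_n$, which induction solves to $C_N\ge H_N/2$, delivering the constant $\tfrac12$ with no need to control how much of a previously bought cover survives to a record. If you want to complete your approach, you must actually prove the frontier/record lemma against adaptive hedging algorithms; otherwise the cleaner path is a reduction of the paper's type that confines the algorithm to copies of $\opt(H)$ before doing any probabilistic accounting.
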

	
	We also require the following theorem of \cite{raz1997sub}:
	\begin{lemma} \label{lem:sattosetcover}
		There exists a polynomial-time reduction from \SAT to \setcov that, given a formula $\psi$, produces a \setcov instance $H^\psi$ with $N'$ elements and $M'$ sets for which
		\begin{itemize}
			\item $M' = N'^\alpha$ for some constant $\alpha$,
			\item if $\psi \in \SAT$ then $\opt(H^\psi) = K(N')$, 
			\item if $\psi \not \in \SAT$ then $\opt(H^\psi) \geq c \cdot \log N' \cdot  K(N')$, 
		\end{itemize}
		for some polynomial-time-computable $K$ and constant $c\in(0,1)$. 
	\end{lemma}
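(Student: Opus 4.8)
The plan for \cref{lem:sattosetcover} is to instantiate the classical partition-system reduction of Lund--Yannakakis and Feige \cite{feige1998threshold} from probabilistically checkable proofs to \setcov, but seeded with the sub-constant-soundness PCP of \cite{raz1997sub} rather than with an $\omega(1)$-fold parallel-repeated two-prover protocol. The reason this matters is exactly the point of the lemma: parallel repetition blows the instance up to quasipolynomial size (which is why \cite{feige1998threshold} obtains hardness only under $\NP\not\subseteq\mathsf{DTIME}(n^{\poly(\log n)})$), whereas the Raz--Safra machinery already yields, in deterministic polynomial time, a \textsc{LabelCover}-style multi-prover verifier for $\SAT$ with perfect completeness, soundness error $\delta$ small enough to drive the gap below, polynomially many questions, and answer alphabets of size $\poly(|\psi|)$. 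Concretely: if $\psi\in\SAT$ there is a prover strategy accepted with probability $1$, and if $\psi\notin\SAT$ every strategy is accepted with probability at most $\delta$.

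Next I would recall Feige's partition systems: for a size parameter $m$ and a "parts" parameter $h$ there is a ground set $B$ with $|B|=m$, a family of partitions of $B$ each into $h$ parts, and a distinguished collection of "matching" tuples of parts each of which tiles $B$, such that any subfamily of parts that covers $B$ but contains no matching tuple has size at least $(1-\epsilon)\ln h$; such systems exist with $m=\poly(h)$ and, crucially, can be constructed in $\poly(m)$ \emph{deterministic} time via the standard derandomization of the probabilistic existence argument (see \cite{feige1998threshold}). The instance $H^\psi$ is then assembled as follows. The ground set $U_H$ is the disjoint union, over all question-tuples $\bar q$ the verifier may ask, of one partition-system block $B_{\bar q}$, whose partitions are indexed by the provers and whose parts are indexed (via fixed maps) by each prover's answers, arranged so that the matching tuples of $B_{\bar q}$ correspond exactly to the verifier-accepting answer tuples on $\bar q$. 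The sets are $\mathcal S_H=\{\,S_{p,q,a}\,\}$ ranging over provers $p$, questions $q$, and answers $a$, where $S_{p,q,a}$ contains, inside every block $B_{\bar q}$ that has $q$ in $p$'s coordinate, exactly the part indexed by $a$ in $p$'s partition. Choosing $h$ to be a suitable polynomial of $|\psi|$ (so that $\ln h=\Theta(\log N')$) and $m=\poly(h)$ forces $M'=(N')^{\alpha}$ for a constant $\alpha$, and we set $K(N')$ to be the number of (prover, question) pairs occurring in some question-tuple, which is plainly polynomial-time computable.

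For completeness, a satisfying assignment induces a consistent prover strategy $a^*_p(\cdot)$, and the family $\{\,S_{p,q,a^*_p(q)}\,\}$ contributes a matching tuple to every block $B_{\bar q}$ (the strategy's answers are accepting there), hence covers $U_H$; its size is exactly the number of (prover, question) pairs, namely $K(N')$. Conversely, every block must be covered, and since a block is touched only by $\mathcal S_H$-members $S_{p,q,\cdot}$, its cheapest cover (a matching tuple) forces at least one set per (prover, question) pair, so $\opt(H^\psi)=K(N')$ whenever $\psi\in\SAT$. For soundness I would argue the contrapositive: given a cover $\mathcal C$ with $|\mathcal C|<c\,\log N'\cdot K(N')$, averaging over the (prover, question) pairs shows that for a constant fraction of blocks $\bar q$ the restriction of $\mathcal C$ to $B_{\bar q}$ uses only $O(c\log N')$ parts per partition; since $\mathcal C$ still covers $B_{\bar q}$ and $h$ was chosen so that $(1-\epsilon)\ln h$ exceeds that count, such a block must be covered \emph{via a matching tuple}, i.e.\ some verifier-accepting answer tuple has all of its coordinates "active" in $\mathcal C$. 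Decoding the randomized strategy "prover $p$ answers question $q$ with a uniformly random active answer" then makes the verifier accept with probability at least $1/\poly(\log N')$, which exceeds $\delta$ by our choice of proof system --- contradicting soundness. Tracking the constants through this argument pins down the $c\in(0,1)$ in the statement.

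The crux is exactly this soundness step, together with the joint calibration of three quantities --- the proof-system soundness $\delta$, the partition-system slack $\epsilon$, and the parts-per-partition $h$ --- so that the completeness-to-soundness ratio is a genuine $\Theta(\log N')$ while the answer alphabets, and hence $M'=(N')^{\alpha}$, remain polynomial in $|\psi|$; this balancing act is precisely the content of \cite{feige1998threshold}, with \cite{raz1997sub} supplying the polynomial-size, small-soundness starting point that removes the quasipolynomial-time caveat. Everything else --- the two completeness bounds, the counting of $N'$ and $M'$, and checking that the whole reduction runs in deterministic polynomial time (which is why the partition systems are built by the derandomized argument rather than a randomized one) --- is routine bookkeeping.
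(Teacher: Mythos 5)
The paper never proves \cref{lem:sattosetcover}: it is invoked verbatim as a theorem of \cite{raz1997sub} (following the presentation in \cite{korman2004use}) and used as a black box inside the proof of \cref{lem:batched}. So there is no in-paper argument to compare yours against; what you wrote is a reconstruction of the underlying hardness-of-approximation machinery. At that level your plan is the standard and correct one: Feige's partition-system reduction \cite{feige1998threshold}, seeded with a polynomial-size, sub-constant-error multi-prover proof system \cite{raz1997sub} instead of parallel repetition, is exactly how one obtains a polynomial-time reduction with an $\Omega(\log N')$ gap and thereby removes the quasipolynomial-time caveat of \cite{feige1998threshold}.

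That said, the sketch garbles the partition-system bookkeeping in a way that would break the soundness calibration if carried out literally. You index the partitions of a block by the provers (so $k$ partitions, each with $h$ parts, $h$ the answer-alphabet size) and declare that a ``matching tuple'' consisting of one part per partition tiles the block; but if each of the $k$ partitions already tiles the block with $h$ parts of measure roughly $|B|/h$, then $k$ such parts cannot tile $B$ unless $k=h$. In Feige's construction the roles are reversed: the partitions of a block are indexed by (accepting) answer configurations, each partition has one part per prover, the intended cover takes all $k$ parts of a single partition, and the guarantee is that any cover of the block avoiding a complete partition has size at least $(1-\eps)\,k\ln m$, where $m$ is the \emph{block size}. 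Consequently the $\Theta(\log N')$ gap comes from $\ln(\text{block size})$, not from $\ln h$, so your calibration ``$\ln h=\Theta(\log N')$'' is the wrong knob (and your stated threshold ``$(1-\eps)\ln h$'' is not a property a partition system can have in your orientation). Separately, the exact equality $\opt(H^\psi)=K(N')$ in the YES case does not follow from ``each block forces at least one set per (prover, question) pair'': inside a block every element lies in parts belonging to all prover coordinates, so a cover is not forced to touch every coordinate, and the matching lower bound (or a restatement with $\opt\le K(N')$, which is all that the downstream argument needs) requires its own justification. None of this changes the verdict that your route is the standard one from the literature, but as written the parameter balancing at the heart of the soundness step would not go through.
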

	
	We are ready to show the lower bound.
	
	\begin{proof}[Proof of \cref{lem:batched}]
		
	Suppose that there is some polynomial-time randomized algorithm $A$ with competitive ratio at most $c/4 \cdot \log b \log s$ for batched \roosc (recall that $c$ is the constant given in \cref{lem:sattosetcover} above). Then, we argue, the following is a $\BPP$ algorithm deciding \SAT.
	
	Given a formula $\psi$, we first reduce it to an instance of batched \roosc: first feed the formula through the reduction in \cref{lem:sattosetcover} to get the instance $H^\psi$,  then create the batched \roosc instance defined by $\Delta \times H^{\psi}$. Finally, run $A$ on $\Delta \times H^{\psi}$ a number $W = \poly(n)$ times, and let $\overline C$ be the empirical average of $c(A(\Delta \times H^{\psi}))$ over these runs. If $\overline C \geq 3c/8 \cdot \log b \log s \cdot K(N')$, output $\psi \in \SAT$, else output $\psi \not \in \SAT$. It suffices to argue that this procedure answers correctly with high probability.
	
	\begin{claim}
		If $\psi \in \SAT$, then $C(A(\Delta, H^{\psi})) \leq \frac{c}{4} \cdot \log b \log s \cdot K(N')$.
	\end{claim}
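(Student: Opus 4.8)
The plan is to bound the optimal cost of the product instance $\Delta \times H^{\psi}$ from above and then appeal directly to the assumed competitive ratio of $A$. First I would observe that, whatever the label permutation $\pi$ is, the upper triangular instance $\Delta$ always contains a set equal to its entire ground set: in the notation of \cref{sec:bn_lb} (now with $N$ labels) this is the set $S_{\pi(N)} = \{1,\dots,N\} = U_\Delta$; call it $S^{\mathrm{full}}$.

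Next, since $\psi \in \SAT$, \cref{lem:sattosetcover} gives $\opt(H^{\psi}) = K(N')$, so there is a subcollection $\mathcal{O} \subseteq \mathcal{S}_{H^{\psi}}$ of size $K(N')$ whose union is $U_{H^{\psi}}$. I would then note that the subcollection $\{\, S^{\mathrm{full}} \times S_j : S_j \in \mathcal{O}\,\}$ of $\mathcal{S}_{\Delta \times H^{\psi}}$ has union $\bigcup_{S_j \in \mathcal{O}} (U_\Delta \times S_j) = U_\Delta \times U_{H^{\psi}} = U_{\Delta \times H^{\psi}}$ and cardinality $K(N')$. Hence $c(\opt(\Delta \times H^{\psi})) \le K(N')$, and crucially this upper bound holds for \emph{every} realization of $\pi$, not merely in expectation over $\pi$.

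Finally, since by hypothesis $A$ is $\tfrac{c}{4}\log b\log s$-competitive for batched \roosc, on the fixed instance $\Delta \times H^{\psi}$ (for each fixed $\pi$) its expected cost over the random batch order $\sigma$ and its internal randomness $\mathcal{R}$ is at most $\tfrac{c}{4}\log b\log s$ times $c(\opt(\Delta \times H^{\psi})) \le K(N')$. Averaging this inequality over the choice of $\pi$ yields
\[
C(A(\Delta \times H^{\psi})) = \expectover{\sigma,\pi,\mathcal{R}}{c(A(\Delta \times H^{\psi}))} \;\le\; \frac{c}{4}\cdot \log b\,\log s \cdot K(N'),
\]
which is exactly the claim (recall $b = N$ and $s = N'$ for this instance). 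The only point needing any care is that the bound on $c(\opt(\Delta \times H^{\psi}))$ is uniform in $\pi$, which is what makes the averaging step legitimate; this is immediate since $S^{\mathrm{full}}$ exists for every $\pi$. I do not expect a genuine obstacle here: the statement is essentially the definition of competitive ratio combined with the easy observation that satisfiability of $\psi$ makes $\opt(\Delta \times H^{\psi})$ small.
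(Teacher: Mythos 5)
Your proposal is correct and follows essentially the same route as the paper: invoke the assumed $\tfrac{c}{4}\log b\log s$ competitive ratio of $A$ together with the fact that $\opt(\Delta\times H^{\psi})$ has cost $K(N')$ when $\psi\in\SAT$. The only difference is that you explicitly justify $c(\opt(\Delta\times H^{\psi}))\le K(N')$ (via the full set of $\Delta$ crossed with an optimal cover of $H^{\psi}$, uniformly in $\pi$), a step the paper's one-line proof leaves implicit.
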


	\begin{proof}
		By assumption, $A$ has the guarantee
		\begin{align*}
			C(A(\Delta \times H^\psi)) &\leq \frac{c}{4} \log b \log s \cdot |\opt(\Delta \times H^\psi)| \\
			& = \frac{c}{4} \cdot \log b \log s \cdot K(N'). \qedhere
		\end{align*}
	\end{proof}

	\begin{claim}
	If $\psi \not \in \SAT$, then $C(A(\Delta, H^{\psi})) \geq \frac{c}{2} \cdot \log b \log s \cdot K(N')$.
	\end{claim}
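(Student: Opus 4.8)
The plan is to obtain this bound by composing the information-theoretic estimate of \cref{lem:productLB} with the soundness guarantee of the \SAT-to-\setcov reduction in \cref{lem:sattosetcover}. First I would invoke \cref{lem:productLB} with inner instance $H = H^\psi$, which gives
\[
C(A(\Delta \times H^\psi)) \;\geq\; \tfrac12 \cdot |\opt(H^\psi)| \cdot \log N .
\]
This step uses nothing about $\psi$: it holds for every randomized algorithm for batched \roosc and every choice of $H^\psi$.

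Next I would apply the second bullet of the soundness case in \cref{lem:sattosetcover}: since $\psi \notin \SAT$, we have $|\opt(H^\psi)| \geq c \cdot \log N' \cdot K(N')$. Substituting into the previous display yields
\[
C(A(\Delta \times H^\psi)) \;\geq\; \tfrac{c}{2} \cdot \log N \cdot \log N' \cdot K(N') .
\]
Finally I would recall the parameters of the batched \roosc instance $\Delta \times H^\psi$, namely $b = N$ (one batch $B_i$ for each element $i \in U_\Delta$) and $s = N'$ (the number of elements of $H^\psi$), so the right-hand side is exactly $\tfrac{c}{2} \cdot \log b \cdot \log s \cdot K(N')$, as claimed.

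I do not expect a genuine obstacle within this claim: all the work is already packaged into \cref{lem:productLB} (whose appendix proof must argue that the random permutation $\pi$ of the upper-triangular labels, together with the random batch order $\sigma$, forces the algorithm to re-pay an $\opt(H^\psi)$-cost cover across each of the $\Theta(\log N)$ "leading" batches in expectation) and into the PCP-based gap of \cref{lem:sattosetcover}. The only points to handle carefully are the bookkeeping identification $b = N$, $s = N'$, and the observation that \cref{lem:productLB} applies to \emph{all} randomized algorithms, so no complexity assumption on $A$ is needed in this direction; the $\NP \subseteq \BPP$ hypothesis only enters via the completeness side together with the assumed competitive ratio of $A$. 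Combined with the preceding claim for the $\psi \in \SAT$ case, this produces the claimed multiplicative gap of $\tfrac{3c}{8}\log b \log s \cdot K(N')$ versus $\tfrac{c}{4}\log b\log s\cdot K(N')$ in expected cost, which a $\poly(n)$-fold repetition distinguishes with high probability.
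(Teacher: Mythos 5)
Your proof is correct and is essentially the paper's own argument: invoke \cref{lem:productLB} to get $C(A(\Delta \times H^\psi)) \geq \tfrac12 |\opt(H^\psi)| \log N$, plug in the soundness bound $|\opt(H^\psi)| \geq c \log N' \cdot K(N')$ from \cref{lem:sattosetcover}, and identify $b=N$, $s=N'$. Nothing further is needed.
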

	
	\begin{proof}
		By \cref{lem:productLB}, the performance of $A$ is lower bounded as
		\begin{align*}
			C(A(\Delta \times H^\psi)) &\geq \frac{1}{2} H_N \cdot |\opt(H^\psi)| \\
			&\geq \frac{1}{2} \log N \cdot c \cdot \log (N') \cdot K(N') \\
			&\geq \frac{c}{2} \cdot  \log b \log s  \cdot K(N'). \qedhere
		\end{align*}
	\end{proof}
	
	To complete the proof, note that $c(A(\Delta \times H^\psi)) \in [0, NM']$. Setting $W = \poly(n)$ sufficiently high, by a Hoeffding bound, the estimate $\overline C$ concentrates to within $\frac{c}{8} \cdot \log b \log s \cdot K(N')$ of $C(A(\Delta, H^{\psi}))$ with high probability, in which case the procedure above answers correctly.
\end{proof}

We now use \cref{lem:batched} to derive lower bounds for \rosubc.

\begin{corollary}
	The competitive ratio of any polynomial-time randomized algorithm against \rosubc is $\Omega(\log n \cdot \log (f(N)/\fmin))$ unless $\NP \subseteq \BPP$.
\end{corollary}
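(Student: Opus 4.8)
The plan is to realize batched \roosc as a special case of \rosubc and then transfer the hardness from \cref{lem:batched}. First I would recall the \rosubc model of \cite{gupta2020online}: one irrevocably and monotonically selects elements from a ground set of size $N$ (each with a cost), while a sequence of $t$ monotone submodular ``coverage'' requests $f_1,\dots,f_t$ is revealed in uniformly random order; upon the arrival of $f_i$ the currently selected set $T$ must be extended until $f_i(T)$ reaches its maximum possible value, and the parameters $f(N)$ and $\fmin$ denote respectively the largest value attained by any $f_i$ and the smallest positive marginal $f_i(T\cup\{e\})-f_i(T)$.

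The reduction is the following. Given a batched \roosc instance with set system $(U,\mathcal S)$ and batches $B_1,\dots,B_b$ each of size $s$, let the \rosubc ground set be $\mathcal S$ itself, with the same costs, set $t=b$, and for each batch define $f_i(T) := \bigl| B_i \cap \bigcup_{S\in T} S\bigr|$ for $T\subseteq\mathcal S$. Each $f_i$ is a monotone submodular coverage function, covering batch $B_i$ is exactly the requirement $f_i(T)=|B_i|=s$, and the random order of batches is precisely the random order of the $f_i$. A feasible solution in one instance is a feasible solution of the same cost in the other, the two optima coincide, and the whole reduction is computable in polynomial time and respects the online random-order interface. Moreover the \rosubc parameters here are $N=|\mathcal S|$, $f(N)=s$, and $\fmin=1$ (adding a set covers at least one new element of $B_i$ whenever it covers anything new), so $f(N)/\fmin=s$. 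Hence any polynomial-time randomized algorithm for \rosubc with competitive ratio $g\bigl(N,\, f(N)/\fmin\bigr)$ induces a polynomial-time randomized algorithm for batched \roosc with competitive ratio $g(|\mathcal S|,\, s)$.

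Then I would take the product instances $\Delta\times H^{\psi}$ built in the proof of \cref{lem:batched}, where the upper-triangular factor has size $N_\Delta=b$ and the \setcov gadget $H^\psi$ has $N'=s$ elements and $M'=(N')^\alpha$ sets, so the resulting set system has $|\mathcal S|=N_\Delta\cdot M'=b\,s^{\alpha}$ sets. Calibrating the parameters so that $f(N)/\fmin=s$ and $N$ are polynomially related --- e.g. keeping $s\le b^{1/(2\alpha)}$, which forces $\log|\mathcal S|=\Theta(\log b)$ and $\log\bigl(f(N)/\fmin\bigr)=\log s$ --- an $o\bigl(\log N\cdot\log(f(N)/\fmin)\bigr)$-competitive polynomial-time algorithm for \rosubc would, via the reduction, be $o(\log b\,\log s)$-competitive in polynomial time for batched \roosc, contradicting \cref{lem:batched} unless $\NP\subseteq\BPP$. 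This yields the claimed $\Omega\bigl(\log n\cdot\log(f(N)/\fmin)\bigr)$ lower bound and shows that the extra logarithmic factor in the guarantee of \cite{gupta2020online} cannot be removed in general.

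The main obstacle is not conceptual but a matter of matching conventions and sizes: one has to check that the coverage-function encoding lies inside the exact \rosubc model of \cite{gupta2020online} (monotone submodularity, online random-order arrival of requests, and the precise roles played by $N$, $t$, $f(N)$, and $\fmin$), and that the Raz--Safra-based hard instance of \cref{lem:batched} can be tuned so that $\log N$ and $\log(f(N)/\fmin)$ are simultaneously $\Theta(\log b)$ and $\Theta(\log s)$, so that the product $\log b\log s$ genuinely translates into $\log n\cdot\log(f(N)/\fmin)$.
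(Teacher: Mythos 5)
Your proposal is correct and matches the paper's argument: batched \roosc is embedded into \rosubc by taking each $f_i$ to be the coverage function of batch $B_i$, so that $f(N)/\fmin = s$, and the bound then follows from \cref{lem:batched}. The only difference is cosmetic parameter calibration --- the paper simply sets $b = s = \sqrt{n}$ rather than your $s \le b^{1/(2\alpha)}$ constraint, and either choice yields the claimed $\Omega(\log n \cdot \log (f(N)/\fmin))$ family of hard instances.
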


\begin{proof}
	Batched \roosc is a special case of online \subcov in which $f_i$ is the coverage function of block $i$. In this case the parameter $f(N)/\fmin = s$, so the statement follows by applying \cref{lem:batched} with $b = s = \sqrt{n}$.
\end{proof}

\section{Conclusion}

In this work we introduce \nameref{alg:gencost} as a method for solving \roosc and \roocip with competitive ratio nearly matching the best possible offline bounds. On the other hand we prove nearly tight \emph{information theoretic} lower bounds in the RO setting. We also show lower bounds and separations for several generalizations of \roosc. We leave as an interesting open question whether it is possible to extend the technique to covering IPs \emph{with box constraints}.

We hope our method finds uses elsewhere in online algorithms for RO settings. In \cref{sec:disussion} we discuss suggestive connections between our technique and other methods in online algorithms, namely projection based algorithms and stochastic gradient descent.

\bigskip

\textbf{Acknowledgements:} Roie Levin would like to thank Vidur Joshi for asking what is known about online set cover in random order while on the subway in NYC, as well as David Wajc and Ainesh Bakshi for helpful discussions.
	
	{\footnotesize
		\bibliography{refs}
		\bibliographystyle{alpha}
	}
	
	\appendix

\section{Deferred Proofs}

\label{sec:appendix}

Our potential function arguments make use of the following facts.

\lpsupport*

\begin{proof}
	Suppose otherwise and let $x^*$ be an optimal LP solution. 
	Let $j'$ be a coordinate for which $c_{j'} > \beta$ and $x^*_{j'} > 0$.	
	Then define the vector $x'$ by
	\[x'_j = \begin{cases}
		0 &  \text{if $j = j'$} \\
		\frac{x_j^*}{1-x_j'^*} & \text{otherwise},
	\end{cases}\]
	First of all, note that $x^*_{j'} < 1$ since $x^*_{j'} c_{j'} \leq \sum_j c_j x^*_j \leq \beta < c_{j'}$, and so $x' \geq 0$. To see that $x'$ is feasible for each constraint $\langle a_i, x\rangle \geq 1$,
	\begin{align*}
		\langle a_i, x^*\rangle = (1-x^*_{j'}) \langle a_i, x'\rangle + a_{ij'} x^*_{j'} \geq 1 \quad \text{so} \quad
		\langle a_i, x'\rangle \geq \frac{1}{1 - x^*_{j'}} (1 - a_{ij'} x^*_{j'}) \geq 1,
	\end{align*}
	since $a_{ij'} \in [0,1]$. Finally, observe that $x'$ costs strictly less than $x^*$, since 
	\begin{align*}
		\langle c, x' \rangle  = \frac{\langle c, x^* \rangle - x^*_{j'} c_{j'} }{1-x^*_{j'}} < \frac{\langle c, x^* \rangle - x^*_{j'} \langle c, x^* \rangle}{1-x^*_{j'}}  = \langle c, x^* \rangle.
	\end{align*}
	
	This contradicts the optimality of $x^*$, and so the claim holds.
\end{proof}

\crslem*

\begin{proof}
	We first consider the case when $\expect*{W} \geq
	\nicefrac{\Delta}{3}$. 
	By the Paley-Zygmund 
	inequality, noting that $\expect*{W} = \sum_j b_j p_j$ and so $\sigma^2 = \sum_j p_j (1-p_j) b_j^2 \leq \expect*{W}$, we have
	\[
		\prob{W \geq \nicefrac{\Delta}{6}} \geq \prob{W \geq \expect*{W}/2} \geq \frac{1}{4} \cdot \frac{\expect*{W}^2}{\expect*{W}^2 + \sigma^2} \geq \frac{1}{4} \cdot \frac{\expect*{W}}{1 + \expect*{W}} \geq \frac{1}{4} \cdot \frac{\nicefrac{\Delta}{3}}{1 + \nicefrac{\Delta}{3}}  \geq \nicefrac{1}{28},
	\]
	since $\Delta \geq \gamma \geq \nicefrac{1}{2}$ by assumption.
	This implies the claim because in this case
	\[
		\expect*{\min\left(W, \Delta \right)} \geq \frac{\Delta}{6} \cdot \prob{W \geq \nicefrac{\Delta}{6}} \geq \frac{\Delta}{168} \geq \frac{1}{168} \cdot \min(\expect*{W}, \Delta).	
	\]
	
	Otherwise $\expect*{W} < \nicefrac{\Delta}{3}$. 
	Let $\mathcal{R}$ denote the random subset of $j$ for which
	$\Ber(p_j) = 1$ in a given realization of $W$, and let
	$\mathcal{K}$ be the random subset of the $j$ which is output
	by the $(\nicefrac{1}{3},\nicefrac{1}{3})$-contention resolution scheme for knapsack constraints when given $\mathcal{R}$ as input, as defined in  \cite[Lemma 4.15]{chekuri2014submodular}. 
	The set $\mathcal{K}$ has the properties that (1) (over the randomness in $\mathcal{R}$) every $j$ appears in $\mathcal{K}$ with probability at least $p_j/3$, (2) $\sum_{j \in \mathcal{K}} a_{ij} < \Delta$, and (3) $\mathcal{K} \subseteq \mathcal{R}$.
	Hence in this case
	\[
		\expect*{\min\left(W, \Delta \right)} \geq \expectover{\mathcal{K}}*{\sum_{j \in \mathcal{K}} a_{ij}} \geq \frac{ \expect*{W} }{3}= \frac{1}{3} \cdot \min(\expect*{W}, \Delta). \qedhere
	\]
	
	Therefore the claim holds with $\alpha = \nicefrac{1}{168}$. 
\end{proof}

\productLB*

\begin{proof}
	By Yao's principle, it suffices to bound the expected performance of any deterministic algorithm $A$ over the randomness of the instance.
	Here the performance is in expectation over the random choice of instance as well as the random order of batch arrival. 
	
	The randomness in the input distribution $\Delta \times H$ is over the set labels, given by the random permutation $\pi \in S_N$. For convenience, we instead equivalently imagine the set labels in $\Delta$ are fixed, and that $\pi$ is a random permutation over batch labels, so that the label of batch $i$ is $l_i = \pi(i)$. This means that for a fixed realization of $\pi$, for any two sets $S_{lj}, S_{l'j'} \in \mathcal{S}_{\Delta \times H}$, even before any batches have arrived $A$ can determine whether $l = l'$, but because $\pi \sim S_N$ uniformly at random, $A$ cannot determine which batches $i$ the sets $S_{lj}$ and $S_{l'j'}$ intersect.		
	Without loss of generality, we assume that $A$ is lazy, in that for each batch it only buys sets which provide marginal coverage. 
	
	The (offline) instance $H$ has some optimal cover $\{S^*_1, \ldots, S^*_k\} := \opt(H)$.
	Let $\mathcal{S}^* := \{S_{lj} = S_l \times S_j : S_l \in \mathcal{S}_{\Delta}, S_j \in \opt(H) \}$ denote the sets in $\mathcal{S}_{\Delta \times H}$ which project down to sets in $\opt(H)$. 
	We argue that it suffices to consider a deterministic lazy algorithm $A^*$ which only buy sets in $\mathcal{S}^*$. 
	Let $c(A(\pi, \sigma))$ denote the cost of $A$ on batch label permutation $\pi$, when the batch arrival order is $\sigma \in S_N$. 
	For every feasible $A$ we argue that there is some $A^*$ which buys only sets in $\mathcal{S}^*$, is feasible for every batch $\sigma(i)$ upon arrival, and buys at most as many sets as $A$ for any $\pi \in \sigma_N$, so that
	\begin{equation} \label{eqn:buyingfromoptisbetter}
		c(A^*(\pi, \sigma)) \leq c(A^*(\pi, \sigma)).
	\end{equation}
	This will in turn imply that
	\begin{equation}
		\expectover{\substack{\sigma, \pi \sim S_N}}{c(A^*(\pi,\sigma))} \leq \expectover{\substack{\sigma, \pi \sim S_N}}{c(A(\pi, \sigma))},
	\end{equation}
	and so it will suffice to lower bound the performance of any $A^*$ in this restricted class. 
	
	Given $A$, we will construct an $A^*$ which satisfies \eqref{eqn:buyingfromoptisbetter}.
	But first, some notation.
	For each batch arrival order $\sigma(1), \ldots, \sigma(N)$, suppose that in round $i$ upon the arrival of $B_{\sigma(i)}$, $A$ buys the sets $\mathcal{C}_{i} \subseteq \mathcal{S}_{\Delta \times H}$.
	These $\mathcal{C}_1, \ldots, \mathcal{C}_i$ together cover each $B_{\sigma(i)}$ upon its arrival, and $c(A, \sigma) = \sum_i |\mathcal{C}_i|$.
	Let $\overline{\mathcal{C}_i} := \{S \in \bigcup_{i' \leq i} \mathcal{C}_{i'}: S \cap  B_{\sigma(i)} \neq \emptyset \}$ be the collection of sets which $A$ uses to cover $B_{\sigma(i)}$.
	We say a set $S_{lj} \in \mathcal{S}_{\Delta \times H}$ is \textit{live} in round $i$ if it intersects with all batches $B_{\sigma(1)}, \ldots, B_{\sigma(i)}$ which have arrived so far.
	All sets are live to start, and once a set is not-live it will never be live again; note that the liveness of $S_{lj}$ in round $i$ is a property of its label $l$ and the batch order $\sigma$. 
	In each round $i$ we will \textit{match} each set $S^* \in \mathcal{C}_i^*$ which $A^*$ buys with some set $S \in \bigcup_{i' \leq i} \mathcal{C}_{i'}$.
	We will maintain that at most one $S^*$ is matched to each $S$, and that a matched pair of sets is never unmatched.
	
	Let $A^*$ operate by running $A$ in the background. 
	For each round $i$ with incoming batch $B_{\sigma(i)}$, if $A^*$ already covers $B_{\sigma(i)}$ upon arrival then $A^*$ does nothing.
	Otherwise for each $j^* \in \opt(H)$ for which $A^*$ has not already bought a copy which is live in round $i$, $A^*$ identifies an unmatched set $S_{lj} \in \overline{\mathcal{C}}_i$ which $A$ is using to cover $B_{\sigma(i)}$, buys the set $S_{lj^*}$ (with the same label), and matches $S_{lj^*}$ with $S_{lj}$.
	
	It is immediate that $c(A^*, \sigma) \leq c(A, \sigma)$, since $A^*$ matches every set which it buys to a set which $A$ buys. 
	Therefore we need only show that $A$ is feasible in each round $i$; that is, it never runs out of unmatched sets.
	
	To see this, first note that projecting $\overline{\mathcal{C}}_i$ onto $H$ gives a feasible cover, and so $|\overline{\mathcal{C}}_i| \geq k$. 
	In particular, this means that $A^*$ succeeds in the first round $i=1$. 
	Next observe that in any round $i>1$, algorithm $A^*$ has bought at most one $S_{lj^*}$ which is live for each given $S_{j^*} \in \opt(H)$. 
	This is because it only buys $S_{lj^*}$ for $j^*$ for which it does not currently have a live copy, and no sets go from not-live to live.
	Also note that any set which shares a label with some $S_{lj} \in \overline{\mathcal{C}}_i$ is live in round $i$. 
	Therefore any $S_{lj^*}$ matched to $S_{lj} \in \overline{\mathcal{C}}_i$ at the beginning of round $i$ are live, since $A^*$ ensures that matched sets share labels.
	Let $\Gamma_i$ be the collection of these $S_{lj^*}$, and let $t:= |\Gamma_i|$.
	Since $A^*$ maintains at most one live set for each $S_{j^*} \in \opt(H)$ at once, the $j^*$ for $S_{lj^*} \in \Gamma_i$ are distinct. 
	Therefore to cover $B_{\sigma(i)}$, $A^*$ must buy sets $S_{lj^*}$ for the $k - t$ remaining $S_{j^*} \in \opt(H)$ not represented in $\Gamma_i$ with live labels $l$.
	Fortunately there are $|\overline{\mathcal{C}}_i| - t \geq k - t$ unmatched sets with live labels which $A$ has bought for $A^*$ to choose from, and so $A^*$ never gets stuck.
	
	Therefore $A^*$ is feasible for every round $i$, and so \eqref{eqn:buyingfromoptisbetter} holds.
	
	We now lower-bound the performance of $A^*$.
	Since $A^*$ is lazy and $\opt(H)$ is a minimal cover for $H$, for each arriving batch $B_{\sigma(i)}$ the algorithm $A^*$ buys exactly one $S_{lj^*}$ corresponding to each $S_{j^*} \in \opt(H)$ for which it does not already have a live copy.
	Therefore we can analyze the expected number of copies of $S_{j^*}$ which $A^*$ buys over the randomness of the batch arrival order for each $S_{j^*} \in \opt(H)$ independently.
	
	Fix some $S^* = S_{j^*} \in \opt(H)$, and let $C_N$ denote the expected number of copies of $S^*$ which $A^*$ buys, where the expectation is taken over the batch arrival order $\sigma \sim S_N$. 
	In the first round $i=1$, $A^*$ buys some $S_{lj^*}$ with label $l$, and uses this copy until the first batch arrives for which $l$ is no longer live; let $P(l)$ denote the number of batches $B_{\sigma(1)} \ldots, B_{\sigma(P(l))}$ for which $l$ remains live.
	Once $l$ is no longer live, $A^*$ must choose another copy $S_{l'j^*}$ for one of the remaining live $l'$. 
	This is a sub-instance of the problem it faced at $i=1$. 
	
	We will prove that $C_N \geq H_N / 2$ by induction on $N$ (where here $H_N$ denotes the $N^{th}$ harmonic number). 
	By definition we have that $C_1 = 1$, and so this holds for $N=1$. 
	Now assume the claim $C_n \geq H_n / 2$ holds for all integers $n \in [0, N-1]$.
	Using the observations above, we can express $C_N$ by the following recurrence:
	\begin{align*}
		C_N &= 1 + \sum_{j = 1}^N \prob{P(l) = j} \cdot C_{N - j} \\
		&= 1 +\frac{1}{N} \sum_{n = 0}^{N-1}  (H_N - H_{n}) \cdot C_{n},
		\intertext{where we take $H_0 = 0$. By computing $C_{N-1}$ and substituting, we then obtain}
		C_N &= \frac{1}{N} +\frac{N-1}{N} C_{N-1} + \frac{1}{N^2} \sum_{n = 0}^{N-1}  C_n \\
		&\geq \frac{1}{N} +\frac{N-1}{2N}\left(H_N - \frac{1}{N} \right) + \frac{1}{2N} (H_N - 1)\\
		&\geq \frac{1}{2} H_N,
	\end{align*}		
	where here we used that $\sum_{n=0}^{N-1} H_n = N (H_N - 1)$.
	Since $H_N \geq \log N$, we therefore have
	\[
	\expectover{\sigma \sim S_N}*{\left\vert\left\{S_{lj^*} \in \bigcup_i \mathcal{C}^*_i\right\}\right\rvert} = C_N \geq  \frac{1}{2} \log N.
	\]
	Since this analysis holds for each of the $k$ such sets $S_{j^*} \in \opt(H)$, the claim follows from linearity of expectation.
\end{proof}

\section{Pseudocode}

\label{sec:appendix2}

In this section we give pseudocode for secondary algorithms in this paper.

\subsection{The Exponential-Time Set Cover Algorithm}

First we give the algorithm from \cref{sec:exptime}.
\begin{algorithm}[H]
	\caption{\textsc{SimpleLearnOrCover}}
	\label{alg:expunit}
	\begin{algorithmic}[1]
		\State Initialize $\mathfrak{T}^0 \leftarrow \binom{\mathcal{S}}{k}$ and $\mathcal{C}^0 \leftarrow \emptyset$.
		\For{$\thist=1,2\ldots, n$}
		\State $v^{\thist} \leftarrow$ $\thist^{th}$ element in the random order.
		\If {$v^{\thist}$ uncovered}
		\State Choose $\mathcal{T} \sim \mathfrak{T}^{\lastt}$ uniformly at random, and choose $T \sim \mathcal{T}$ uniformly at random.
		\State Add $\mathcal{C}^{\thist} \leftarrow \mathcal{C}^\lastt \cup \{S, T\}$ for any choice of $S$ containing $v^{\thist}$.
		\EndIf
		\State Update $\mathfrak{T}^{\thist} \leftarrow \{ \mathcal{T} \in \mathfrak{T}^{\lastt} : v^{\thist} \in \bigcup \mathcal{T} \}$.
		\EndFor 
	\end{algorithmic}
\end{algorithm}

\subsection{The Set Cover Algorithm for Unit Costs}

Next, we give a slightly simplified version of algorithm from
\cref{sec:gen_cost} in the special case of unit costs. We give it here
to illustrate the essential simplicity of our algorithm in the
unit-weight case. (Much of the complication comes from managing the
non-uniform set costs.)

	\begin{algorithm}[H]
	\caption{\textsc{UnitCostLearnOrCover}}
	\label{alg:unitcost}
	\begin{algorithmic}[1]
		\State Initialize $x_S^{0} \leftarrow \frac{1}{m}$, and $\mathcal{C}^0 \leftarrow \emptyset$.
		\For{$\thist=1,2\ldots, n$}
		\State $v^{\thist} \leftarrow$ $\thist^{th}$ element in the
			random order.
		\If {$v^{\thist}$ not already covered}
		\State Sample one set $R^{\thist}$ from distribution $x$, update $\mathcal{C}^{\thist} \leftarrow \mathcal{C}^{\lastt} \cup \{R^{\thist}\}$.
		\If {$\sum_{S \ni v^{\thist}} x^{\lastt}_S < 1$}
		\State For every set $S \ni v^{\thist}$, update $x^{\thist}_S \leftarrow e \cdot x^{\lastt}_S$. 
		\State Normalize $x^{\thist} \leftarrow x^{\thist} / \|x^{\thist}\|_1$.
		\Else
		\State $x^{\thist} \leftarrow x^{\lastt}$.
		\EndIf
		\State Let $S_{v^{\thist}}$ be an arbitrary set containing $v^{\thist}$. Add $\mathcal{C}^{\thist} \leftarrow \mathcal{C}^{\thist} \cup \{S_{v^{\thist}}\}$.
		\EndIf
		\EndFor 
	\end{algorithmic}
\end{algorithm}

\section{Discussion of Claim in \cite{grandoni2008set}}

\label{sec:ggl}

\cite{grandoni2008set} write in passing that \setcov in ``the random permutation model (and hence any model where elements are drawn from an unknown distribution) [is] as hard as the worst case''. 

Our algorithm demonstrates that the instance of \cite{korman2004use} witnessing the $\Omega(\log^2 n)$ lower bound in adversarial order can be easily circumvented in random order. Korman's instance is conceptually similar to our hard instance for the batched case from \cref{sec:lb_ext} but with batches shown in order, deterministically.

A natural strategy to adapt the instance to the random order model is to duplicate the elements in the $i^{th}$ batch $C^{b-i}$ times for a constant $C>1$ (where $b$ is the number of batches). This ensures that elements of early batches arrive early with good probability. Indeed, this is the strategy used for the online Steiner Tree problem in random order (see e.g \cite{gupta2020random}). However, for the competitive ratio lower bound of \cite{korman2004use}, which is $\log b \log s$, to be $\Omega(\log^2 n)$, the number of batches $b$ and the number of distinct elements per batch $s$ must be polynomials in the number of elements $n$. In this case the total number of elements after duplication is $n' = O(C^{\poly(n)})$, which degrades the $\Omega(\log b \log s)$ bound to doubly logarithmic in $n'$.

\section{Connections to Other Algorithms}

\label{sec:disussion}

In this section we discuss connections between \nameref{alg:gencost} and other algorithms. Whether these perspectives are useful or merely spurious remains to be seen, but regardless, they provide interesting context.

\textbf{Projection interpretation of \cite{buchbinder2019k}.} In \cite{buchbinder2019k} it is shown that the original \osetcov algorithm of \cite{alon2003online,buchbinder2009online} is equivalent to the following. Maintain a fractional solution $x$. On the arrival of every constraint $\langle a_i, x \rangle \geq 1$, update $x$ to be the solution to the convex program
\begin{align}
	\begin{array}{lll}
		\argmin_{y} & \KL{y}{x} \label{eq:bn_proj}\\
		\text{s.t.} &A^{\leq i} y \geq 1  \\
		&y \geq 0,
	\end{array}
\end{align}
where $A^{\leq i}$ is the matrix of constraints up until time $i$. Finally, perform independent randomized rounding online. The KKT and complementary slackness conditions ensure that the successive fractional solutions obtained this way are monotonically increasing, and in fact match the exponential update rule of \cite{buchbinder2009online}.

Interestingly, we may view \nameref{alg:gencost} as working with the same convex program but \textit{with an additional cost constraint} (recall that $\beta$ is our guess for $\copt$).
\begin{align}
	\begin{array}{lll}
		\argmin_{y} & \KL{y}{x} \\
		\text{s.t.} &A^{\leq i} y \geq 1  \\
		&\langle c, y \rangle  \leq \beta \\
		&y \geq 0,
	\end{array}
\end{align}
The extra packing constraint already voids the monotonicity guarantee of \eqref{eq:bn_proj} which we argue is a barrier for \cite{buchbinder2009online} in \cref{sec:ro_lb,sec:bn_lb}. Furthermore, instead of computing the full projection, we fix the Lagrange multiplier of the most recent constraint $\langle a_i, x \rangle  \geq 1$ to be $\kappa_{v_i}$ (the cost of the cheapest set containing this last element), the multipliers of the other elements' constraints to $0$, and the multiplier of the cost packing constraint such that $y$ is normalized to cost precisely $\beta$. Hence we only make a partial step towards the projection, and sample from the new fractional solution regardless of whether it was fully feasible. 

\textbf{Stochastic Gradient Descent (SGD).} Perhaps one reason \roosc is easier than the adversarial order counterpart is that, in the following particular sense, random order grants access to a stochastic gradient. 

Consider the unit cost setting. Given fractional solution $x$, define the function 
\[f(x) := \sum_v \max\left(0, 1- \sum_{S \ni e} x_S\right),\]
in other words the fractional number of elements uncovered by $x$. Clearly we wish to minimize $f$, and if we assume that every update to $x$ coincides with buying a set (as is the case in our algorithm), we wish to do so in the smallest number of steps.

The gradient of $f$ evaluated at the coordinate $S$ is
\[[\nabla f]_S = - |S \cap U^{\thist}|.\]
On the other hand, conditioning on the next arriving element $v$ being uncovered, the random binary vector $\chi^v$ denoting the set membership of $v$ has the property
\[\expectover{v}*{\chi^v_S} = \frac{|S \cap U^{\thist}|}{|U^{\thist}|},\]
meaning $\chi^v$ is a scaled but otherwise unbiased estimate of $\nabla f$. Since \nameref{alg:gencost} performs updates to the fractional solution using $\chi_v$, it can be thought of as a form of stochastic gradient descent (more precisely of stochastic mirror descent with entropy mirror map since we use a multiplicative weights update scheme, see e.g. \cite{bubeck2014convex}).  

One crucial and interesting difference is that SGD computes a gradient estimate at every point to which it moves, whereas our algorithm is only allowed to query the gradient at the vertex of the hypercube corresponding to the sets bought so far. This analogy with SGD seems harder to argue in the non-unit cost setting, where the number of updates to the solution is no longer a measure of the competitive ratio.
	
\end{document}